\DeclarePairedDelimiterX{\Iintv}[1]{\llbracket}{\rrbracket}{\iintvargs{#1}}
\NewDocumentCommand{\iintvargs}{>{\SplitArgument{1}{,}}m}
{\iintvargsaux#1} %
\NewDocumentCommand{\iintvargsaux}{mm} {#1\mkern1.5mu..\mkern1.5mu#2}
\newtheorem{remark}{Remark}
\newtheorem{theorem}{Theorem}
\newtheorem{lemma}{Lemma}
\newtheorem{assumption}{Assumption}
\newtheorem{corollary}{Corollary}
\newtheorem{definition}{Definition}
\renewcommand{\subsection}{\@startsection{subsection}{2}{\z@}%
  {0.6ex plus 0.3ex minus 0.1ex}%
  {0.4ex plus 0.2ex}%
  {\normalfont\normalsize\itshape}}
\def\BibTeX{{\rm B\kern-.05em{\sc i\kern-.025em b}\kern-.08em
    T\kern-.1667em\lower.7ex\hbox{E}\kern-.125emX}}
\begin{document}
\title{Models, Methods and Waveforms for Estimation and Prediction of Sparse Time-Varying Channels}
\author{
Wissal Benzine\textit{$^{1,2}$}, Ali Bemani\textit{$^1$}, Nassar Ksairi\textit{$^1$}, and Dirk Slock\textit{$^2$} \\
\textit{$^1$}Mathematical and Algorithmic Sciences Lab, Huawei France R\&D, Paris, France \\ \textit{$^2$}Communication Systems Department, EURECOM, Sophia Antipolis, France\\
Emails:
\{Wissal.Benzine, Dirk.Slock\}@eurecom.fr, \{ali.bemani, nassar.ksairi\}@huawei.com
}
\maketitle

\begin{abstract}
This paper investigates channel estimation for linear time-varying (LTV) wireless channels under \emph{double sparsity}, i.e., sparsity in both the delay and Doppler domains. An on-grid approximation is first considered, enabling rigorous hierarchical-sparsity modeling and  compressed sensing-based channel estimation. Guaranteed recovery conditions are provided for affine frequency division multiplexing (AFDM), orthogonal frequency division multiplexing (OFDM) and single-carrier modulation (SCM), highlighting the superiority of AFDM in terms of doubly sparse channel estimation.
To address arbitrary Doppler shifts, a relaxed version of the on-grid model is introduced by utilizing multiple elementary Expansion Models (BEM) each based on Discrete Prolate Spheroidal Sequences (DPSS). Next, theoretical guarantees are provided for the precision of this off-grid model before further extending it to tackle channel prediction by exploiting the inherent DPSS extrapolation capability. Finally, numerical results are provided to both validate the proposed off-grid model for channel estimation and prediction purposes under the double sparsity assumption and to compare the corresponding mean squared error (MSE) and overhead performance when different wireless waveforms are used.
\end{abstract}

\begin{IEEEkeywords}
channel estimation, time-varying channels, AFDM, sparsity, basis expansion model, Slepian basis, DPSS, channel prediction
\end{IEEEkeywords}

\section{Introduction}

Sparsity is an important feature of wireless channels, particularly in high-frequency bands. Sparsity emerges due to the limited number of dominant scatterers contributing to signal propagation and manifests in both the delay and Doppler domains. By leveraging sparsity, advanced estimation methods can mitigate the challenges posed by high-mobility scenarios on channel estimation and prediction. The question thus arises as to 1) what delay-Doppler modeling can best leverage sparsity to enable efficient and low-complexity estimation and prediction of time-varying channels, and 2) which wireless waveforms can make the best use of delay-Doppler sparsity e.g., for pilot overhead reduction.

In wireless communication systems operating in low-to-moderate mobility scenarios, sparsity is often observed in the delay domain, where only a few dominant delay taps carry most of the channel energy \cite{Non_gaussian_MP}. Sparsity extends to the Doppler domain as well in high-mobility scenarios such as vehicular or high-speed train communications. In both mobility regimes, sparsity is more accentuated in the beamspace domain of MIMO channels, especially when the number of antennas is large \cite{sayeed2002,sayeed2011,beamspace}.
Channel estimation methods leveraging delay-domain sparsity have been proposed in various studies, such as \cite{ofdm_ongrid_delay} in which a grid-based discretization of the delay dimension is used to enable a compressive sensing framework. Delay-Doppler sparsity was assumed in \cite{kron_sbl} and leveraged to conceive enhanced channel estimation schemes for time-varying channels. However, sparsity was modeled as the sparsity of a one-dimensional array, with no way to assign different sparsity levels to the delay and Doppler domains. The model in \cite{ofdm_dD_sparsity} also assumes a form of delay-Doppler sparsity, though under the restrictive assumption of one Doppler shift per delay tap.
On-grid models, assuming that delay and Doppler shifts are quantized to a predefined grid, allow for structured sparse recovery methods. This quantization simplifies estimation but introduces grid mismatch errors \cite{chi2011mismatch}. To overcome the limitations of strict grid-based models, off-grid channel estimation techniques have been introduced by relaxing the assumption of discretized delay or Doppler shifts. Having estimates of the fractional part of Doppler frequency shifts enables straightforward channel extrapolation and, hence, prediction. Grid refinement techniques \cite{Ganguly2019refinedgrid,Tian2016MicrowaveSC} have been explored as a simple and low-complexity method to mitigate the performance degradation caused by grid mismatches. More advanced gridless approaches offering super-resolution recovery have also been investigated \cite{beinert2021superresolution,sparse_sbl,kron_sbl,distributedCS_offgrid}. Off-grid sparse Bayesian learning (SBL) \cite{sparse_sbl,kron_sbl} is one such solution. Another solution based on atomic norm minimization is proposed in \cite{distributedCS_offgrid}. Such approaches can, in principle, offer better performance than grid refinement solutions, but at the cost of higher computational complexity. Moreover, they cannot fully eliminate the basis mismatch between real-world signals and the grid, leading to residual errors. They could also suffer from ill conditioning. This is particularly the case, as argued in Section \ref{sec:off_grid} and validated in Section \ref{sec:simulations}, when the propagation link is characterized by a large number of ``physical'' channel paths contributing to each of the (refined) grid points. As an alternative, basis expansion models (BEM) do not attempt to estimate the individual off-grid delay-Doppler components but rather their combined effect, thus avoiding the ill-conditioning issue. Moreover, they can, in principle, be made to exploit sparsity. For example, the BEM based methods in \cite{Mohebbi2021ANC,Fu2016CS,embedded_otfs_bem} take advantage of sparsity, though only in the delay domain. There is thus a need to add support for Doppler domain sparsity to the BEM approach. 

Concerning waveforms conceived to deal with high-mobility scenarios, orthogonal time frequency space (OTFS) uses a two-dimensional (2D) modulation technique that transmits the information and pilot symbols in the delay-Doppler domain \cite{otfs}. However, OTFS channel estimation overhead cannot be reduced when the channel exhibits sparsity \cite{otfs_cs}. Indeed, the performance gap between OTFS and OFDM narrows in favor of OFDM on channels with delay domain sparsity \cite{otfs_ofdm_sparsity}. Another recently proposed waveform for communications in high-mobility scenarios is Affine Frequency Division Multiplexing (AFDM) \cite{BemaniAFDM_TWC}. While in AFDM, data and pilot symbols are not directly transmitted in the delay-Doppler domain; AFDM can still reconstruct a delay-Doppler representation of the channel to achieve full diversity on doubly dispersive channels. In the absence of sparsity, this leads to AFDM having a comparable bit error rate (BER) performance to that of OTFS, but with the advantage of requiring less channel estimation overhead \cite{BemaniAFDM_TWC}. Under sparsity, our previous works \cite{afdm_gc,afdm_spawc} proved the superiority of AFDM over both OFDM and OTFS in terms of pilot overhead using an on-grid channel model. In the current work, we expand our results to time-varying channels with off-grid Doppler shifts and we address channel prediction as a key application of this extension.

\subsection*{Contributions}
\begin{enumerate}
    \item The statistical notion of delay-Doppler double sparsity (DS) for LTV channels is defined and rigorously linked to the hierarchical sparsity paradigm under a first on-grid approximation of Doppler shifts . This link allows us to use hierarchical-sparsity mathematical tools from the compressed-sensing paradigm to analyze the effect of the used waveform on DS-LTV channel estimation, pointing to the superiority of AFDM over other waveforms in terms of sparse channel estimation overhead.
    \item The on-grid model is relaxed to allow for off-grid Doppler shifts, resulting in a novel DS-LTV channel representation with theoretical precision guarantees based on multiple frequency-shifted elementary BEMs. This channel representation constitutes a novel approach to modeling delay-Doppler sparsity. In this approach, sparsity is not synonymous with relatively small numbers of channel paths but consists in assuming relatively small numbers of \emph{clusters} of such paths, with each cluster being associated with a delay-Doppler grid point.
    \item The proposed channel representation is then used to conceive a linear minimum mean square error (LMMSE) LTV channel estimator, replacing the more complex sparse-recovery approach that enabled us to obtain closed-form waveform performance analysis. The results of this theoretical waveform performance comparison, conducted under the on-grid approximation and the compressed-sensing paradigm, are shown using numerical results to continue to hold under the more realistic off-grid model and LMMSE channel estimation.
    \item A prediction method based on extrapolating the basis vectors of the above BEM representation is finally proposed. Its optimality is established by linking it to the reduced-rank MMSE predictor.
\end{enumerate}

\subsection*{Outlines}
Section \ref{sec:system_model} provides a general mathematical framework for modeling doubly sparse, time-varying wireless channels. Section \ref{sec:on_grid} is dedicated to a first on-grid approximation of that model and to the channel estimation performance of different waveforms under that approximation. Section \ref{sec:off_grid} introduces a relaxation that allows for fractional Doppler frequency shifts, along with a new model to capture the associated off-grid effects. Furthermore, the inherent properties of the proposed model are used to enable channel prediction. Finally, numerical results are provided in Section \ref{sec:simulations} to validate the proposed models and the DS-LTV channel estimation and extrapolation performance of different waveform candidates.

\subsection*{Notations}
$\mathrm{Bernoulli}(p)$ is the Bernoulli distribution with activation probability $p$ and $\mathrm{B}(n,p)$ is the binomial distribution with parameters $(n,p)$.
Notation $X\sim F$ means that the random variable $X$ has a distribution $F$.
If $\mathcal{A}$ is a set, $|\mathcal{A}|$ stands for its cardinality. The set of all integers between $l$ and $m$ for some $(l,m)\in\mathbb{Z}^2$ (including $l$ and $m$) is denoted $\Iintv{l,m}$. For a matrix $\mathbf{M}$, $[\mathbf{M}]_{c}$ stands for the $c$-th column.
The ceiling operation is denoted as $\lceil.\rceil$. The modulo $N$ operation is denoted as $(\cdot)_N$. $\mathcal{F}$ stands for the continuous-time Fourier transform operator, and $x\mapsto\mathrm{rect}(x)$ stands for the (standard) rectangular function. $\mathbf{I}_N$ is the $N\times N$ identity matrix.

\section{System Model: Linear time-varying (LTV) channels}
\label{sec:system_model}
A linear time-varying (LTV) channel is a model of multipath propagation that is characterized by changes in its impulse response over time, caused by Doppler frequency shifts. The received signal at the channel output corresponding to a signal $s(t)$ at its input is expressed as:
\begin{equation}
\label{eq:continuous_td_IO}
    r(t) = \int_{\tau} s(t)h(t,\tau)d\tau + z(t),t\in\mathbb{R},
\end{equation}
where $z(t)$ is the additive white Gaussian noise process and 
\begin{equation}
    h(t,\tau) = \sum _{p=1}^{N_{\rm p}} g_{p}e^{\imath2\pi \nu_p t\Delta f}\delta(\tau - \tau_pT_{\rm s}),\label{eq:continuous_channel_model}
\end{equation}
is the continuous-time impulse response of the channel.
Here, $N_{\rm p}\geq1$ is the number of paths, $\delta(\cdot)$ is the Dirac delta function, $\Delta f$ is the subcarrier spacing, $T_{\rm s}$ is the sample period, $g_p$ is the complex gain of the $p$-th path, $\nu_p\triangleq\frac{\text{Doppler frequency in Hz}}{\Delta f}$ and $\tau_p\triangleq\frac{\text{Delay in seconds}}{T_{\rm s}}$. Moreover, we write $\tau_p=l_p+\iota_p$ with $l_p$ as the integer part of $\tau_p$, while $\iota_p$ is the fractional part that satisfies $\frac{-1}{2} < \iota_i \leq \frac{1}{2}$. Finally, $\nu_p = q_p +\kappa_p $, where $q_p\in\Iintv{-Q,Q}$ is the integer part of $\nu_q$, while $\kappa_p$ is the fractional part satisfying $\frac{-1}{2} < \kappa_i \leq \frac{1}{2}$.
In practice, the transmitted signal $s(t)$ is the continuous-time version of a discrete-time signal $s_n\triangleq s\left(nT_{\rm s}\right)$ generated from a vector $\mathbf{x}$ of $N$ symbols. These symbols could be either data symbols, pilot symbols, or a combination of both. Defining $r_n \triangleq r(nT_{\rm s})$ and $z_n\triangleq z(nT_{\rm s})$, we obtain the discrete-time version of the model in \eqref{eq:continuous_td_IO}
\begin{equation}
\label{eq:discrete_td_IO}
    r_n = \sum_{p = 1}^{N_{\rm p}}g_pe^{\imath2\pi \nu_p n\Delta fT_{\rm s}}s(nT_{\rm s}-\tau_pT_{\rm s}) + z_n,\quad n\in\mathbb{Z}\:.
\end{equation}
From now on, the process $(z_n)_{n\in\mathbb{Z}}$ is modeled as independent and identically distributed (i.i.d.) with $z_n\sim\mathcal{CN}\left(0,\sigma_w^2\right)$.
  \vspace{-1mm}
  
\section{Estimation of doubly sparse LTV channels with the on-grid approximation}
\label{sec:on_grid}
\subsection{First approximation: on-grid doubly sparse LTV channels}
In a first approximation, we assume that both $\iota_p$ and $\kappa_p$ are zero, and we define  $L\triangleq \underset{p=1\cdots N_{\rm p}}{\max}\tau_p+1$.
The discrete-time input-output model in \eqref{eq:discrete_td_IO} becomes
\begin{equation}
\label{eq:on_grid_discrete_td_IO}
    r_n = \sum_{l = 0}^{L-1}s_{n-l}h_{l,n} + z_n,\quad n\in\mathbb{Z}\:.
\end{equation}
The input-output relation in \eqref{eq:on_grid_discrete_td_IO} defines an on-grid LTV channel with a $L-1$ maximum delay shift with the complex gain $h_{l,n}$ of the $l$-th path varying with the time index $n$ as
 \begin{equation}
     \label{eq:on_grid_ch_model}
     h_{l,n}=\sum _{p=1}^{N_{\rm p}} g_{p}e^{\imath2\pi \frac{q n}{N}}\delta(l - l_p), l=0\cdots L-1\:.
 \end{equation}
Under this assumption, we assume that $g_p = \alpha_{l_p, q_p} I_{l_p, q_p}$ where $I_{l, q}$ is given by:
\begin{equation}
\label{eq:indicator_function}
I_{l, q} = 
\begin{cases} 
1 & \text{if } \exists p \text{ such that } (l, q) = (l_p, q_p), \\
0 & \text{otherwise.}
\end{cases}
\end{equation}
Here, $I_{l,q}$ for any $l$ and $q$ is a binary random variable that, when non-zero, indicates that a channel path with delay $l$, Doppler shift $q$ and complex gain $\alpha_{l,q}$ is active and contributes to the channel output. The complex gain is assumed to satisfy the channel power normalization
\begin{equation}
    \label{eq:power_normalization_on_grid}
    \sum_{l=0}^{L-1}\sum_{q=-Q}^{Q}\mathbb{E}\left[\left|\alpha_{l,q}\right|^2I_{l,q}\right]=1\:.
\end{equation}
The number of paths $N_{\rm p}$, in \eqref{eq:continuous_channel_model} can now be expressed as
\begin{equation}
\label{Np_ongrid}
    N_{\rm p} = \sum_{l=0}^{L-1} \sum_{q=-Q}^{Q}I_{l,q}\:.
\end{equation}
Note that the distribution of the random variables $\left\{I_{l,q}\right\}_{l,q}$ controls the type of sparsity that the LTV channel exhibits. We opt for the distribution given by the following definition.
\begin{definition}[On-grid Delay-Doppler double sparsity, \cite{afdm_gc}]
\label{def:dD_sparsity}
The complex gain $h_{l,n}$ of the $l$-th path
varies with time as
\begin{equation}
    \label{eq:ch_model}
    h_{l,n} = \sum_{q=-Q}^{Q} \alpha_{l,q} I_{l,q} e^{\imath 2\pi \frac{nq}{N}}, \quad l = 0, \ldots, L-1,
\end{equation}
and there exist \(0 < p_d, p_D < 1\) such that
\begin{equation}
    I_{l,q} = I_l I_q^{(l)}, \quad \forall (l,q) \in \Iintv{0,L-1} \times \Iintv{-Q,Q},
\end{equation}
where \(I_l \sim \mathrm{Bernoulli}(p_d)\) and \(I_q^{(l)} \sim \mathrm{Bernoulli}(p_D)\). Moreover, $I_{l,q}$ and $\alpha_{l,q}$ are independent and $\alpha_{l,q}\sim\mathcal{CN}\left(0,\sigma_{\alpha}^2\right)$ with $\sigma_{\alpha}^2$ satisfying \eqref{eq:power_normalization_on_grid}.
\end{definition}
The use of the Bernoulli distribution to model channel sparsity is widely adopted \cite{bernoulli}. It is a natural way to probabilistically model the activation or lack of activation of a channel tap, i.e., the presence or absence of an actual channel path at that tap position. Indeed, the sparsity level of a collection of channel taps can be directly linked to the activation probability of the Bernoulli distribution modeling each of them. More precisely, note that under Definition \ref{def:dD_sparsity},

\begin{equation}
\label{eq:sd}
    s_{\rm d}\triangleq\mathbb{E}\left[\sum_{l}I_l\right]=p_dL
\end{equation}

is the mean number of active delay taps in the delay-Doppler profile of the channel and can be thought of as the {\it delay domain sparsity level} while

\begin{equation}
\label{eq:sD}
    s_{\rm D}\triangleq\mathbb{E}\left[\sum_{q}I_q^{(l)}\right]=p_D(2Q+1)
\end{equation}

is the mean number of active Doppler bins per delay tap and can be thought of as the {\it Doppler domain sparsity level}.
Fig. \ref{fig:examples} illustrates three different delay-Doppler sparsity models, fully described in \cite{afdm_gc} and dubbed Type-1, Type-2 and Type-3. All these models fall under the scope of Definition \ref{def:dD_sparsity}, each with an additional assumption on $I_l$ and $I_q^{(l)}$. For instance, in Type-3 models of Figure \ref{fig:examples}-(c), the active Doppler bins per delay tap appear in clusters of random positions but of deterministic length, as opposed to the absence of clusters in Type-2 models of Figure \ref{fig:examples}-(b). The case in which the delay taps have all the same (random) sparsity (as in Type-1 models of Fig. \ref{fig:examples}-(a)) also falls under Definition \ref{def:dD_sparsity} by setting $I_q^{(l)}=I_q^{(0)},\forall l$.
\begin{figure}
  \centering
  \begin{tabular}{ c @{\hspace{5pt}} c @{\hspace{5pt}} c}
  \includegraphics[width=.3\columnwidth] {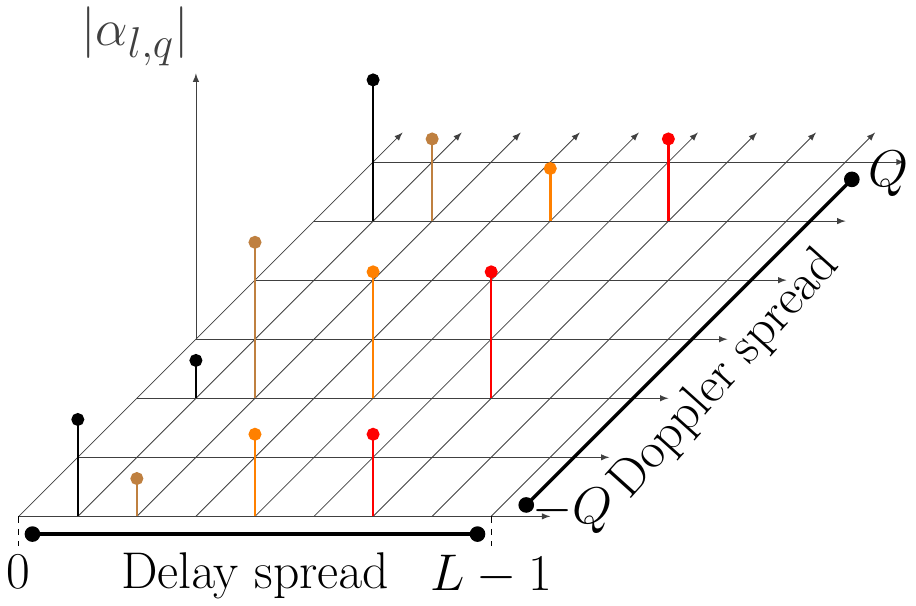} &
    \includegraphics[width=.3\columnwidth]{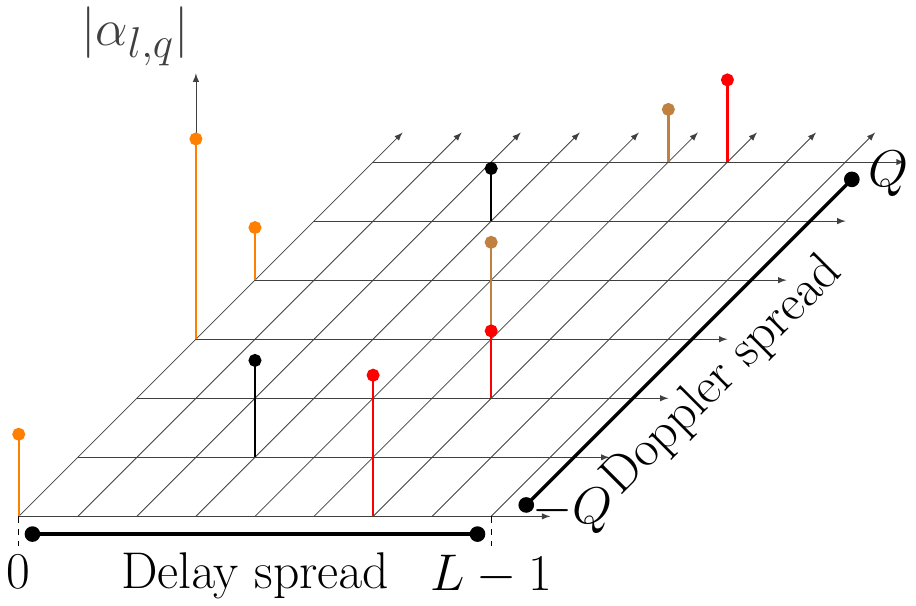} &
      \includegraphics[width=.3\columnwidth]{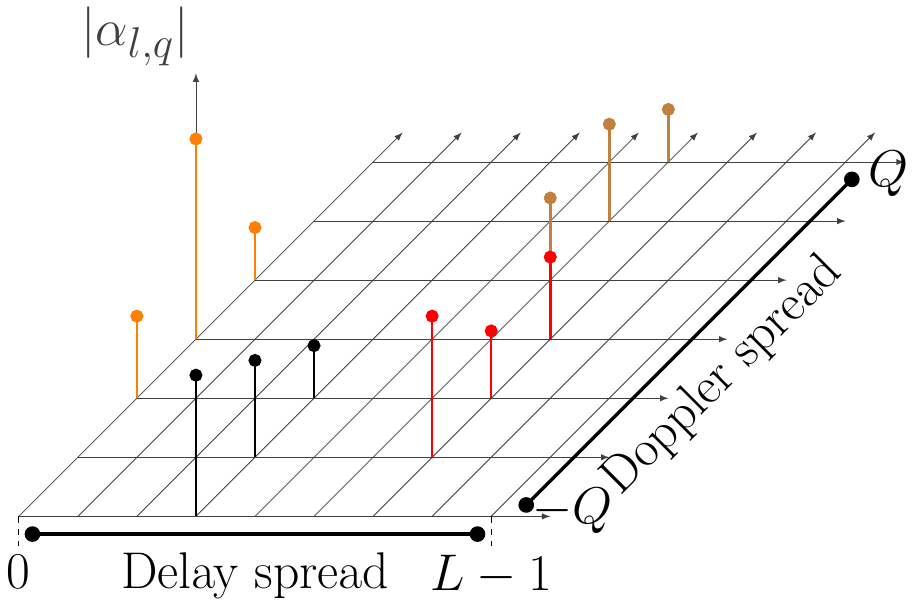} \\
    \small (a) &
      \small (b)&
      \small (c)
  \end{tabular}
  \medskip
  \caption{Examples of channels satisfying (a) Type-1, (b) Type-2, (c) Type-3 delay-Doppler sparsity}
  \label{fig:examples}
\end{figure}

\subsection{Relation to hierarchical sparsity \cite{hierarchical}}

\begin{definition}[Hierarchical sparsity]
\label{def:hierarchical}
Vector $\mathbf{x} \in \mathbb{C}^{NM}$ is $\left(s_{N},s_{M}\right)$-sparse if it consists of $N$ size-$M$ blocks, $s_{N}$ of which at most are non-zero, and each non-zero block is $s_{M}$-sparse.
\end{definition}
To analyze hierarchically sparse recovery schemes, a modified version of the restricted isometry property (RIP) called the hierarchical RIP (HiRIP) was proposed in the literature. 
\begin{definition}[HiRIP, \cite{hierarchical}]
\label{def:HiRIP}
The HiRIP constant $\delta_{s_{N},s_{M}}$ of a matrix $\mathbf{A}$ is the smallest $\delta \geq 0$ such that for all $\left(s_{N},s_{M}\right)$-sparse vectors $\mathbf{x} \in \mathbb{C}^{NM}$
\begin{equation}
     \label{eq:HiRIP}
     (1- \delta)\left\|\mathbf{x}\right\|^2\leq \left\|\mathbf{A}\mathbf{x}\right\|^2\leq(1+ \delta)\left\|\mathbf{x}\right\|^2
 \end{equation}
\end{definition}
DS-LTV sparsity is probabilistic while hierarchical sparsity of Definition \ref{def:hierarchical} is deterministic. The two models are nonetheless related: if vectorized to a concatenation of its rows, random matrix $\left[\alpha_{l,q}I_{l,q}\right]_{l,q}$ defines a vector $\boldsymbol{\alpha}\in\mathbb{C}^{(2Q+1)L}$ that consists of $L$ blocks each of size $2Q+1$, where on average $s_{\rm d}$ blocks have non-zero entries and where each non-zero block itself is on average $s_{\rm D}$-sparse. To ensure sparsity in a stronger sense, i.e., with high probability (as $L,Q,Lp_{\rm d},(2Q+1)p_{\rm D}$ grow), we require that the following assumption holds.
\begin{assumption}
    \label{assum:technical}
    $\left\{I_l\right\}_{l=0\cdots L-1}$ are mutually independent. Moreover, the complementary cumulative distribution function (CCDF) $\overline{F}_{S_{\mathrm{D},l}}(m)$ of the random variable $S_{\mathrm{D},l}\triangleq\sum_{q=-Q}^{Q}I_{q}^{(l)}$ for any $l\in\Iintv{0,L-1}$ is upper-bounded for any integer $m>(2Q+1)p_{\rm D}$ by the CCDF of $\mathrm{B}\left(2Q+1,p_{\rm D}\right)$.
\end{assumption}
Type-1 and 2 models satisfy the CCDF upper-bound condition simply by requiring that $\{I_{q}^{(0)}\}_q$ in the first and $\{I_{q}^{(l)}\}_q$ for any $l$ in the second be mutually independent (and thus satisfy $\overline{F}_{S_{\mathrm{D},l}}(m)=\overline{F}_{\mathrm{B}\left(2Q+1,p_{\rm D}\right)}(m),\forall m$). For Type-3 models, $S_{\mathrm{D},l}$ is deterministic; hence, its CCDF is trivially upper-bounded.
As the following lemma rigorously shows, the mutual independence of $\left\{I_l\right\}_{l=0\cdots L-1}$ in Assumption \ref{assum:technical} guarantees strong delay domain sparsity, while Doppler sparsity is guaranteed in a more explicit manner by the CCDF upper bound.
\begin{lemma}
 \label{lem:DS_HS}
 With probability $1-e^{-\Omega\left(\min\left((2Q+1)p_{\rm D},Lp_{\rm d}\right)\right)}$, $\boldsymbol{\alpha}$ is $\left(s_{\rm d},s_{\rm D}\right)$-sparse under Assumption \ref{assum:technical}.
\end{lemma}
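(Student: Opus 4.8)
The plan is to establish the two defining conditions of $\left(s_{\rm d},s_{\rm D}\right)$-sparsity in Definition \ref{def:hierarchical} separately and then combine them by a union bound, absorbing the usual multiplicative Chernoff slack into the $\Omega(\cdot)$ in the exponent. Write $K\triangleq\sum_{l=0}^{L-1}I_l$ for the number of non-zero blocks of $\boldsymbol{\alpha}$ and $S_{\mathrm{D},l}\triangleq\sum_{q=-Q}^{Q}I_q^{(l)}$ for the number of active Doppler bins in block $l$, so that the number of non-zero entries of block $l$ equals $I_l S_{\mathrm{D},l}\le S_{\mathrm{D},l}$. Fixing a small $\delta>0$, I would show that with the stated probability we simultaneously have $K\le(1+\delta)s_{\rm d}$ (few active blocks) and $I_l S_{\mathrm{D},l}\le(1+\delta)s_{\rm D}$ for every $l$ (each active block Doppler-sparse); up to taking ceilings this is exactly $\left((1+\delta)s_{\rm d},(1+\delta)s_{\rm D}\right)$-sparsity, and the factor $1+\delta$ only rescales the constant hidden in $\Omega(\cdot)$.

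For the delay domain, Assumption \ref{assum:technical} makes $\{I_l\}_l$ mutually independent $\mathrm{Bernoulli}(p_{\rm d})$ variables, so $K\sim\mathrm{B}(L,p_{\rm d})$ with mean $s_{\rm d}=p_{\rm d}L$ by \eqref{eq:sd}. The multiplicative Chernoff upper-tail bound then gives $\mathbb{P}\left[K>(1+\delta)s_{\rm d}\right]\le e^{-\delta^2 s_{\rm d}/3}=e^{-\Omega(s_{\rm d})}$, which settles the block-sparsity part with the required exponent.

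For the Doppler domain, the role of the CCDF condition in Assumption \ref{assum:technical} is precisely to replace the (possibly complicated, e.g.\ clustered Type-3) law of $S_{\mathrm{D},l}$ by the clean binomial law for which Chernoff applies: for $m=\lceil(1+\delta)s_{\rm D}\rceil>s_{\rm D}$ we have $\mathbb{P}\left[S_{\mathrm{D},l}\ge m\right]\le\overline{F}_{\mathrm{B}(2Q+1,p_{\rm D})}(m)\le e^{-\delta^2 s_{\rm D}/3}=e^{-\Omega(s_{\rm D})}$, using $s_{\rm D}=p_{\rm D}(2Q+1)$ from \eqref{eq:sD}. The subtle point, and the step I expect to be the main obstacle, is the union over blocks: a naive bound over all $L$ blocks would cost a factor $L$ and destroy the exponent. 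Instead I would bound only the \emph{active} blocks via a first-moment argument, using the independence of the delay activation $I_l$ and the Doppler activations $\{I_q^{(l)}\}_q$ to write the expected number of active blocks that fail to be $s_{\rm D}$-sparse as $\sum_l \mathbb{P}[I_l=1]\,\mathbb{P}[S_{\mathrm{D},l}\ge m]\le L p_{\rm d}\,e^{-\Omega(s_{\rm D})}=s_{\rm d}\,e^{-\Omega(s_{\rm D})}$; by Markov's inequality the probability that some active block is not $s_{\rm D}$-sparse is then at most $s_{\rm d}\,e^{-\Omega(s_{\rm D})}$.

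Combining the two bounds by a union bound yields a total failure probability of order $e^{-\Omega(s_{\rm d})}+s_{\rm d}\,e^{-\Omega(s_{\rm D})}$. The prefactor $s_{\rm d}$ contributes only an additive $\ln s_{\rm d}$ in the exponent, which is negligible against $\Omega(s_{\rm D})$ in the growth regime of Assumption \ref{assum:technical}, so both summands take the form $e^{-\Omega(\min(s_{\rm D},s_{\rm d}))}=e^{-\Omega(\min((2Q+1)p_{\rm D},Lp_{\rm d}))}$, which is the claimed bound. The hypotheses are used in exactly two places: the mutual independence of $\{I_l\}_l$ (to obtain an exact binomial, hence exponential concentration of $K$), and the CCDF domination (to Chernoff-bound each $S_{\mathrm{D},l}$ uniformly, irrespective of the intra-block dependence structure of the three channel types).
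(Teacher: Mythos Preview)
Your proposal is correct and follows essentially the same route as the paper: apply the multiplicative Chernoff bound to $S_{\rm d}=\sum_l I_l$ (delay side) and, via the CCDF domination of Assumption~\ref{assum:technical}, to a $\mathrm{B}(2Q+1,p_{\rm D})$ surrogate for each $S_{\mathrm{D},l}$ (Doppler side), then combine by a union bound, with the $(1+\delta)$ slack absorbed by choosing $\epsilon$ small. The paper's proof sketch is terser and does not spell out the union over blocks; your first-moment argument restricting to active blocks is a clean way to make that step explicit, but it is an elaboration of the same idea rather than a different approach.
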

\begin{proof}
    The proof (which we first provided in \cite{sparse_spawc}) follows from applying Chernoff's bound to $S_{\rm d}$ evaluated at $s_{\rm d}=(1+\epsilon)Lp_{\rm d}$ and to the $\mathrm{B}\left(2Q+1,p_{\rm D}\right)$ distribution (upper-bounding $S_{\rm D}$ as per Assumption \ref{assum:technical}) evaluated at $s_{\rm D}=(1+\epsilon)(2Q+1)p_{\rm D}$ with an $\epsilon>0$ chosen as small as needed. 
\end{proof}

\subsection{Theoretical analysis of sparse recovery of DS-LTV channels with different waveforms}
The above established link to the paradigm of hierarchical sparsity provides us with theoretical tools for an analytical comparison of LTV channel estimation performance when different practical waveforms are used. The comparison is done assuming embedded pilots\footnote{As opposed to the joint channel estimation and data detection approach of \cite{Ranasinghe2025}, we only consider pilot-based channel estimation. While data-aided approaches can improve estimation accuracy, we believe that the relative performance comparison results we obtain assuming pure pilot-based channel estimation will continue to hold under the joint approach (though at a higher level of performance for each of the considered waveforms)}. For that sake, let $\mathbf{s}$ be the $N$-long vector of samples $s_n$ at the input of the channel, and define $\mathbf{x}$ as the vector of data symbols and embedded channel estimation pilots that modulate the waveform in use to produce the time-domain samples vector $\mathbf{s}$. For all the considered waveforms, we can write the dependence of $\mathbf{s}$ on $\mathbf{x}$ using a modulation matrix $\boldsymbol{\Phi_{\rm tx}}$
\begin{equation}
\label{eq:s_x_phi}
    \mathbf{s}=\boldsymbol{\Phi}_{\rm tx}\mathbf{x}\:.
\end{equation}
For single-carrier modulation (SCM), $\boldsymbol{\Phi}_{\rm tx} = \mathbf{T}_{\rm cp}$ while for AFDM, $\boldsymbol{\Phi}_{\rm tx} =\mathbf{T}_{\rm cpp} \pmb{\Lambda}_{c2} \mathbf{F}_N^{\rm H} \pmb{\Lambda}_{c1}$, with $\mathbf{T}_{\rm cp}$ and $\mathbf{T}_{\rm cpp}$ being the matrix for $(L-1)$-long CP insertion and chirp-periodic prefix (CPP) \cite{sparse_spawc} insertion, respectively, $\pmb{\Lambda}_{c} = \text{diag}(e^{-\imath 2\pi c n^2}, n = 0, \dots, N-1)$ and $\mathbf{F}_N$ being the $N$-order discrete Fourier transform (DFT) matrix. For the OFDM grid shown in Figure \ref{fig:OFDM_pilot_pattern} and composed of $N_{\rm ofdm,symb}$ symbols, each having $N_{\rm fft}$ sub-carriers and a cyclic prefix of length $N_{\rm cp}$, $\boldsymbol{\Phi}_{\rm tx}=\mathrm{blkdiag}\left(\mathbf{T}_{\rm cp} \mathbf{F}_{N_{\rm fft}}^{\rm H},\ldots,\mathbf{T}_{\rm cp} \mathbf{F}_{N_{\rm fft}}^{\rm H}\right)$.  For OTFS, $\boldsymbol{\Phi}_{\rm tx} = \mathbf{T}_{\rm cp}\left(\mathbf{F}_{M_{\rm otfs}}^{\rm H}\otimes\mathbf{I}_{N_{\rm otfs}}\right)$ where $M_{\rm otfs}$ and $N_{\rm otfs}$ are two integers satisfying $M_{\rm otfs}N_{\rm otfs}=N$.
%

At the receiver, let $\mathbf{r}$ be the $N$-long vector of the channel output samples $r_n$ defined in \eqref{eq:on_grid_discrete_td_IO}. Depending on the waveform employed, the receiver applies a demodulation matrix $\boldsymbol{\Phi}_{\rm rx}$ to $\mathbf{r}$ to produce the vector of transform domain samples $\mathbf{y}$. For SCM, $\boldsymbol{\Phi}_{\rm rx}=\mathbf{R}_{\rm cp}$ while for AFDM, $\boldsymbol{\Phi}_{\rm rx}=\pmb{\Lambda}_{c1}^{\rm H} \mathbf{F}_N \pmb{\Lambda}_{c2}^{\rm H}\mathbf{R}_{\rm cpp}$ with $\mathbf{R}_{\rm cp}=\mathbf{R}_{\rm cpp}=\left[\mathbf{0} , \mathbf{I}_{N}\right]$ being the matrices for $(L-1)$-long CP and CPP removal. For the OFDM grid of Figure \ref{fig:OFDM_pilot_pattern}, $\boldsymbol{\Phi}_{\rm rx}=\mathrm{blkdiag}\left(\mathbf{F}_{N_{\rm fft}} \mathbf{R}_{\rm cp},\ldots,\mathbf{F}_{N_{\rm fft}} \mathbf{R}_{\rm cp}\right)$. For OTFS, $\boldsymbol{\Phi}_{\rm rx}=\left(\mathbf{F}_{M_{\rm otfs}}\otimes\mathbf{I}_{N_{\rm otfs}}\right)\mathbf{R}_{\rm cp}$.
Let $\{{\rm p}_p\}_{p=1\cdots N_{\rm p}}$ be the $N_{\rm p}$ transform domain pilots, inserted at indexes $\left\{m_p\right\}_{p=1\cdots N_{\rm p}}$ within the vector $\mathbf{x}$, each surrounded by a waveform-dependent number of zero guard samples. Define $\mathcal{P}\subset\Iintv{0,N-1}$ as the set of indexes of the received samples associated with these pilots. 
For SCM, each time domain pilot instance needs a number of guard samples that is at least equal to the maximum delay shift (Fig. \ref{fig:SCM_pilot_pattern}). We thus set
\begin{equation}
    \label{eq:calP_scm}
    \mathcal{P}=\bigcup_{p=1}^{N_{\rm p}}\Iintv{m_p,m_p+L-1}.\quad\text{(SCM)}
\end{equation}
Note that the above pilot scheme with time domain embedded pilots can also work for waveforms other than SCM. This is the case, for instance, of interleave Frequency Division Multiplexing (IFDM) \cite{IFDM}, a waveform with a modulation matrix that is particularly relevant for good BER performance with orthogonal approximate message passing (OMAP) and memory approximate message mapping (MMAP) receivers. Therefore, SCM pilot overhead results, that we provide later on, also apply to such waveforms.
As for OFDM, let $m_{p_{\mathrm{t}}}\in\Iintv{0,N_{\rm ofdm,symb}-1}$ be the time domain position of the $p$-th pilot and $m_{p_{\mathrm{f}}}\in\Iintv{0,N_{\rm fft}-1}$ be its frequency domain position. Assuming that $N_{\rm fft}$ has been chosen small enough to render negligible the Doppler effect \emph{within} the individual symbols of the OFDM grid, then no guard samples are needed in the frequency domain, while the multiple CPs play the role of time domain guard intervals (Fig. \ref{fig:OFDM_pilot_pattern}). Then,
\begin{equation}
    \label{eq:calP_ofdm}
    \begin{multlined}
    \mathcal{P}=\left\{m_p=m_{\mathrm{t},p_{\mathrm{t}}}N_{\rm fft}+m_{\mathrm{f},,p_{\mathrm{f}}}\right\}_{\substack{p_{\rm t}=1\cdots N_{\rm p,t},\\p_{\rm f}=1\cdots N_{\rm p,f}}},\\
    N_{\rm p}=N_{\rm p,t}N_{\rm p,f},\text{(OFDM)}
    \end{multlined}
\end{equation}
While for AFDM with a chirp parameter $c_1$, it holds \cite{sparse_spawc} that an $l$ delay shift and a $q$ Doppler shift of a channel path produce a $q-2Nc_1 l$ DAFT domain shift. More precisely,
\begin{equation}
\begin{aligned}
   &y_k=\sum_{l = 0}^{L-1}\sum_{q=-Q}^{Q}\alpha_{l,q}I_{l,q} 
    e^{\imath2\pi(c_1l^2-\frac{ml}{N} + c_2(m^2 - k^2))}x_k +w_k\:,\\
    &m \triangleq (k - q + 2Nc_1 l)_N\:.
\end{aligned}
    \label{eq:y_output_integer}
\end{equation}
The samples related to the $p$-th pilot symbol thus occupy $2N \lvert c_1 \rvert (L-1)+2Q+1$ DAFT domain indexes. More precisely,  each pilot instance in the DAFT domain should be granted $Q$ guard samples preceding it and $2N \lvert c_1 \rvert (L-1)+Q$ guard samples following it (Fig. \ref{fig:AFDM_pilot_pattern}). We thus set
\begin{equation}
    \label{eq:calP_afdm}
    \mathcal{P}=\bigcup_{p=1}^{N_{\rm p}}\Iintv{m_p-Q,m_p+2N \lvert c_1 \rvert (L-1)+Q}.\quad\text{(AFDM)}
\end{equation}
Finally, for OTFS, the set $\mathcal{P}$ is defined as the vectorized form of the indexes of the cells marked in blue in Figure \ref{fig:pilot_pattern_otfs}. Note that, since OTFS is a full-diversity waveform for LTV channels, the dimensions of the set $\mathcal{P}$ are adjusted to the channel delay and Doppler spreads; its cardinality cannot thus be reduced below its $L-1+\min(4Q+1,N_{\rm otfs})\min(2L-1,M_{\rm otfs})$ value if pilot-data orthogonality is to be preserved.
\begin{figure}
    \centering
    \includegraphics[width=1\linewidth]{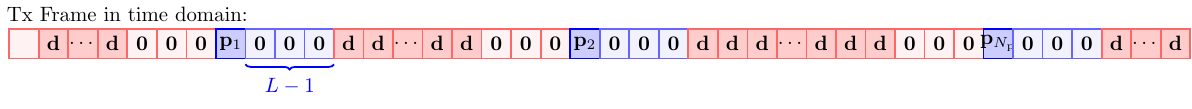}
    \caption{SCM frame composed of data samples and $N_{\rm p}$ pilot symbols, each of the latter surrounded by $2L-1$ guard samples.}
    \label{fig:SCM_pilot_pattern}
        \vspace{-2mm}
\end{figure}
\begin{figure}
    \centering
    \includegraphics[width=\linewidth]{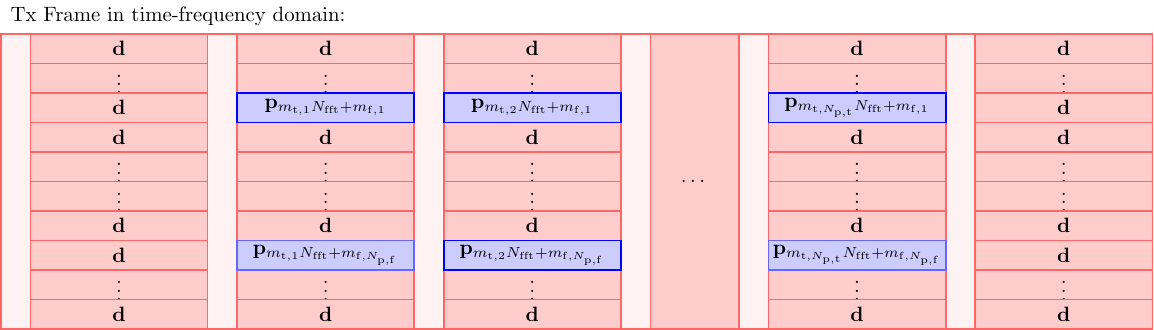}
    \caption{A frame of $N_{\rm ofdm,symb}$ $N_{\rm fft}$-long OFDM symbols, $N_{\rm p,t}$ of which having $N_{\rm p,f}$ pilot subcarriers
    }
    \label{fig:OFDM_pilot_pattern}
    \vspace{-2mm}
\end{figure}
\begin{figure}
    \centering
    \includegraphics[width=1\linewidth]{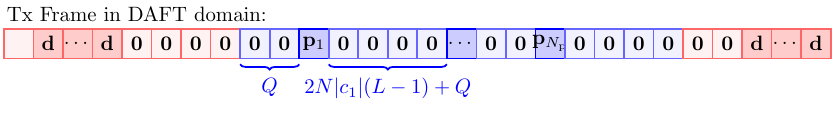}
    \caption{An AFDM symbol with $N_{\rm p}$ pilot symbols and their guard samples.}
    \label{fig:AFDM_pilot_pattern}
    \vspace{-2mm}
\end{figure}
\begin{figure}
  \centering
  \includegraphics[scale=.5]{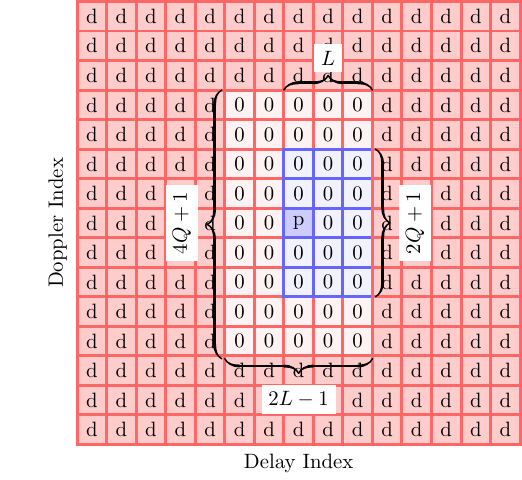}
  \caption{An OTFS symbol composed in the Zak domain of data samples (red), a pilot sample (blue) and guard samples (light blue and red)}
  \label{fig:pilot_pattern_otfs}
  \vspace{-2mm}
\end{figure} 

For any of the above waveforms, let $\mathbf{y}_{\rm p}\triangleq[y_k]_{k\in\mathcal{P}}$ be the vector of received pilot samples. Inserting \eqref{eq:ch_model} and \eqref{eq:s_x_phi} into \eqref{eq:on_grid_discrete_td_IO} gives the following signal model relating the hierarchically-sparse (per Lemma \ref{lem:DS_HS}) unknown channel vector $\boldsymbol{\alpha}$ to the measurement vector $\mathbf{y}_{\rm p}$ through a measurement matrix $\mathbf{M}_{\rm p}$:
\begin{equation}
     \label{eq:yp_ongrid}
     \mathbf{y}_{\rm p}=\underbrace{\mathbf{A}_{\cal P}\mathbf{M}}_{\triangleq\mathbf{M}_{\rm p}}\boldsymbol{\alpha}+\mathbf{w}_{\rm p}
 \end{equation}
where $[\mathbf{M}]_{l(2Q+1)+Q+q+1}=\boldsymbol{\Phi}_{\rm rx}\boldsymbol{\Delta}_q \boldsymbol{\Pi}^l \boldsymbol{\Phi}_{\rm tx} \mathbf{x}_{\rm p}$, $\mathbf{x}_{\rm p}$ is a vector of the same length as $\mathbf{x}$ with entries equal to ${\rm p}_1, \ldots, {\rm p}_{N_{\rm p}}$ at indexes $\left\{m_p\right\}_{p=1\cdots N_{\rm p}}$ and to zero elsewhere, and $\mathbf{w}_{\rm p}$ are the corresponding noise samples. Here, $\mathbf{A}_{\cal P}$ is the $|\mathcal{P}|\times N$ matrix that selects from the transform domain received vector the entries corresponding to $\mathcal{P}$.
$\boldsymbol{\Delta}_q=\mathrm{diag}(e^{\imath2\pi qn},n=0\cdots N-1)$, $\boldsymbol{\Pi}$ is the $N$-order permutation matrix, $\boldsymbol{\Phi}$ is the transform matrix specific to the waveform.
Hierarchical hard thresholding pursuit (HiHTP) \cite{hierarchical} has been suggested in the literature for solving  hierarchically-sparse recovery problems. When applied to Problem \eqref{eq:yp_ongrid} it gives Algorithm \ref{algo:HiHTP}.
\begin{algorithm}
    \caption{HiHTP for compressive sensing of DS-LTV channels}
    \label{algo:HiHTP}
    \begin{algorithmic}[1]
        \STATE \textbf{Input:} Measurement matrix $\mathbf{M}_{\rm p}$,  received pilot samples $\mathbf{y}_{\rm p}$, max. number of iterations $k_{\max}$, sparsity levels $s_{\rm d}$, $s_{\rm D}$
        \STATE $\hat{\boldsymbol{\alpha}}^{\left(0\right) }= 0$, $k=0$
        \STATE \textbf{repeat}
        \STATE $\Omega^{(k+1)}=L_{s_{\rm d},s_{\rm D}}\left(\boldsymbol{\alpha}^{\left(k\right)}+\mathbf{M}_{\rm p}^{\rm H}\left(\mathbf{y}_{\rm p}-\mathbf{M}_{\rm p}\boldsymbol{\alpha}^{\left(k\right)}\right)\right)$
        \STATE $\boldsymbol{\alpha}^{\left(k+1\right)}={\arg \min}\left\{\left\|\mathbf{y}_{\rm p}-\mathbf{M}_{\rm p}\mathbf{z}\right\|,\sup\left(\mathbf{z}\right)\subset\Omega^{(k+1)}\right\}$
        \STATE $k=k+1$
        \STATE \textbf{until} $k=k_{\max}$ or $\Omega^{(k+1)}=\Omega^{(k)}$ (whichever earlier)
        \STATE \textbf{Output:} $\left(s_{\rm d},s_{\rm D}\right)$-sparse $\hat{\boldsymbol{\alpha}}^{\left(k\right)}$.
        
    \end{algorithmic}
\end{algorithm}
HiHTP employs a \emph{hierarchically} sparse thresholding operator $L_{s_{\rm d},s_{\rm D}}$ at each iteration.
To compute $L_{s_{\rm d},s_{\rm D}}(\mathbf{x})$ for a vector $\mathbf{x}\in\mathbb{C}^{L(2Q+1)}$ a $s_{\rm D}$-sparse approximation is first applied to each of the $L$ blocks of $\mathbf{x}$ by keeping the largest $s_{\rm D}$ entries in each block while setting the remaining ones to zero. A $s_{\rm d}$-sparse approximation is then applied to the result by identifying the $s_{\rm d}$ blocks with the largest $l_2$-norm. HiHTP is known to converge, with a geometric convergence rate, provided that the associated HiRIP is small enough \cite[Theorem 1]{hierarchical}. This applies to Algorithm \ref{algo:HiHTP} (as established by Corollary \ref{cor:HiHTP_convergence} below) provided that the following technical assumption holds.
\begin{assumption}
    \label{assum:technical_diag}
    $\forall l_1\neq l_2$, random variables $\{I_{q}^{(l_1)}\}_{q=-Q\cdots Q}$ are independent of $\{I_{q}^{(l_2)}\}_{q=-Q\cdots Q}$.
\end{assumption}
Under the above assumption, the following two theorems characterize the pilot overhead needed for each of the studied waveforms in order to make the associated HiRIP small enough and hence to guarantee the convergence of Algorithm \ref{algo:HiHTP} and the consequent recovery of $\boldsymbol{\alpha}$.


\begin{theorem}[HiRIP for SCM and OFDM based measurements]
    \label{theo:HiRIP_SCM_OFDM}
    Under Assumption \ref{assum:technical} and for sufficiently large $L$, $Q$, sufficiently small $\delta_{\rm t}$, sufficiently small $\delta_{\rm f}$, and
    \begin{eqnarray}
        N_{\rm p,t}=\Omega(\frac{1}{\delta_{\rm t}^{2}}\log^2\frac{1}{\delta_{\rm t}}\log\frac{s_{\rm D}}{\delta_{\rm t}}s_{\rm D}\log(2Q+1))\:,\\
        N_{\rm p,f}=\Omega(\frac{1}{\delta_{\rm f}^{2}}\log^2\frac{1}{\delta_{\rm f}}\log\frac{s_{\rm d}}{\delta_{\rm f}}s_{\rm d}\log L)\:,
    \end{eqnarray}
    
    then the HiRIP constant $\delta_{s_{\rm d},s_{\rm D}}$ of the SCM measurement matrix $\mathbf{M}_{\rm p}$ associated with $\mathcal{P}\left(N_{\rm p,t}\right)$ as defined by \eqref{eq:calP_scm} satisfies $\delta_{s_{\rm d},s_{\rm D}}\leq\delta_{\rm t}$ with probability $1-e^{-\Omega \left(\log{L}\log{\frac{s_{\rm D}}{\delta_{\rm t}}}\right)}$ and the HiRIP constant $\delta_{s_{\rm d},s_{\rm D}}$ of the OFDM measurement matrix $\mathbf{M}_{\rm p}$ associated with $\mathcal{P}=\mathcal{P}\left(N_{\rm p,t},N_{\rm p,f}\right)$ as defined by \eqref{eq:calP_ofdm} satisfies $\delta_{s_{\rm d},s_{\rm D}}\leq\delta_{\rm t}+\delta_{\rm f}+\delta_{\rm t}\delta_{\rm f}$ with probability $1-e^{-\Omega\left(\min\left\{\log{L}\log{\frac{s_{\rm d}}{\delta_{\rm t}}},\log{(2Q+1)}\log{\frac{s_{\rm D}}{\delta_{\rm f}}}\right\}\right)}$.
\end{theorem}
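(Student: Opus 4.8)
The plan is to reduce the hierarchical isometry property of the structured matrix $\mathbf{M}_{\rm p}=\mathbf{A}_{\cal P}\mathbf{M}$ to two ingredients: the \emph{ordinary} RIP of two much smaller partial-Fourier factors (one resolving the delay domain, one resolving the Doppler domain), and the Kronecker-product HiRIP composition lemma of \cite{hierarchical}. First I would show that, after the pilot selection $\mathbf{A}_{\cal P}$ and up to left/right multiplication by unitary diagonal (phase) and permutation matrices that leave all $\ell_2$ norms---and hence \eqref{eq:HiRIP}---unchanged, $\mathbf{M}_{\rm p}$ factorizes into a two-level tensor form matched to the $(s_{\rm d},s_{\rm D})$ block structure of $\boldsymbol{\alpha}$.

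For OFDM I would combine the column formula $[\mathbf{M}]_{l(2Q+1)+Q+q+1}=\boldsymbol{\Phi}_{\rm rx}\boldsymbol{\Delta}_q\boldsymbol{\Pi}^{l}\boldsymbol{\Phi}_{\rm tx}\mathbf{x}_{\rm p}$ with the block-diagonal DFT structure of $\boldsymbol{\Phi}_{\rm tx},\boldsymbol{\Phi}_{\rm rx}$ and the pilot pattern \eqref{eq:calP_ofdm}. Under the stated negligible-intra-symbol-Doppler hypothesis, a delay $l$ manifests only as a phase ramp across the $N_{\rm p,f}$ pilot subcarriers while a Doppler $q$ manifests only as a phase ramp across the $N_{\rm p,t}$ pilot symbols, so the $(l,q)$ column becomes the Kronecker product of a column of a partial-DFT matrix $\mathbf{A}_{\rm f}\in\mathbb{C}^{N_{\rm p,f}\times L}$ with a column of a partial-DFT matrix $\mathbf{A}_{\rm t}\in\mathbb{C}^{N_{\rm p,t}\times(2Q+1)}$; that is, $\mathbf{M}_{\rm p}\cong\mathbf{A}_{\rm f}\otimes\mathbf{A}_{\rm t}$. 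For SCM I would use the $(L-1)$-guard spacing in \eqref{eq:calP_scm} to argue that the $L$ delay taps land in \emph{disjoint} received coordinates, so the delay level is resolved exactly while the \emph{same} partial-DFT matrix $\mathbf{A}_{\rm t}$ senses the Doppler bins across the $N_{\rm p,t}=N_{\rm p}$ pilot instances, giving $\mathbf{M}_{\rm p}\cong\mathbf{I}_L\otimes\mathbf{A}_{\rm t}$.

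With the factorization established, I would invoke the standard RIP bound for randomly subsampled bounded orthonormal systems: for the Doppler factor $\mathbf{A}_{\rm t}$ of order $s_{\rm D}$ over an alphabet of size $2Q+1$, choosing $N_{\rm p,t}=\Omega(\delta_{\rm t}^{-2}\log^2(1/\delta_{\rm t})\log(s_{\rm D}/\delta_{\rm t})\,s_{\rm D}\log(2Q+1))$ pilots forces $\delta_{s_{\rm D}}(\mathbf{A}_{\rm t})\le\delta_{\rm t}$ with the stated exponentially small failure probability, and symmetrically for $\mathbf{A}_{\rm f}$ of order $s_{\rm d}$ over an alphabet of size $L$ with $N_{\rm p,f}$ pilots and tolerance $\delta_{\rm f}$. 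Then the Kronecker HiRIP lemma, which bounds $\delta_{s_{\rm d},s_{\rm D}}(\mathbf{B}_1\otimes\mathbf{B}_2)\le(1+\delta_{s_{\rm d}}(\mathbf{B}_1))(1+\delta_{s_{\rm D}}(\mathbf{B}_2))-1$, yields $\delta_{s_{\rm d},s_{\rm D}}\le\delta_{\rm t}+\delta_{\rm f}+\delta_{\rm t}\delta_{\rm f}$ for OFDM and, since $\delta_{s_{\rm d}}(\mathbf{I}_L)=0$, the sharper $\delta_{s_{\rm d},s_{\rm D}}\le\delta_{\rm t}$ for SCM; a union bound over the factor RIP events then produces the announced success probabilities, the $\min$ in the OFDM exponent reflecting the intersection of the two events.

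The hard part will be the exact tensor factorization: I expect to have to verify that, after removing the guard and non-pilot coordinates via $\mathbf{A}_{\cal P}$, the joint delay--Doppler coupling visible in \eqref{eq:y_output_integer} collapses to separable per-axis phase ramps. For OFDM this is precisely where the negligible-intra-symbol-Doppler hypothesis is indispensable---without it the two axes do not decouple and only an approximate Kronecker structure survives---and for SCM it is where the $\ge L-1$ guard length is needed to keep the delay blocks orthogonal. A secondary, more mechanical difficulty is to carry the polylogarithmic exponents and failure probabilities of the subsampled-Fourier RIP through the composition so that they match the stated scalings of $N_{\rm p,t}$ and $N_{\rm p,f}$. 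Throughout, the hierarchical sparsity of $\boldsymbol{\alpha}$ guaranteed by Lemma~\ref{lem:DS_HS} under Assumptions~\ref{assum:technical} and \ref{assum:technical_diag} is what makes these worst-case HiRIP bounds the relevant ones for the actual channel vector.
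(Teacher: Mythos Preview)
Your proposal is correct and follows essentially the same route as the paper's own proof: reduce $\mathbf{M}_{\rm p}$ (after harmless unitary/permutation rewritings) to a Kronecker product $\mathbf{I}_L\otimes\mathbf{A}_{\rm t}$ for SCM and $\mathbf{A}_{\rm f}\otimes\mathbf{A}_{\rm t}$ for OFDM with partial-Fourier factors, invoke the subsampled-Fourier RIP bound on each factor, then compose via the Kronecker HiRIP lemma of \cite{hierarchical}. The only minor slip is the closing sentence: Lemma~\ref{lem:DS_HS} needs only Assumption~\ref{assum:technical}, and in any case the HiRIP statement concerns the matrix alone, so the sparsity of $\boldsymbol{\alpha}$ is not part of this proof.
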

\begin{proof}
    The proof of the theorem is given in Appendix \ref{app:proof_theo_scm_ofdm}.
\end{proof}
\begin{theorem}[HiRIP for AFDM based measurements]
\label{theo:HiRIP_AFDM}
Assume $\boldsymbol{\Phi}_{\rm tx} = \pmb{\Lambda}_{c2} \mathbf{F}_N \pmb{\Lambda}_{c1}=\boldsymbol{\Phi}_{\rm rx}^{\rm H}$, $\left|c_1\right|=\frac{P_{\rm afdm}}{2N}$ and let $P_{\rm afdm}$ be the \emph{smallest} integer satisfying $(L-1)P_{\rm afdm}+ 2Q+1\geq s_{\rm d}s_{\rm D}$ and $\mathcal{P}=\mathcal{P}\left(N_{\rm p}\right)$ be defined by \eqref{eq:calP_afdm}. Then, under Assumptions \ref{assum:technical} and \ref{assum:technical_diag} and for sufficiently large $L$, $Q$, sufficiently small $\delta$, and

\begin{equation}
    \begin{multlined}
        N_{\rm p}=
        \Omega(\frac{1}{\delta^{2}}\log^2\frac{1}{\delta}\log\frac{\log (LP_{\rm afdm})}{\delta}\log(LP_{\rm afdm})\log\frac{Q}{P_{\rm afdm}})\:,
    \end{multlined}
\end{equation}
the HiRIP constant $\delta_{s_{\rm d},s_{\rm D}}$ of matrix $\mathbf{M}_{\rm p}$ satisfies $\delta_{s_{\rm d},s_{\rm D}}\leq\delta$ with probability $1-e^{-\Omega \left(\log{\left(2\lceil\frac{Q}{P_{\rm afdm}}\rceil+1\right)}\log{\frac{\log(LP_{\rm afdm})}{\delta}}\right)}$. 
\end{theorem}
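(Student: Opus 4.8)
The plan is to reduce the hierarchical RIP of the AFDM measurement matrix to the restricted isometry of a randomly subsampled bounded orthonormal system, exactly as is presumably done for SCM and OFDM in Appendix \ref{app:proof_theo_scm_ofdm}, but now over a single one-dimensional DAFT index into which the chirp folds both the delay and the Doppler axes. First I would make the column structure of $\mathbf{M}_{\rm p}$ explicit from \eqref{eq:y_output_integer}: with $|c_1|=\frac{P_{\rm afdm}}{2N}$, a pilot placed at DAFT index $m_p$ produces a response to path $(l,q)$ only at output index $k=(m_p+d)_N$ with the folded offset $d=q-P_{\rm afdm}l$, and the $m_p$-dependent part of the phase in \eqref{eq:y_output_integer} collapses to the single character $e^{-\imath 2\pi m_p(\frac{l}{N}+2c_2 d)}$. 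Adopting the standard normalization in which $2c_2 N$ is an integer, each column of $\mathbf{M}_{\rm p}$ is therefore a unimodular rescaling of the restriction to $\{m_p\}$ of a genuine length-$N$ DFT vector at frequency $(l+2c_2 N d)_N$, so that $\mathbf{M}_{\rm p}$ is a Fourier matrix sampled at the pilot locations, organized into blocks indexed by the offset $d$.

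Next I would translate the $(s_{\rm d},s_{\rm D})$ hierarchical sparsity of $\boldsymbol{\alpha}$ (guaranteed by Lemma \ref{lem:DS_HS}) into this folded index. The guard rule \eqref{eq:calP_afdm} confines each pilot response to a window of width $(L-1)P_{\rm afdm}+2Q+1$, and the defining property of $P_{\rm afdm}$ — the smallest integer with $(L-1)P_{\rm afdm}+2Q+1\ge s_{\rm d}s_{\rm D}$ — is precisely the statement that this window is wide enough to host the $s_{\rm d}s_{\rm D}$ active paths once the folding spreads them out. Concretely, the offset $d$ ranges over an interval of size $\Theta(L P_{\rm afdm})$, which plays the role of the block index, while for each fixed $d$ the admissible pre-images $(l,q=d+P_{\rm afdm}l)$ range over $\Theta(Q/P_{\rm afdm})$ delays, which play the role of the intra-block (frequency) index; I would show that, under Assumptions \ref{assum:technical} and \ref{assum:technical_diag}, the active support maps into this two-level folded index as a hierarchically sparse pattern whose block count and intra-block dimension are exactly the $L P_{\rm afdm}$ and $Q/P_{\rm afdm}$ appearing in the claimed sample complexity.

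Finally I would invoke the HiRIP-for-bounded-orthonormal-systems machinery of \cite{hierarchical} — the same concentration tool underlying Theorem \ref{theo:HiRIP_SCM_OFDM} — applied to the subsampled Fourier matrix of the first step with the folded block and intra-block dimensions of the second step. This yields $\delta_{s_{\rm d},s_{\rm D}}\le\delta$ as soon as $N_{\rm p}$ meets the displayed bound, the failure probability following directly from the associated tail estimate evaluated at those dimensions. The main obstacle is the middle step: unlike SCM and OFDM, whose measurement operators factor as a clean Kronecker product of a delay part and a Doppler part (hence the additive constant $\delta_{\rm t}+\delta_{\rm f}+\delta_{\rm t}\delta_{\rm f}$ in Theorem \ref{theo:HiRIP_SCM_OFDM}), the AFDM chirp collapses both axes onto one DAFT line, so the same delay mode $l$ is shared across the $s_{\rm D}$ offset blocks it activates and no exact product structure survives. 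I expect the delicate part to be verifying that the induced DFT frequencies $(l+2c_2 N d)_N$ stay distinct and form a well-conditioned bounded orthonormal system on every admissible support — so that this cross-block coupling does not inflate the coherence — and that the pigeonhole guarantee $(L-1)P_{\rm afdm}+2Q+1\ge s_{\rm d}s_{\rm D}$ genuinely prevents destructive collisions with the stated probability.
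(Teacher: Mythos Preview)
Your reorganization of $\boldsymbol{\alpha}$ by the folded DAFT offset is correct, and so is the phase computation. But your central structural claim --- that ``no exact product structure survives'' for AFDM --- is precisely the opposite of what actually happens, and this is where your plan goes off the rails.

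The paper's key step is to apply two overlap-add operations to the edge samples of each pilot interval, after which the samples at offset $l\in\Iintv{0,(L-1)P_{\rm afdm}}$ within every pilot interval depend \emph{only} on the entries $\boldsymbol{\alpha}_{\mathcal{D}_l}$ lying on the wrapped diagonal $\mathcal{D}_l=\{(\tilde l,q):(q+P_{\rm afdm}\tilde l)_{(L-1)P_{\rm afdm}+1}=l\}$. Different offsets are therefore completely decoupled, and the rearranged measurement matrix is an honest Kronecker product
\[
\widetilde{\mathbf{M}}_{\rm p}=\mathbf{I}_{(L-1)P_{\rm afdm}+1}\otimes\widetilde{\mathbf{M}}_{\mathcal{D}},
\]
with $\widetilde{\mathbf{M}}_{\mathcal{D}}$ a partial Fourier matrix of size $N_{\rm p}\times(2\lceil Q/P_{\rm afdm}\rceil+1)$ (times unimodular diagonals). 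The identity first factor means the first hierarchical level is sensed non-compressively ($\delta_{\tilde s_{\rm d}}=0$), so the Kronecker HiRIP bound from \cite{hierarchical} collapses to the ordinary RIP of the single partial Fourier block --- no ``cross-block coupling'' exists to worry about, and your anticipated obstacle is a phantom.

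The second piece you are missing is the value of the inner sparsity level after the reorganization. It is not $s_{\rm D}$ or $s_{\rm d}s_{\rm D}/(LP_{\rm afdm})$: one must show, via a Chernoff-type bound applied to $\tilde{S}_{{\rm D},l}=\sum_{(\tilde l,q)\in\mathcal{D}_l}I_{\tilde l,q}$ under Assumptions~\ref{assum:technical} and~\ref{assum:technical_diag} (independence across delay taps is essential here), that each diagonal carries at most $\tilde s_{\rm D}=(1+\epsilon)\log(LP_{\rm afdm})$ active entries with high probability. This logarithmic inner sparsity is exactly what enters the partial-Fourier RIP estimate and produces the $\log(LP_{\rm afdm})$ and $\log\frac{\log(LP_{\rm afdm})}{\delta}$ factors in the claimed sample complexity. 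Without this step, you cannot match the stated bound on $N_{\rm p}$.
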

\begin{proof}
The proof of the theorem is given in Appendix \ref{app:proof_theo}.
\end{proof}
When $P_{\rm afdm}\triangleq2N|c_1|$ is set to $2Q+1$, AFDM achieves full diversity \cite{BemaniAFDM_TWC} and the measurements are non-compressive. The setting $P_{\rm afdm}=1$, on the other hand, is the most compressive. By choosing the value of $P_{\rm afdm}$ between these two extremes, as in the theorem, each pilot instance provides in its $(L-1)P_{\rm afdm}+2Q+1$-long guard interval a number of measurements that is close, with high probability, to the number $s_{\rm d}s_{\rm D}$ of unknowns. Furthermore, the theorem states that $N_{\rm p}$, the number of AFDM pilot instances required to estimate the sparsity support has only logarithmic growth with respect to \emph{both} delay and Doppler spreads. This property is to be contrasted with the first-degree polynomial dependence of $N_{\rm p}$ on $s_{\rm d}$ or $s_{\rm D}$ in the case of SCM and OFDM.
\begin{corollary}[Recovery guarantee for compressive sensing of DS-LTV channels]
\label{cor:HiHTP_convergence}
If $\mathbf{M}_{\rm p}$ satisfies the conditions of Theorems \ref{theo:HiRIP_SCM_OFDM} or \ref{theo:HiRIP_AFDM}, the sequence $\hat{\boldsymbol{\alpha}}^{(k)}$ defined by Algorithm \ref{algo:HiHTP} satisfies
$\|\hat{\boldsymbol{\alpha}}^{(k)}-\boldsymbol{\alpha}\|\leq \rho^k\|\boldsymbol{\alpha}^{(0)}-\boldsymbol{\alpha}\|+\tau\left\|\mathbf{w}_{\rm p}\right\|$
where $\rho<1$ and $\tau$ are constants defined in \cite[Theorem 1]{hierarchical}.
\end{corollary}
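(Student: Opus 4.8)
The plan is to chain together three facts that the excerpt has already established: the hierarchical sparsity of the true channel vector $\boldsymbol{\alpha}$ (Lemma \ref{lem:DS_HS}), the smallness of the HiRIP constant of the measurement matrix $\mathbf{M}_{\rm p}$ (Theorems \ref{theo:HiRIP_SCM_OFDM} and \ref{theo:HiRIP_AFDM}), and the abstract geometric-convergence guarantee for HiHTP of \cite[Theorem 1]{hierarchical}. First I would invoke Lemma \ref{lem:DS_HS} to assert that, with probability $1-e^{-\Omega(\min((2Q+1)p_{\rm D},Lp_{\rm d}))}$, the vector $\boldsymbol{\alpha}$ is $(s_{\rm d},s_{\rm D})$-sparse in the sense of Definition \ref{def:hierarchical}. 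This places the recovery problem \eqref{eq:yp_ongrid} squarely inside the hierarchical-sparsity framework for which \cite[Theorem 1]{hierarchical} is formulated, so that Algorithm \ref{algo:HiHTP} is exactly the HiHTP instance analyzed there.

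Next I would recall the precise hypothesis of \cite[Theorem 1]{hierarchical}: the iterates satisfy $\|\hat{\boldsymbol{\alpha}}^{(k)}-\boldsymbol{\alpha}\|\leq\rho^k\|\boldsymbol{\alpha}^{(0)}-\boldsymbol{\alpha}\|+\tau\|\mathbf{w}_{\rm p}\|$ with an explicit $\rho<1$ and an explicit $\tau$ whenever the HiRIP constant of $\mathbf{M}_{\rm p}$, evaluated at a constant-inflated sparsity level of the form $(c\,s_{\rm d},c\,s_{\rm D})$ for a small absolute constant $c$, lies below a universal threshold $\delta^\star$. The heart of the argument is then to show that Theorems \ref{theo:HiRIP_SCM_OFDM} and \ref{theo:HiRIP_AFDM} can supply such a bound. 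Because both theorems are stated for \emph{sufficiently small} target HiRIP (respectively $\delta$ and $\delta_{\rm t},\delta_{\rm f}$), it suffices to set those targets below $\delta^\star$; the required pilot overhead then changes only in its hidden constants, since the overhead depends on the target level only through factors polynomial in the reciprocal target and polylogarithmic in $L$ and $Q$.

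Finally, a union bound over the failure event of Lemma \ref{lem:DS_HS} and the failure event of the relevant HiRIP theorem gives that, with probability approaching one at the stated exponential rate, $\boldsymbol{\alpha}$ is $(s_{\rm d},s_{\rm D})$-sparse \emph{and} $\mathbf{M}_{\rm p}$ has HiRIP constant at most $\delta^\star$. On this event, \cite[Theorem 1]{hierarchical} applies verbatim and yields the advertised bound with the constants $\rho$ and $\tau$ quoted from that reference, completing the proof.

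The one point demanding care — and the step I expect to be the main obstacle — is the bookkeeping between the sparsity level at which the HiRIP is controlled in Theorems \ref{theo:HiRIP_SCM_OFDM} and \ref{theo:HiRIP_AFDM}, namely $(s_{\rm d},s_{\rm D})$, and the inflated level $(c\,s_{\rm d},c\,s_{\rm D})$ that the convergence theorem actually requires. I would argue that re-running the HiRIP derivations of Appendices \ref{app:proof_theo_scm_ofdm} and \ref{app:proof_theo} with these inflated sparsity arguments alters only the hidden constants in the $\Omega(\cdot)$ overhead expressions, so that the advertised scaling — logarithmic in both delay and Doppler spreads for AFDM versus first-degree polynomial in $s_{\rm d}$ or $s_{\rm D}$ for SCM and OFDM — is preserved, and the threshold $\delta^\star$ can indeed be met within the stated overhead regime.
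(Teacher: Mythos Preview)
Your proposal is correct and follows essentially the same approach as the paper: invoke Theorems \ref{theo:HiRIP_SCM_OFDM}/\ref{theo:HiRIP_AFDM} to drive the HiRIP constant below the universal threshold of \cite[Theorem 1]{hierarchical} at the inflated sparsity level (the paper makes explicit the values $(3s_{\rm d},2s_{\rm D})$ and threshold $1/\sqrt{3}$), then apply that theorem verbatim. Your treatment is in fact slightly more careful than the paper's two-line proof, since you explicitly invoke Lemma \ref{lem:DS_HS} and the union bound to account for the probabilistic nature of the sparsity, whereas the paper leaves this implicit.
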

\begin{proof}
Thanks to Theorems \ref{theo:HiRIP_SCM_OFDM} and \ref{theo:HiRIP_AFDM}, matrix $\mathbf{M}_{\rm p}$ with large enough $L,Q,N_{\rm p}$ can be made to have a HiRIP constant that satisfies $\delta_{3s_{\rm d},2s_{\rm D}}<\frac{1}{\sqrt{3}}$. The conditions of \cite[Theorem 1]{hierarchical} are thus satisfied, and the corollary follows from that theorem.
\end{proof}
\begin{remark}
\label{rem:asym}
    To gain some insights into the above results, we examine them in an asymptotic regime defined with the help of an auxiliary variable $K$. The regime is characterized by $L\sim K$, $Q\sim K$, $s_{\rm d}\sim K^{\kappa_{\rm d}}$ and $s_{\rm D}\sim K^{\kappa_{\rm D}}$ for some $\kappa_{\rm d},\kappa_{\rm D}\in[0,1)$ where the notation $\sim$ stands for asymptotic equivalence. Note that, in this regime, the frame size satisfies $N\sim K^2$ due to the fact that $L,Q\sim K$ \cite{afdm_gc}. As for the fraction of $N$ dedicated to pilot and guard samples (including both prefix and data guard samples), its asymptotic behavior depends on the waveform. Due to Theorem \ref{theo:HiRIP_SCM_OFDM}, SCM overhead equals
    \begin{equation}
    \label{eq:scm_overhead}
        L-1+(2L-1)N_{\rm p,t}=\Omega\left((\log K)^2 K^{1+\kappa_{\rm D}}\right)\:\text{(SCM)}
    \end{equation}
    For OFDM, and due to the same theorem, the overhead equals
    \begin{equation}
    \label{eq:ofdm_overhead}
        \left(L-1+N_{\rm p,f}\right)N_{\rm p,t}=\Omega\left((\log K)^2 K^{1+\kappa_{\rm D}}\right)\:\text{(OFDM)}
    \end{equation}
    OFDM overhead is thus smaller than its SCM counterpart but has the same asymptotic growth. AFDM overhead satisfies
    \begin{equation}
    \label{eq:afdm_overhead}
    \begin{multlined}
        L-2+(N_{\rm p}+1)\left((L-1)P_{\rm afdm}+2Q+1\right)=\\
        \Omega\left(\log(\log K)(\log K)^2 K\right)
    \end{multlined}
    \text{(AFDM)}
    \end{equation}
    As for OTFS, its overhead satisfies
    \begin{equation}
    \label{eq:otfs_overhead}
        \begin{multlined}
            L-1+\min(4Q+1,N_{\rm otfs})\times\\
            \min(2L-1,M_{\rm otfs})=\Omega\left(K^2\right)
        \end{multlined}
        \text{(OTFS)}  
    \end{equation}
    
    Therefore, under delay-Doppler sparsity, AFDM overhead is asymptotically dominated by that of OFDM, SCM  and OTFS.
    \end{remark}
The next section is dedicated to relaxing the on-grid approximation to generalize the above results to more realistic propagation and channel prediction.

\section{Estimation and extrapolation of doubly sparse LTV channels with off-grid Doppler shifts}
\label{sec:off_grid}
While the on-grid approximation of Definition \ref{def:dD_sparsity} is useful for the conception of LTV channel estimation and sensing schemes and for the analysis of their performance, as we argued in Section \ref{sec:on_grid}, it lacks support for the finer Doppler resolution needed for channel prediction or the mitigation of channel aging. Indeed, most channel prediction paradigms \cite{haifan2022,prediction_cs2019,prediction_cs2024,channel_extension} involve, explicitly or implicitly, the estimation of Doppler frequency shifts to within an error margin smaller than the frequency resolution characteristic of the duration of the channel observation interval.
For that sake, we now present a second approximation for LTV channels that allows, in contrast to the first approximation in Definition \ref{def:dD_sparsity}, for fractional-valued Doppler frequency shifts.
\subsection{Off-grid doubly sparse linear time-varying channels}
We now allow for fractional-valued Doppler frequency shifts. In this case, only $\iota_p$ is zero, while $\kappa_p$ may take non-zero values sampled from a uniform distribution, i.e., $\kappa_p\sim\mathcal{U}([-\frac{1}{2},\frac{1}{2}]) $. Define $N_{\mathrm{D},l,q}$ as the number of ``sub-paths'', or ``rays'' using the terminology of clustered delay line (CDL) models \cite{3GPP}, which share the same integer part of their delays and Doppler shifts while only differing in the fractional part of their Doppler shifts:
\begin{equation}
    \label{NDq_offgrid}
    N_{\mathrm{D},l,q} \triangleq |\{ p\in\Iintv{1,N_{\rm p}} | l_p=l, q_p=q\}|\:.
\end{equation}
Depending on the scenario, it is in principle possible to model $N_{\mathrm{D},l,q}$ either as a fixed value or as a random variable, for example, a uniform random variable drawn over the range $\Iintv{1,N_{\rm D}}$. For simplicity, we opt for the first option, that is, $N_{\mathrm{D},l,q}=N_{\rm D},\forall l,q$ for some value $N_{\rm D}$ and we impose that the following channel power normalization should be satisfied.
\begin{equation}
    \label{eq:power_normalization_off_grid}
    \sum_{l=0}^{L-1}\sum_{q=-Q}^{Q}\sum_{i=1}^{N_{\mathrm{D}}}\mathbb{E}\left[\left|\alpha_{l,q,i}\right|^2I_{l,q}\right]=1\:.
\end{equation}

\begin{definition}[off-grid Delay-Doppler sparsity]
\label{def:offgrid_dD_sparsity}
The complex gain $h_{l,n}$ of the LTV channel $l$-th path varies with time as
 \begin{equation}
     \label{eq:off_ch_model}
     h_{l,n}=\sum_{q=-Q}^{Q}I_{l,q}\sum_{i=1}^{N_{\rm D}}\alpha_{l,q,i}e^{\imath 2\pi\frac{n(q+\kappa_{i})}{N}}, \quad l=0\cdots L-1
 \end{equation}
for some value $N_{\rm D}$, with \(I_{l,q}\) retaining the same description as in Definition \ref{def:dD_sparsity} and with $I_{l,q},\left\{\alpha_{l,q,i},\kappa_i\right\}_{i=1\cdots N_{\rm D}}$ being mutually independent. The complex gains satisfy $\alpha_{l,q,i}\sim\mathcal{CN}\left(0,\sigma_{\alpha}^2\right)$ with $\sigma_{\alpha}^2$ chosen so that \eqref{eq:power_normalization_off_grid} is respected.
\end{definition}
Figure \ref{fig:offgrid_examples} illustrates possible realizations of the above model. Note that the given examples can be seen as an off-grid relaxation of the three sparsity profiles from Figure \ref{fig:examples}.

\begin{remark}
Thanks to this relaxation, the channel model becomes compatible with CDL models. Indeed, a CDL channel is composed of rays, the delay and Doppler shifts of which have a random component around a mean value that is a function of the specific cluster to which these rays belong \cite{3GPP}. In our model, such rays are regrouped depending on the integer part of their normalized delay and Doppler frequency shifts, i.e., as a function of the nearest delay-Doppler grid point. The only difference is that, in our model, we allow for more randomness by introducing the random variables $I_{l,q}$ which determine which points of the delay-Doppler grid will be activated (while this activation is pre-determined in CDL models by the tables defining the mean delay and Doppler shifts characterizing each cluster of the model).
\end{remark}
\begin{remark}
    Maintaining the delay taps as integer values is not detrimental to the core message of this work, since the same approach we develop to deal with off-grid Doppler shifts can be extended to off-grid delays. Moreover, the effect of off-grid delays can still be made to fall under the current model by increasing the value of the model delay domain sparsity parameter $p_{\rm d}$ sufficiently to account for leakage due to the fractional part of the delay shifts.
\end{remark}

\begin{figure}
  \centering
  \begin{tabular}{ c @{\hspace{5pt}} c @{\hspace{5pt}} c}
  \includegraphics[width=.3\columnwidth] {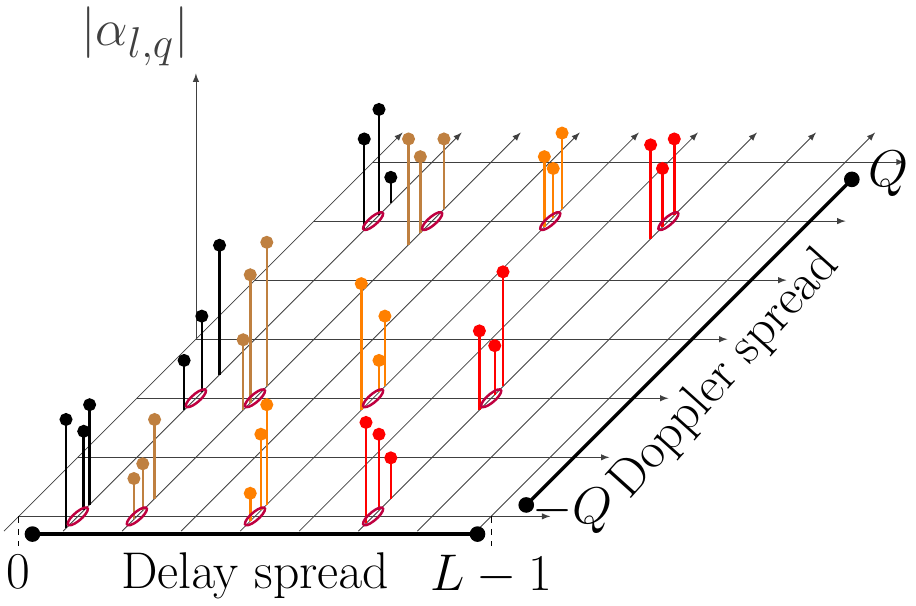} &
    \includegraphics[width=.3\columnwidth]{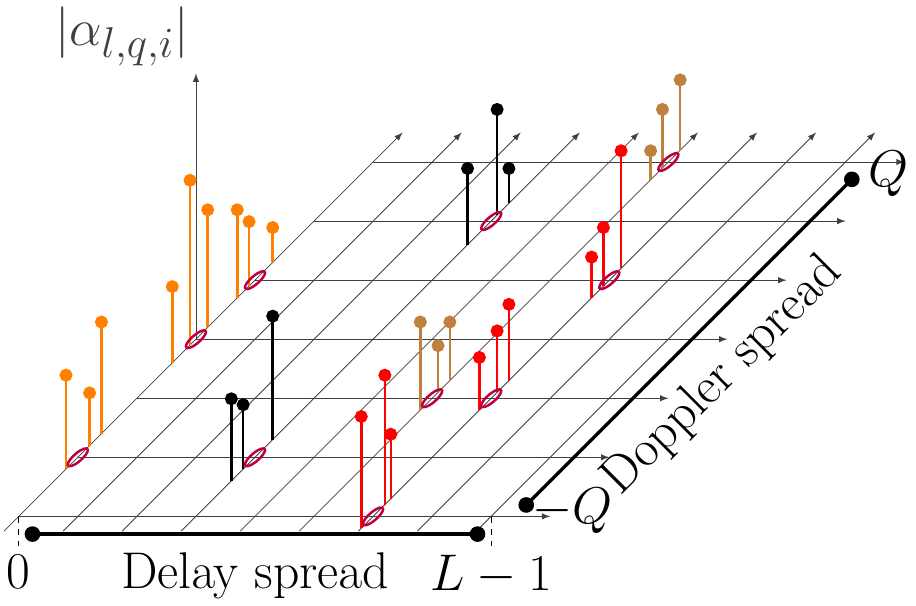} &
      \includegraphics[width=.3\columnwidth]{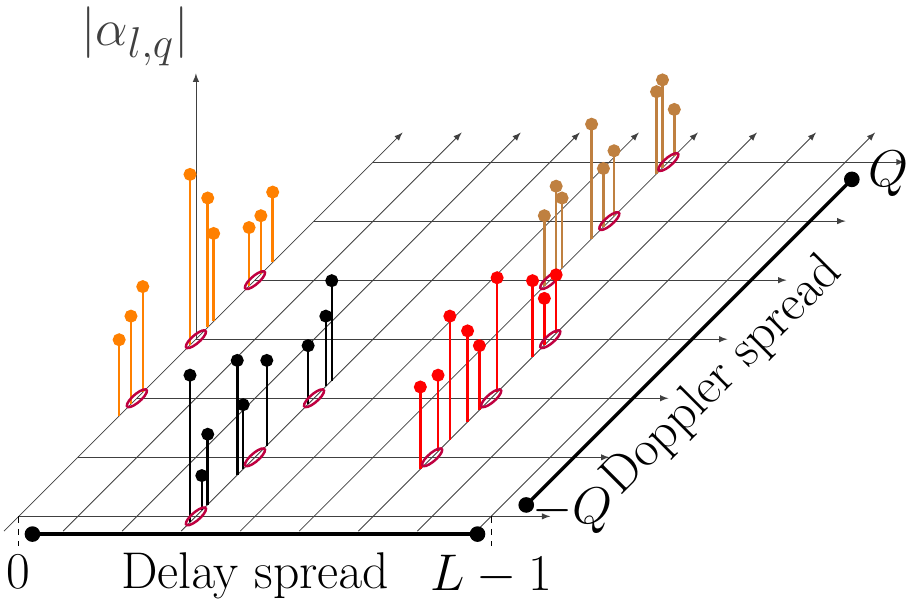} \\
    \small (a) &
      \small (b)&
      \small (c)
  \end{tabular}
  \caption{Three Instances of the delay-Doppler domain response of a DS-LTV channel satisfying Definition \ref{def:offgrid_dD_sparsity} and (a) off-grid Type-1, (b) off-grid Type-2, (c) off-grid Type-3 delay-Doppler sparsity}
  \label{fig:offgrid_examples}
  \vspace{-2mm}
\end{figure}


Estimating all the unknown parameters of the off-grid channel model in \eqref{eq:off_ch_model}, as in super-resolution approaches such as SBL and atomic-norm minimization \cite{sparse_sbl,kron_sbl,distributedCS_offgrid} or even just trying to identify them to the closest point of a finer delay-Doppler grid, as in refined-grid techniques \cite{Ganguly2019refinedgrid,Tian2016MicrowaveSC}, remains a challenging task. Indeed, the $N_{\rm D}$ Doppler frequency shifts $\{q+\kappa_i\}_{i=1\cdots N_{\rm D}}$ associated with a grid point $(l,q)$ differ only by their fractional parts, which renders the problem of estimating the corresponding complex coefficients $\{\alpha_{l,q,i}\}_{i=1\cdots N_{\rm D}}$ ill-conditioned. Moreover, the number of unknowns in the model, i.e., $N_{\rm p}$, could be prohibitively large. While grid refinement methods alleviate the dimensionality problem by constraining the sub-paths to a refined grid of a controlled size, they still suffer from ill-conditioning. For both these reasons, we propose a new model based on multiple ``elementary'' basis expansion models (BEM), each set with a bandwidth equal to the frequency resolution and shifted in frequency to be centered at one of the active grid points. Thanks to its optimality in terms of time-frequency localization \cite{fast_Slepian}, we opt for discrete prolate spheroidal sequences (DPSS) BEMs.

\subsection{Background: DPSS Basis Expansion Model}
 We begin with an overview of DPSS BEM modeling of a baseband signal $h_n$ that occupies a digital-frequency bandwidth $(-W,W)$ with $W\in\left(0,\frac{1}{2}\right)$. DPSS basis vectors $u_{b,n}^{(N,W)}$ ($b=1,\ldots,N$) are the eigenvectors of the prolate matrix \cite{BEM}:
\begin{equation}
    \label{eq:dpss_bem}
    \sum_{k=0}^{N-1}C_{k,n}^{(N,W)} u_{b,k}^{(N,W)}=\lambda_b^{(N,W)}u_{b,n}^{(N,W)}, n=0\cdots N-1
\end{equation}
where $C_{k,n}^{(N,W)}$ is the $(k-n)$-th entry of the prolate matrix:
\begin{equation}
    \label{eq:C}
    C_{k,n}^{(N,W)}= \frac{\sin{(2\pi W (k-n))}}{\pi (k-n)}\:.
\end{equation}
 The eigenvectors are normalized so that $\sum_{n=1}^{N}\left(u_{b,n}^{(N,W)}\right)^2=1$. The
eigenvalues are indexed in a decreasing order, $1 \geq \lambda_0^{(N,W)} \geq \ldots \geq \lambda_{N-1} ^{(N,W)}\geq 0$,  and display an energy-concentration property as $\lambda_b^{(N,W)}$ are clustered near $1$ for $b \leq 2WN$, and rapidly drop to zero for $b > 2WN$ \cite{DPSS}. 
In  what follows, we will employ multiple ``elementary'' DPSS BEMs, each of which is used to represent the channel signal component related to the $N_{\rm D}$ fractional Doppler shifts around one of the grid points $(l,q)$, i.e., 
$\sum_{i=1}^{N_{\rm D}} \alpha_{l,q,i} e^{\imath 2\pi \frac{n\kappa_{i}}{N}}$, and not the channel signal associated with the whole Doppler spread, i.e., $h_{l,n}$. 
Each of these BEMs is defined using \eqref{eq:dpss_bem} and \eqref{eq:C} with $W=\frac{1}{2N}$.
%
In the interest of simplicity, all parameters with the superscript $(N,W)$ in their notation, such as $u_{b,n}^{(N,W)}$, will be replaced by their simplified forms e.g., $u_{b,n}$.

\subsection{Off-grid modeling using multiple shifted elementary BEMs}
\label{sec:multi_shifted_BEM}
The complex gain $h_{l,n}$ of the $l$-th LTV channel path, as defined in \eqref{eq:off_ch_model}, is the sum of $2Q+1$ terms, each representing a signal occupying a digital-frequency band centered at $\frac{q}{N}$ and a bandwidth equal to $\frac{1}{N}$:
\begin{equation}
\label{eq:h_lqn}
    h_{l,q,n}\triangleq\sum_{i=1}^{N_{\rm D}}\alpha_{l,q,i}e^{\imath 2\pi\frac{n(q+\kappa_{i})}{N}}.
\end{equation}
The baseband version of $h_{l,q,n}$, defined as $e^{-\imath 2\pi\frac{nq}{N}}h_{l,q,n}$, can be modeled (see Figure \ref{fig:Q_bem_representation}) using a DPSS BEM of an order equal to $Q_{\rm BEM}$ (chosen large enough depending on the required modeling precision) and a bandwidth $(-W,W)$ with $W = \frac{1}{2N}$ by defining
\begin{equation}
    \label{eq:h_lqn_bem}
    \begin{multlined}
     h_{l,q,n}^{\rm BEM}\triangleq e^{\imath 2\pi\frac{nq}{N}}
     \sum_{b=1}^{Q_{\rm BEM}} \beta_{l,q,b} u_{b,n},\\
     \text{($q$-th shifted elementary BEM)}
     \end{multlined}
\end{equation}
 \begin{figure}
  \centering
  \begin{tabular}{ c  }
  \includegraphics[width=0.9\columnwidth]{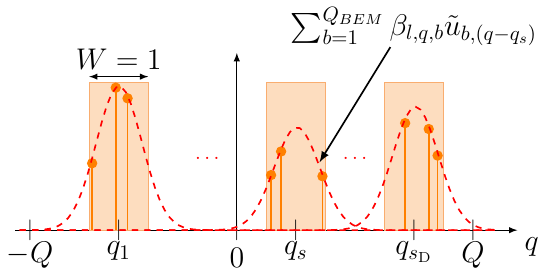} \\
  \small (a) stems: off-grid Doppler shifts, $\tilde{u}_{b,k}$: the DFT of $u_{b,n}$\\
    \includegraphics[width=1.0\columnwidth]{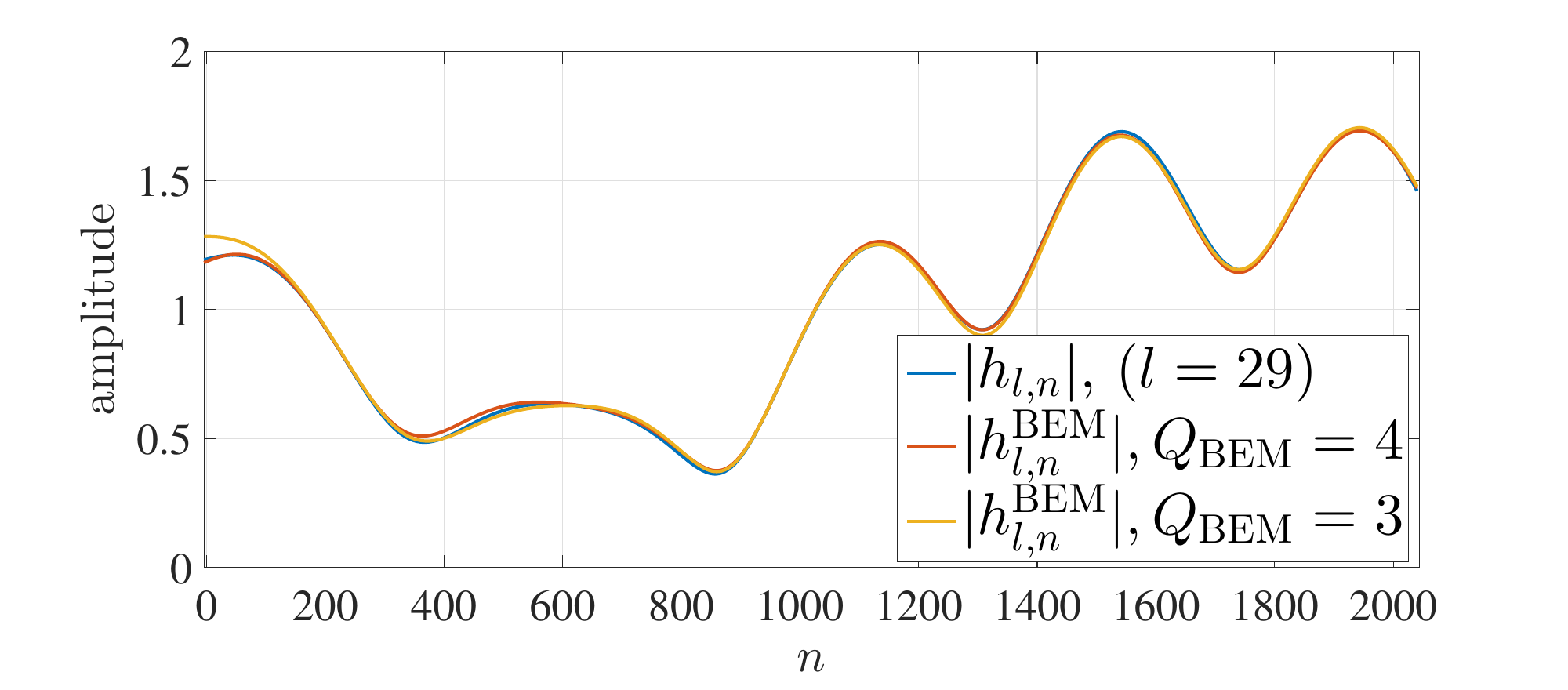} \\
      \small (b) $N=2048$, $W=\frac{1}{2N}$, $Q=7$, $S_{\rm D}=3$
  \end{tabular}
  \caption{(a)  The use of $S_{\rm D}$ frequency shifted copies of an elementary DPSS BEM $\left\{u_{b,n}\right\}_{b=1\cdots Q_{\rm BEM}}$ to capture the leakage due to $S_{\rm D}$ clusters of off-grid Doppler shifts $\left\{q_s+\kappa_i\right\}_{\substack{s=1\cdots S_{\rm D},\\i=1\cdots N_{\rm D}}}$ of an LTV channel tap $h_{l,n}$ (b) Precision of  this representation}
  \label{fig:Q_bem_representation}
  \vspace{-2mm}
\end{figure}

%

where the complex exponential $e^{\imath 2\pi\frac{nq}{N}}$ is used to shift, in frequency, the BEM term back to the digital frequency $\frac{q}{N}$. In vector form
\begin{align}
    \boldsymbol{\beta}_{l,q}&=\mathbf{U}_{Q_{\rm BEM}}^{\rm H}\mathbf{E}_{\frac{q}{N}}^{\rm H}\mathbf{h}_{l,q}\label{eq:BEM1}\\
    \mathbf{h}_{l,q}^{\rm BEM}&= \mathbf{E}_{\frac{q}{N}}\mathbf{P}^{\rm BEM} \mathbf{E}_{\frac{q}{N}}^{\rm H}\mathbf{h}_{l,q}\label{eq:BEM2}
\end{align}
with $\mathbf{U}_{Q_{\rm BEM}}\triangleq\left[\mathbf{u}_{1} \ldots \mathbf{u}_{Q_{\rm BEM}} \right]$ and $\mathbf{P}^{\rm BEM}=\mathbf{U}_{Q_{\rm BEM}}\mathbf{U}_{Q_{\rm BEM}}^{\rm H}$ being the orthogonal projection matrix associated with the DPSS BEM and $\mathbf{E}_{f}\triangleq\mathrm{diag}\left( e^{\imath 2\pi f 0} \ldots e^{\imath 2\pi f(N-1)}\right)$.
Inserting the shifted elementary BEMs, defined in \eqref{eq:h_lqn_bem} for each Doppler grid point, into \eqref{eq:off_ch_model} gives
\begin{equation}
\label{eq:ch_model_off_grid}
    \begin{multlined}
         h_{l,n}^{\rm BEM}=\sum_{q=-Q}^{Q} I_{l,q} e^{\imath 2\pi\frac{n q}{N}} \sum_{b=1}^{Q_{\rm BEM}} \beta_{l,q,b} u_{b,n},\\
    =\sum_{q=-Q}^{Q} I_{l,q}h_{l,q,n}^{\rm BEM},\:l=0\cdots L-1\\
    \text{(``multiple shifted elementary BEMs'' model)}
    \end{multlined}
\end{equation}

The precision of the above in relation to the order of its elementary BEMs is provided by the following theorem.
\begin{theorem}
    \label{theo:bem_accuracy}
    For any $\epsilon>0$, if we set $Q_{\rm BEM}\geq C\log\frac{1}{\epsilon}$ for some constant $C$ then the error of the model in \eqref{eq:ch_model_off_grid} satisfies $\sum_{l=0}^{L-1}\mathbb{E}[|h_{l,n}-h_{l,n}^{\rm BEM}|^2]<\epsilon$ for a sufficiently large $N$.
\end{theorem}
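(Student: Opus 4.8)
The plan is to turn the modeling error into a tail of prolate eigenvalues by exploiting a Karhunen–Lo\`eve identity: for $W=\frac{1}{2N}$ the prolate matrix $\mathbf{C}$ is \emph{proportional to the covariance} of the baseband tap, so the DPSS basis is its exact KL eigenbasis and $\mathbf{P}^{\rm BEM}$ is the optimal rank-$Q_{\rm BEM}$ projector. First I would expand, using \eqref{eq:off_ch_model} and \eqref{eq:ch_model_off_grid}, $h_{l,n}-h_{l,n}^{\rm BEM}=\sum_{q}I_{l,q}\,(h_{l,q,n}-h_{l,q,n}^{\rm BEM})$, and introduce the baseband residual $r_{l,q,n}\triangleq e^{-\imath2\pi\frac{nq}{N}}(h_{l,q,n}-h_{l,q,n}^{\rm BEM})$, so that \eqref{eq:BEM1}--\eqref{eq:BEM2} give $\mathbf{r}_{l,q}=(\mathbf{I}-\mathbf{P}^{\rm BEM})\mathbf{g}_{l,q}$ with $\mathbf{g}_{l,q}\triangleq\mathbf{E}_{q/N}^{\rm H}\mathbf{h}_{l,q}$. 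Since by Definition \ref{def:offgrid_dD_sparsity} the gains $\alpha_{l,q,i}$ are zero-mean, independent across $q$, and independent of $I_{l,q}$, every cross term in $q\neq q'$ vanishes in expectation and $I_{l,q}^2=I_{l,q}$ with $\mathbb{E}[I_{l,q}]=p_{\rm d}p_{\rm D}$, leaving
\begin{equation}
\sum_{l=0}^{L-1}\mathbb{E}\big[|h_{l,n}-h_{l,n}^{\rm BEM}|^2\big]=p_{\rm d}p_{\rm D}\sum_{l=0}^{L-1}\sum_{q=-Q}^{Q}\mathbb{E}\big[|r_{l,q,n}|^2\big].
\end{equation}

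The second step is the covariance--prolate identity. With $\kappa_i\sim\mathcal{U}([-\tfrac12,\tfrac12])$ one computes $\mathbb{E}[g_{l,q,k}\overline{g_{l,q,n}}]=N_{\rm D}\sigma_\alpha^2\,\mathrm{sinc}\!\big(\tfrac{k-n}{N}\big)$, and comparing with \eqref{eq:C} at $W=\frac{1}{2N}$ gives exactly $\mathbf{K}\triangleq\mathbb{E}[\mathbf{g}_{l,q}\mathbf{g}_{l,q}^{\rm H}]=N_{\rm D}\sigma_\alpha^2 N\,\mathbf{C}$. Hence the $\mathbf{u}_b$ are the KL eigenvectors of $\mathbf{g}_{l,q}$, $\mathbf{P}^{\rm BEM}=\mathbf{U}_{Q_{\rm BEM}}\mathbf{U}_{Q_{\rm BEM}}^{\rm H}$ is the best rank-$Q_{\rm BEM}$ approximation, and the residual covariance is $N_{\rm D}\sigma_\alpha^2 N\sum_{b>Q_{\rm BEM}}\lambda_b\,\mathbf{u}_b\mathbf{u}_b^{\rm H}$. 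Reading off its $(n,n)$ entry, substituting into the display above, and using the normalization $L(2Q+1)p_{\rm d}p_{\rm D}N_{\rm D}\sigma_\alpha^2=1$ from \eqref{eq:power_normalization_off_grid}, the whole sum collapses to the clean expression $N\sum_{b>Q_{\rm BEM}}\lambda_b\,|u_{b,n}|^2$.

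It then remains to bound this weighted tail. Because $W=\frac{1}{2N}$ fixes the time--bandwidth product $2WN=1$, the trace identity $\sum_b\lambda_b=N\,C_{0,0}=1$ shows the energy concentrates in the single plateau eigenvalue, while Slepian's asymptotics give, for $b$ past an $O(1)$ threshold and $N$ large, a super-exponential decay that I would use through the exponential upper bound $\lambda_b\le C_1e^{-C_2 b}$, together with the delocalization bound $N|u_{b,n}|^2\le C_3(1+b)$ inherited from the bounded continuous prolate functions $\phi_b$ (to which $\sqrt{N}\,u_{b,n}$ converge as $N\to\infty$). Combining these, $N\sum_{b>Q_{\rm BEM}}\lambda_b|u_{b,n}|^2\le C_3\sum_{b>Q_{\rm BEM}}(1+b)C_1e^{-C_2 b}\le C_4e^{-C_5 Q_{\rm BEM}}$, which falls below $\epsilon$ precisely when $Q_{\rm BEM}\ge C\log\frac{1}{\epsilon}$; the ``sufficiently large $N$'' hypothesis is exactly what legitimizes the discrete-to-continuous passage and makes edge samples negligible.

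The genuinely technical point is this last step: making the prolate eigenvalue decay and the $\ell^\infty$ delocalization of the DPSS \emph{uniform} in the regime $W=\frac{1}{2N}$, $N\to\infty$. Both are classical consequences of Slepian's theory and its discrete refinements, but invoking them with explicit constants---and checking that evaluating at a fixed sample $n$ rather than summing the total energy $\sum_n$ does not spoil the estimate---is where the real work lies. Everything upstream is the exact, and pleasantly clean, Karhunen--Lo\`eve reduction of the second step, in which the elementary BEM projector coincides with the MMSE low-rank approximation of the off-grid tap.
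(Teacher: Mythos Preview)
Your proposal is correct and follows the same overall route as the paper: decouple the error into independent per-$(l,q)$ contributions, recognize the DPSS as the Karhunen--Lo\`eve eigenbasis of the baseband tap (the paper does this via the PSD computation \eqref{eq:PSD} and the lemma from \cite{MSE_DPSS}, you via the direct covariance identity $\mathbf{K}=N_{\rm D}\sigma_\alpha^2 N\,\mathbf{C}$), express the residual as a prolate eigenvalue tail, bound that tail by exponential decay, and absorb constants via the power normalization \eqref{eq:power_normalization_off_grid}. The one substantive difference is the handling of the \emph{pointwise} error at a fixed $n$: the paper identifies $\mathbb{E}|h_{l,q,n}-h_{l,q,n}^{\rm BEM}|^2$ with the $n$-average $\frac{1}{N}\mathbb{E}\|\mathbf{h}_{l,q}-\mathbf{h}_{l,q}^{\rm BEM}\|^2$ in \eqref{eq:mse_dpss} and thereby lands on the unweighted tail $\sum_{b\geq Q_{\rm BEM}}\lambda_b^{(N,W)}$, which it then bounds by passing to PSWF eigenvalues via \cite[Theorem~2]{DPSS_PSWF} and invoking their decay \cite[Theorem~2.5]{lambda_PSWF}; you instead keep the weights $N|u_{b,n}|^2$ and introduce a DPSS delocalization bound to absorb them. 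Your route is more explicit about the pointwise nature of the statement---indeed the paper's pointwise-equals-average identification is itself the step that most deserves a justification---at the cost of needing the additional $\ell^\infty$ estimate; the paper's route is shorter and supplies concrete references for the eigenvalue decay that your sketch leaves to ``Slepian's asymptotics''.
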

\begin{proof}
    The proof of Theorem \ref{theo:bem_accuracy} is given in Appendix \ref{app:bem_accuracy}.
\end{proof}
Theorem \ref{theo:bem_accuracy} states that the number of DPSS basis functions needed to represent the channel component associated with a single delay-Doppler grid point grows only logarithmically with the inverse of the target precision. The following figure (as well as the numerical results of Section \ref{sec:simulations}) shows that as few as 4 DPSS basis functions ($Q_{\rm BEM}=4$) are sufficient to achieve good-enough precision. Note that this value is unrelated to the Doppler spread value $2Q+1$: a larger Doppler spread does not imply the need for a larger value of $Q_{\rm BEM}$.

\subsection{DS-LTV off-grid LMMSE channel estimation}
 After relaxing the integer-Doppler assumption, thanks to the ``multiple shifted elementary BEMs'' model, we now undertake another step towards the goal of practical channel estimation and prediction for sparse LTV channels. Indeed, while the compressed-sensing paradigm provided us with theoretical tools that helped in the analytical comparison of the performance of different waveforms, it is preferable to perform channel estimation using  LMMSE due to its computational-complexity advantage.
Let $\boldsymbol{\beta}$ be the vectorized form of the BEM coefficients $\{\beta_{l,q,b}\}_{\substack{b=1\cdots Q_{\rm BEM}\\l,q,I_{l,q}\neq0}}$ associated with the active delay-Doppler grid points. Note that the knowledge of the delay-Doppler sparsity support, i.e., of $\left\{I_{l,q}\right\}_{l=0\cdots L-1,q=-Q\cdots Q}$, at the receiver side is now assumed.
Define $\mathbf{B}$ as the following block-diagonal matrix
\begin{equation}
\label{eq:B}
\mathbf{B}=
\mathrm{blkdiag}\left(\widetilde{\mathbf{B}},\ldots,\widetilde{\mathbf{B}}\right)
\end{equation}
where $\widetilde{\mathbf{B}}\triangleq[\widetilde{\mathbf{B}}_{-Q}\ldots\widetilde{\mathbf{B}}_{Q}]$ is a $N \times (2Q+1)Q_{\rm BEM}$ matrix, $[\widetilde{\mathbf{B}}_{q}]_{n,b}=\frac{1}{\sqrt{N}} \tilde{U}_{b,(n-q)}$ and $\tilde{U}_{b,n}=\frac{1}{\sqrt{N}}\sum_{k = 0}^{N-1} u_{b,k} e^{-\imath 2\pi \frac{nk}{N}}$ is the DFT of the DPSS basis vector $u_{b,n}$. Let $\mathbf{A}_{\beta}$ be the matrix that places the blocks of $\boldsymbol{\beta}$, each of size $Q_{\rm BEM}$, within the positions corresponding to $I_{l,q} \neq 0$ in a larger vector of length $LQ_{\rm BEM}(2Q+1)$ resulting in a $LQ_{\rm BEM}(2Q+1)$-long vector $\mathbf{A}_{\beta}\boldsymbol{\beta}$ that is block sparse. Now, define
\begin{equation}
    \label{eq:alpha}
    \boldsymbol{\alpha}\triangleq \mathbf{B} \mathbf{A}_{\beta} \boldsymbol{\beta}\:.
\end{equation}
Inserting \eqref{eq:alpha} into \eqref{eq:yp_ongrid}, we can write the received pilot samples vector $\mathbf{y}_{\rm p}$ as
\begin{equation}
    \label{eq:yp}
     \mathbf{y}_{\rm p}=\underbrace{\mathbf{A}_{\cal P}\mathbf{M}\mathbf{B}\mathbf{A}_{\beta}}_{\triangleq\mathbf{M}_{\rm p}}\boldsymbol{\beta}+\mathbf{w}_{\rm p}\:.
\end{equation}
where $\mathbf{M}$ and $\mathbf{A}_{\cal P}$ are defined in \eqref{eq:yp_ongrid}.
The MMSE estimate, $\hat{\boldsymbol{\beta}}$, of $\boldsymbol{\beta}$ based on $\mathbf{y}_{\rm p}$ is given by \cite{kay2013fundamentals} as
 \begin{equation}
     \label{eq:lmmse_beta}
\hat{\boldsymbol{\beta}} = \mathbf{B} \left( \sigma_{\alpha}^2 \mathbf{M}_{\rm p}^H \mathbf{M}_{\rm p} + \sigma_w^2 \mathbf{I} \right)^{-1} \mathbf{M}_{\rm p}^H \mathbf{y}_{\rm p}\:.
 \end{equation}
Finally, define
 \begin{equation}
    \label{eq:hat_h_beta_def}
    \hat{h}_{l,n}^{\rm BEM}=\sum_{q=-Q}^{Q} I_{l,q} e^{\imath 2\pi\frac{n q}{N}} \sum_{b=1}^{Q_{\rm BEM}} \hat{\beta}_{l,q,b} u_{b,n}, \quad n=0,\ldots, N-1\:.
 \end{equation}
 as the resulting MMSE estimate of $h_{l,n}^{\rm BEM}$. In vector form
 \begin{equation}
    \label{eq:hat_vec_h_beta_def}
    \hat{\mathbf{h}}_{l}^{\rm BEM}=\sum_{q=-Q}^{Q} I_{l,q}\underbrace{\mathbf{E}_{\frac{q}{N}}\mathbf{U}_{Q_{\rm BEM}}\hat{\boldsymbol{\beta}}}_{\triangleq\hat{\mathbf{h}}_{l,q}^{\rm BEM}}\:.
 \end{equation}

\begin{remark}[Estimation computational complexity]
\label{rem:estimation_complexity}
    The first dominant operation, complexity-wise, in the computation of $\hat{\mathbf{h}}_{l}^{\rm BEM}$ is the matrix inversion in \eqref{eq:lmmse_beta}. The complexity order of this inversion is cubic in the size of the vector $\boldsymbol{\beta}$, i.e., $\left(s_{\rm d}s_{\rm D}Q_{\rm BEB}\right)^3\sim K^{3(\kappa_{\rm d}+\kappa_{\rm D})}$. Here, $s_{\rm d}$ and $s_{\rm D}$ are the delay domain and Doppler domain sparsity levels defined by \eqref{eq:sd} and \eqref{eq:sD}, respectively, while $\kappa_{\rm d}$ and $\kappa_{\rm D}$ are the parameters of their asymptotic order, as defined by Remark \ref{rem:asym}. Note that the above value is much smaller than $\left(L(2Q+1)Q_{\rm BEM}\right)^3$, the computational complexity of LMMSE channel estimation under no sparsity. The second dominant operation in the computation of $\hat{\mathbf{h}}_{l}^{\rm BEM}$ is matrix $\mathbf{B}$-vector multiplication, an operation that costs $Ns_{\rm d}s_{\rm D}Q_{\rm BEB}\sim K^{2+\kappa_{\rm d}+\kappa_{\rm D}}$. Whether the first or the second operation is the dominant term depends on whether $\kappa_{\rm d}+\kappa_{\rm D}$ is smaller or larger than 1.
\end{remark}

The mean squared error (MSE) conditioned on a given $I_{l,q}$ realization is then written as
\begin{align}
    \label{eq:MSE_BEM_hat}
    &\sum_{l=0}^{L-1}\frac{1}{N}\mathbb{E}\left[\left\|\mathbf{h}_{l}^{\rm BEM}-\hat{\mathbf{h}}_{l}^{\rm BEM}\right\|^2\right]\nonumber\\
    &\leq\nonumber\frac{1}{N}\sum_{l=0}^{L-1}\sum_{q=-Q}^{Q}I_{l,q}\mathbb{E}\left[\left\|\mathbf{h}_{l.q}^{\rm BEM}-\hat{\mathbf{h}}_{l,q}^{\rm BEM}\right\|^2\right]\\
    &=\frac{1}{NN_{\rm D}\sigma_\alpha^2}\mathbb{E}\left[\left\|\boldsymbol{\beta}-\hat{\boldsymbol{\beta}}\right\|^2\right]
\end{align}
where the inequality is due to the triangle inequality and where the equality is due to the definition of $\hat{\mathbf{h}}_{l,q}^{\rm BEM}$ in \eqref{eq:hat_vec_h_beta_def}, to the fact that $\sum_{l=0}^{L-1}\sum_{q=-Q}^{Q}I_{l,q}=\frac{1}{N_{\rm D}\sigma_\alpha^2}$ per \eqref{eq:power_normalization_off_grid} and to the fact that $\mathbf{E}_{\frac{q}{N}}^{\rm H}\mathbf{E}_{\frac{q}{N}}=\mathbf{I}_N$ and $\mathbf{U}_{Q_{\rm BEM}}^{\rm H}\mathbf{U}_{Q_{\rm BEM}}=\mathbf{I}_{Q_{\rm BEM}}$.
\begin{assumption}
    \label{assum:Np}
    The number $N_{\rm p}$ of pilots is sufficiently large for the MSE associated with estimating $\boldsymbol{\beta}$ (and hence $\mathbf{h}_{l}^{\rm BEM}$ per \eqref{eq:MSE_BEM_hat}) to converge to zero as $\sigma_w^2\to0$.
\end{assumption}
\begin{remark}[Impact of the sensing waveform]\label{rem:waveform}
    The value of $N_{\rm p}$ needed for Assumption \ref{assum:Np} to hold depends on the underlying waveform. The numerical results given in Section \ref{sec:simulations} show that the relative advantage of AFDM over OFDM and SCM, which has been analytically established by Theorems \ref{theo:HiRIP_SCM_OFDM} and \ref{theo:HiRIP_AFDM} in the on-grid case, is still valid in the off-grid case in the form of better MSE performance at comparable pilot overhead levels.
\end{remark}
Assumption \ref{assum:Np} is about the MSE of estimating the BEM representation of the channel. The following corollary to Theorem \ref{theo:bem_accuracy} provides an analysis of the estimation MSE when computed with respect to the actual channel defined by \eqref{def:offgrid_dD_sparsity} instead of its BEM representation.
\begin{corollary}
    \label{cor:MSE}
    Under Assumption \ref{assum:Np} and Definition \ref{def:offgrid_dD_sparsity}, provided that $N$ is large enough and $Q_{\rm BEM}\geq C\log\frac{1}{\epsilon}$ for $\epsilon>0$ and some constant $C$, then $\lim_{\sigma_w^2\to0}\mathbb{E}[\sum_{l=0}^{L-1}\frac{1}{N}\|\mathbf{h}_{l}-\hat{\mathbf{h}}_{l}^{\rm BEM}\|^2]\leq\epsilon$.
\end{corollary}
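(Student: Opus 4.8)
The plan is to bound the total error by splitting it, via the triangle inequality, into a \emph{modeling} (approximation) term that measures how well the multiple-shifted-elementary-BEM representation captures the true off-grid channel, and an \emph{estimation} term that measures how well the LMMSE estimate $\hat{\boldsymbol\beta}$ recovers the BEM coefficients $\boldsymbol\beta$. Concretely, for each delay tap $l$ I would write $\mathbf{h}_l-\hat{\mathbf{h}}_l^{\rm BEM}=(\mathbf{h}_l-\mathbf{h}_l^{\rm BEM})+(\mathbf{h}_l^{\rm BEM}-\hat{\mathbf{h}}_l^{\rm BEM})$, where $\mathbf{h}_l^{\rm BEM}$ is built from the \emph{true} coefficients $\boldsymbol\beta$ as in \eqref{eq:ch_model_off_grid} and $\hat{\mathbf{h}}_l^{\rm BEM}$ from $\hat{\boldsymbol\beta}$ as in \eqref{eq:hat_vec_h_beta_def}. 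Setting $A_l\triangleq\frac1N\|\mathbf{h}_l-\mathbf{h}_l^{\rm BEM}\|^2$ and $B_l\triangleq\frac1N\|\mathbf{h}_l^{\rm BEM}-\hat{\mathbf{h}}_l^{\rm BEM}\|^2$, the triangle inequality gives $\frac1N\|\mathbf{h}_l-\hat{\mathbf{h}}_l^{\rm BEM}\|^2\le A_l+B_l+2\sqrt{A_l B_l}$, so that after summing over $l$ and taking expectations it suffices to control $\mathbb{E}[\sum_l A_l]$, $\mathbb{E}[\sum_l B_l]$, and the cross term.

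For the modeling term I would invoke Theorem \ref{theo:bem_accuracy} directly. Since $\mathbb{E}[\sum_l A_l]=\frac1N\sum_{n=0}^{N-1}\sum_{l=0}^{L-1}\mathbb{E}[|h_{l,n}-h_{l,n}^{\rm BEM}|^2]$, choosing $Q_{\rm BEM}\ge C\log\frac1\epsilon$ and $N$ large enough makes each inner sum over $l$ bounded by $\epsilon$, hence $\mathbb{E}[\sum_l A_l]\le\epsilon$. For the estimation term I would use the already-derived relation \eqref{eq:MSE_BEM_hat}, which ties $\mathbb{E}[\sum_l B_l]$ (conditioned on a sparsity realization $\{I_{l,q}\}$) to $\frac{1}{NN_{\rm D}\sigma_\alpha^2}\mathbb{E}[\|\boldsymbol\beta-\hat{\boldsymbol\beta}\|^2]$; Assumption \ref{assum:Np} makes this conditional quantity vanish as $\sigma_w^2\to0$, and since $\{I_{l,q}\}$ takes values in a finite set with uniformly bounded conditional MSE, dominated convergence yields $\lim_{\sigma_w^2\to0}\mathbb{E}[\sum_l B_l]=0$.

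It then remains to dispatch the cross term and pass to the limit. By the Cauchy--Schwarz inequality, $\mathbb{E}[\sum_l\sqrt{A_l B_l}]\le\sqrt{\mathbb{E}[\sum_l A_l]}\,\sqrt{\mathbb{E}[\sum_l B_l]}$, which tends to $0$ as $\sigma_w^2\to0$ because $\mathbb{E}[\sum_l A_l]$ stays bounded by $\epsilon$ while $\mathbb{E}[\sum_l B_l]\to0$. Combining the three bounds and letting $\sigma_w^2\to0$ yields $\lim_{\sigma_w^2\to0}\mathbb{E}[\sum_l\frac1N\|\mathbf{h}_l-\hat{\mathbf{h}}_l^{\rm BEM}\|^2]\le\epsilon$, as claimed; any absolute constant introduced by rescaling the target precision inside Theorem \ref{theo:bem_accuracy} is absorbed into $C$.

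I expect the only genuinely delicate points to be two. First, ensuring the per-time-index bound of Theorem \ref{theo:bem_accuracy} holds uniformly enough in $n$ to survive the time-average: the DPSS approximation degrades near the block edges, but for large $N$ the few edge indices contribute negligibly to the $\frac1N\sum_n$ average, so the uniform version is harmless. Second, justifying the interchange of the limit $\sigma_w^2\to0$ with the expectation over the random support $\{I_{l,q}\}$, which is routine here because that support ranges over finitely many configurations and the conditional MSE is uniformly bounded by the normalized channel power. Everything else is a mechanical combination of the triangle inequality with the two previously established results.
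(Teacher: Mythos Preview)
Your proposal is correct and follows essentially the same route as the paper: decompose $\mathbf{h}_l-\hat{\mathbf{h}}_l^{\rm BEM}$ into the modeling error $\mathbf{h}_l-\mathbf{h}_l^{\rm BEM}$ (controlled by Theorem~\ref{theo:bem_accuracy}) and the estimation error $\mathbf{h}_l^{\rm BEM}-\hat{\mathbf{h}}_l^{\rm BEM}$ (controlled via \eqref{eq:MSE_BEM_hat} and Assumption~\ref{assum:Np}). The paper's proof is a one-sentence sketch that simply invokes the triangle inequality and the two ingredients; you are more careful in making the cross term explicit and in justifying the exchange of limit and expectation, but these are refinements of the same argument rather than a different approach.
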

\begin{proof}
    Apply the triangle inequality to $\mathbf{h}_{l}-\hat{\mathbf{h}}_{l}^{\rm BEM}=\left(\mathbf{h}_{l}-\mathbf{h}_{l}^{\rm BEM}\right)+\left(\mathbf{h}_{l}^{\rm BEM}-\hat{\mathbf{h}}_{l}^{\rm BEM}\right)$ followed by applying Assumption \ref{assum:Np} to the first term and Theorem \ref{theo:bem_accuracy} to the second.
\end{proof}
\subsection{Application to channel extrapolation and prediction}



The literature on the Slepian basis \cite{channel_extension} establishes a ``natural'' way to extend the finite sequences $u_{b,n}$ from the smaller interval $\Iintv{0,N-1}$ to the larger one $\Iintv{-N_{\text{ext}}, N + N_{\text{ext}}}$, where $N_{\text{ext}}$ denotes the additional channel samples to be extrapolated. This is achieved by letting the index $n$ in \eqref{eq:dpss_bem} be defined over $\mathbb{Z}$ instead of being confined to $\Iintv{0,N-1}$ leading to
\begin{equation}
    \label{eq:extension}
    u_{b,n}^{\rm ext}\triangleq\frac{1}{\lambda_{b}^{(N,W)}}\sum_{k=0}^{N-1} C_{k,n}^{(N,W)}u_{b,k}, n\in \mathbb{Z}\:.
\end{equation}
Signal $(u_{b,n}^{\rm ext})_{n\in\mathbb{Z}}$ (``$\mathrm{ext}$'' stands for ``extrapolation'') has a discrete-time Fourier transform (DTFT) that is zero outside $(-W,W)$ and is the signal that possesses the least energy outside the time interval $\Iintv{0,N-1}$ from among all the discrete-time signals band-limited to $(-W,W)$ \cite{channel_extension}. 




Once we have estimated the multiple-BEM representation of the off-grid DS-LTV channel on the interval $\Iintv{0,N-1}$ as in \eqref{eq:hat_h_beta_def} and once we have calculated the infinite-time version of the DPSS basis function as in \eqref{eq:extension}, the channel can thus be extrapolated for $n\in\mathbb{Z}$ as follows
\begin{equation}
\label{eq:dpss_ch_predictor}
    h_{l,n}^{\rm ext}\triangleq\sum_{q=-Q}^{Q} I_{l,q} \underbrace{e^{\imath 2\pi\frac{n q}{N}} \sum_{b=1}^{Q_{\rm BEM}} \hat{\beta}_{l,q,b} u_{b,n}^{\rm ext}}_{\triangleq h_{l,q,n}^{\rm ext}},\:l=0\cdots L-1\:.
\end{equation}
Here, we defined
\begin{equation}
    \label{eq:h_lqn_ext}
    h_{l,q,n}^{\rm ext}\triangleq
    e^{\imath 2\pi\frac{n q}{N}} \sum_{b=1}^{Q_{\rm BEM}}\hat{\beta}_{l,q,b} u_{b,n}^{\rm ext}\:,n\in\mathbb{Z}\:.
\end{equation}
Defining $\mathbf{u}_n^{\rm ext}\triangleq\left[u_{1,n}^{\rm ext}\quad\cdots\quad u_{Q_{\rm BEM},n}^{\rm ext}\right]^{\rm T}$, \eqref{eq:hat_vec_h_beta_def} gives
\begin{equation}
    \label{eq:h_lqn_ext_u}
    h_{l,q,n}^{\rm ext}=e^{\imath 2\pi\frac{n q}{N}}\left(\mathbf{u}_n^{\rm ext}\right)^{\rm T}\mathbf{U}_{Q_{\rm BEM}}^{\rm H}\mathbf{E}_{\frac{q}{N}}^{\rm H}\hat{\mathbf{h}}_{l,q}^{\rm BEM}\:.
\end{equation}
\vspace{-4.5mm}
\begin{remark}[Prediction computational complexity]
    The dominant cost, in terms of computational complexity, when computing $h_{l,n}^{\rm ext}$ using \eqref{eq:dpss_ch_predictor} is the number of complex multiplications needed for the computation of the extrapolated DPSS samples $\{u_{b,n}^{\rm ext}\}_{b=1\cdots Q_{\rm BEM}}$ using \eqref{eq:extension}, which is of the order of $Q_{\rm BEM}N\sim K^2$. Note that this complexity order is smaller than that of channel estimation, as established by Remark \ref{rem:estimation_complexity}.
\end{remark}

It is known \cite{fast_Slepian} that the $b$-th infinite-length DPSS has $1-\lambda_b^{(N,W)}$ (respectively $\lambda_b^{(N,W)}$) of its energy outside (respectively inside) the interval $\Iintv{0,N-1}$ as illustrated in Figure \ref{fig:dpss_ext_content}.
\begin{figure}
    \centering
    \includegraphics[width=1.0\columnwidth]{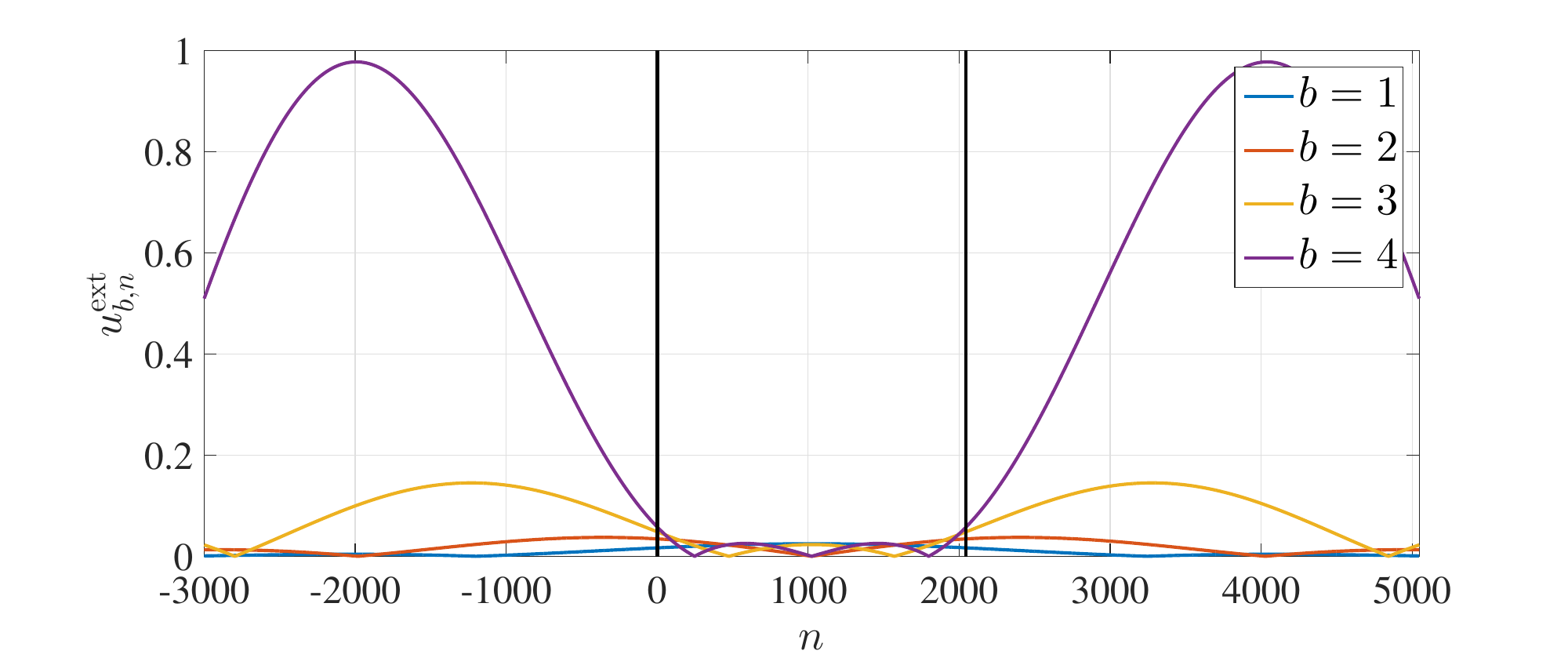}
    \caption{Extrapolated versions of the first four DPSS ($N=2048$, $W=\frac{1}{2N}$)}
    \label{fig:dpss_ext_content}
    \vspace{-2mm}
\end{figure}
Therefore, the terms of the sum that contribute the most to the value of $h_{l,q,n}^{\rm ext}=e^{\imath 2\pi\frac{n q}{N}} \sum_{b=1}^{Q_{\rm BEM}} \beta_{l,q,b} u_{b,n}^{\rm ext}$ for $n>N-1$ are the ones with large enough $u_{b,n}^{\rm ext}$, i.e., with small enough $\lambda_b^{(N,W)}$. While this constrains $Q_{\rm BEM}$ to be sufficiently large, it also constrains the SNR to be high enough to guarantee a precise estimation of the coefficients $\beta_{l,q,b}$ associated with basis functions that have little contribution to the signal received within the observation interval $\Iintv{0,N-1}$. The issue of setting the value of $Q_{\rm BEM}$ is further discussed in Section \ref{sec:simulations}.
Theorem \ref{theo:prediction} establishes the optimality of the predictor $h_{l,n}^{\rm ext}$ by relating it to reduced-rank MMSE.
\begin{theorem}
    \label{theo:prediction}
    Under the assumptions of Definition \ref{def:offgrid_dD_sparsity} and Assumption \ref{assum:Np} and for any $n>N-1$, the predictor $h_{l,n}^{\rm ext}$ defined by \eqref{eq:dpss_ch_predictor} converges, in the squared-mean sense in the limit of a vanishing noise variance, to a reduced-rank MMSE estimator of $h_{l,n}$ with a reduced rank equal to $Q_{\rm BEM}$.
\end{theorem}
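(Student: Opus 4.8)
The plan is to reduce the statement to a single active delay--Doppler grid point $(l,q)$ and to show that the DPSS extrapolation in \eqref{eq:dpss_ch_predictor} is, term by term, the reduced-rank Wiener predictor associated with the signal subspace spanned by the first $Q_{\rm BEM}$ DPSS vectors. First I would strip the known modulation $e^{\imath 2\pi nq/N}$ and collect the baseband component $\sum_{i=1}^{N_{\rm D}}\alpha_{l,q,i}e^{\imath 2\pi n\kappa_i/N}$ over the window into $\tilde{\mathbf h}_{l,q}\triangleq\mathbf E_{\frac qN}^{\rm H}\mathbf h_{l,q}$, which by construction is band-limited to $(-W,W)$ with $W=\frac{1}{2N}$. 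The crux of the argument is the observation that, under Definition \ref{def:offgrid_dD_sparsity} (i.e.\ independent zero-mean gains $\alpha_{l,q,i}$ of variance $\sigma_\alpha^2$ and i.i.d.\ $\kappa_i\sim\mathcal U([-\tfrac12,\tfrac12])$), the covariance kernel of this baseband signal is, for \emph{all} integers $m,n$,
\[
\mathbb{E}\!\left[\big(e^{-\imath 2\pi mq/N}h_{l,q,m}\big)\big(e^{-\imath 2\pi nq/N}h_{l,q,n}\big)^{*}\right]
= N_{\rm D}\sigma_\alpha^2\,\frac{\sin(\pi(m-n)/N)}{\pi(m-n)/N}
= N_{\rm D}\sigma_\alpha^2 N\,C_{m,n}^{(N,W)},
\]
where the second equality uses \eqref{eq:C} with $W=\frac{1}{2N}$. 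In other words, the channel autocorrelation restricted to the observation window $\Iintv{0,N-1}$ is exactly a scaled prolate matrix, and the cross-correlation between a future index $n>N-1$ and the window is the same scaled prolate kernel $C_{k,n}^{(N,W)}$ evaluated off the diagonal block.

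Given this identification, the DPSS basis diagonalizes the observation covariance through \eqref{eq:dpss_bem}, with eigenvalues $\lambda_b$ whose energy-concentration property singles out the first $Q_{\rm BEM}$ vectors $\mathbf u_1,\dots,\mathbf u_{Q_{\rm BEM}}$ as the dominant signal subspace. I would then write the rank-$Q_{\rm BEM}$ MMSE predictor of the future baseband sample from the clean window observation by projecting onto this subspace: the reduced statistic is $\mathbf U_{Q_{\rm BEM}}^{\rm H}\tilde{\mathbf h}_{l,q}=\boldsymbol\beta_{l,q}$ (the very BEM coefficients of \eqref{eq:BEM1}), the reduced covariance is $N_{\rm D}\sigma_\alpha^2 N\,\mathrm{diag}(\lambda_1,\dots,\lambda_{Q_{\rm BEM}})$, and the reduced cross-covariance has $b$-th entry $N_{\rm D}\sigma_\alpha^2 N\,\sum_{k=0}^{N-1}C_{k,n}^{(N,W)}u_{b,k}$. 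The common scalar $N_{\rm D}\sigma_\alpha^2 N$ cancels, and the predictor collapses to $\sum_{b=1}^{Q_{\rm BEM}}\beta_{l,q,b}\,\frac{1}{\lambda_b}\sum_{k=0}^{N-1}C_{k,n}^{(N,W)}u_{b,k}$. Recognizing the inner expression as precisely the extrapolated DPSS $u_{b,n}^{\rm ext}$ of \eqref{eq:extension}, the reduced-rank MMSE predictor equals $\sum_{b=1}^{Q_{\rm BEM}}\beta_{l,q,b}u_{b,n}^{\rm ext}$, which is exactly the $(l,q)$ summand of \eqref{eq:dpss_ch_predictor} once the modulation $e^{\imath 2\pi nq/N}$ is reinstated, but with the true coefficients $\beta_{l,q,b}$ in place of their estimates $\hat\beta_{l,q,b}$.

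It then remains to pass from $\beta_{l,q,b}$ to the estimates $\hat\beta_{l,q,b}$ produced by \eqref{eq:lmmse_beta} and to reassemble the full predictor. For the former I would invoke Assumption \ref{assum:Np}: as $\sigma_w^2\to0$ the LMMSE estimate $\hat{\boldsymbol\beta}$ converges in mean square to $\boldsymbol\beta$, and since $u_{b,n}^{\rm ext}$ is deterministic and finite for each fixed $n$, the predictor $h_{l,n}^{\rm ext}$ converges in mean square to the reduced-rank MMSE predictor constructed above. For the latter, the mutual independence across $q$ in Definition \ref{def:offgrid_dD_sparsity} makes the joint covariance block-diagonal in the Doppler index, so the reduced-rank MMSE decouples per active grid point and summing over $q$ with the indicators $I_{l,q}$ reproduces \eqref{eq:dpss_ch_predictor} verbatim. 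The main obstacle is the first step: establishing that the relaxation to uniform fractional shifts with bandwidth $W=\frac{1}{2N}$ turns the channel autocorrelation into \emph{exactly} the prolate kernel, since this is what aligns the optimal signal subspace with the DPSS subspace and makes the extrapolation formula \eqref{eq:extension} coincide with reduced-rank Wiener prediction rather than merely approximate it; a secondary care point is justifying that, in the vanishing-noise limit, observing $\mathbf y_{\rm p}$ is informationally equivalent to observing the reduced coordinates $\boldsymbol\beta$, which is exactly what Assumption \ref{assum:Np} guarantees.
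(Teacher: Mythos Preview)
Your proposal is correct and follows essentially the same route as the paper's proof: both identify the baseband autocorrelation of $h_{l,q,n}$ as a scaled prolate kernel $N_{\rm D}\sigma_\alpha^2 N\,C_{m,n}^{(N,W)}$, use the DPSS eigendecomposition to write the rank-$Q_{\rm BEM}$ Wiener predictor explicitly, recognize the resulting expression as $\sum_b\beta_{l,q,b}u_{b,n}^{\rm ext}$ via \eqref{eq:extension}, invoke Assumption~\ref{assum:Np} to pass from $\hat{\boldsymbol\beta}$ to $\boldsymbol\beta$ as $\sigma_w^2\to0$, and combine over $q$ by independence. The only cosmetic difference is that the paper organizes the argument as a three-term triangle inequality $E_1+E_2+E_3$ (separating the noise-limit step, the projection step $\mathbf h_{l,q}^{\rm BEM}$ vs.\ $\mathbf h_{l,q}$, and the filter-equality step), whereas you compute the reduced-rank predictor directly from the reduced statistic $\boldsymbol\beta_{l,q}$ and then take the limit, which effectively merges the paper's $E_2=0$ and $E_3=0$ into a single computation.
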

\begin{proof}
The proof of Theorem \ref{theo:prediction} is given in Appendix \ref{app:prediction_proof}.    
\end{proof}
\vspace{-4.5mm}
\begin{remark}[Relation to other DPSS extrapolation prediction methods]\label{rem:DPSS_prediction}
    The DPSS-based multi-band single-BEM modeling in \cite{channel_extension} uses a single BEM of order $\Check{Q}_{\rm BEM}$ to represent the multi-band signal $h_{l,n}=\sum_{q=-Q}^{Q}I_{l,q}\sum_{i=1}^{N_{\rm D}}\alpha_{l,q,i}e^{\imath 2\pi\frac{n(q+\kappa_{i})}{N}}$ of each channel delay tap to give
    \begin{equation}
        \label{eq:check_h_beta_def}
        \Check{h}_{l,n}^{\rm BEM}=\sum_{b=1}^{\Check{Q}_{\rm BEM}} \Check{\beta}_{l,q,b} \Check{u}_{b,n}, \quad n=0,\ldots, N-1\:.
    \end{equation}    
    where $\Check{u}_{b,k}^{(N,W)}$ are the basis vectors of the single multi-band BEM defined for $b=1,\ldots, N$ as
    \begin{equation}
    \label{eq:dpss_onebem}
    \sum_{k=0}^{N-1}\Check{C}_{k,n}^{(N,W)} \Check{u}_{b,k}=\Check{\lambda}_b^{(N,W)}\Check{u}_{b,n}, n=0\cdots N-1\:,
\end{equation}
\begin{equation}
    \label{eq:C_onebem}
    \Check{C}_{k,n}^{(N,W)}\triangleq\sum_{\substack{q=-Q\\ I_{l,q}=1}}^{Q}e^{\imath 2 \pi\frac{q(k-n)}{N}}\frac{\sin{(2\pi W (k-n))}}{\pi (k-n)}\:.
\end{equation}
The associated channel predictor is 
\begin{equation}
\label{eq:single_dpss_ch_predictor}
    \Check{h}_{l,n}^{\rm ext}\triangleq \sum_{b=1}^{\Check{Q}_{\rm BEM}} \Check{\beta}_{l,q,b} \Check{u}_{b,n}^{\rm ext},\:l=0\cdots L-1\:,\:n\in\mathbb{Z}\:,
\end{equation}
where the associated extrapolated basis vector is given by
\begin{equation}
    \label{eq:single_dpss_extension}
    \Check{u}_{b,n}^{\rm ext}\triangleq\:\frac{1}{\Check{\lambda_{b}}^{(N,W)}}\sum_{k=0}^{N-1} \Check{C}_{k,n}^{(N,W)}\Check{u}_{b,k}, n\in \mathbb{Z}\:.
\end{equation}
In terms of computational complexity, the extrapolation of the multi-band single-BEM approach as done in \eqref{eq:single_dpss_ch_predictor} is equivalent to the extrapolation step of the ``multiple shifted elementary BEMs'' approach in \eqref{eq:dpss_ch_predictor}. The main issue with the multi-band single-BEM approach is the size of the codebook that needs to be computed or stored at the network device performing channel estimation. As can be seen from \eqref{eq:yp}, the codebook size in our approach is the number of columns of the matrix $\mathbf{MB}$ which equals $Q_{\rm BEM}L(2Q+1)$. In the case of the multi-band single-BEM approach, the codebook size is $L\sum_{k=1}^{2Q+1}{2Q+1\choose k}\Check{Q}_{\rm BEM}(k)\gg Q_{\rm BEM}L(2Q+1)$ as every different combination of $k$ active Doppler grid points would result in a different multi-band prolate matrix $\Check{C}_{k,n}^{(N,W)}$ \eqref{eq:C_onebem} and thus in a different DPSS basis \eqref{eq:dpss_onebem}.
\end{remark}

\section{Numerical results}
\label{sec:simulations}
We first present the results pertaining to the estimation of DS-LTV channels with the on-grid approximation of Definition \ref{def:dD_sparsity}. AFDM sparse recovery performance is compared to that of OFDM and OTFS. Note that OFDM pilot overhead results can be indicative of the pilot overhead performance of SCM and other waveforms using time domain embedded pilots such as IFDM.
We used 100 realizations of channels having a Type-1 delay-Doppler sparsity with $p_{\rm d}=0.2$, $p_{\rm D}\in\{0.2,0.4\}$ and $N=4096, L=30, Q=7$ (corresponding to a 30 MHz transmission at a 70 GHz carrier frequency, a maximum target moving speed of 396 km/h and a maximum target range of 300 meters, or equivalently to 10 MHz transmission at a 30 GHz carrier frequency, a maximum speed of 57 km/h and a maximum target range of 900 meters). For both OFDM and AFDM, sparse recovery of $\boldsymbol{\alpha}$ is done using HiHTP (Algorithm \ref{algo:HiHTP}) based on the pilot pattern described in Figures \ref{fig:OFDM_pilot_pattern} and \ref{fig:AFDM_pilot_pattern}, respectively. For OTFS, since sensing is done without compression, non-compressive estimation algorithms can be used based on the pilot pattern of Fig. \ref{fig:pilot_pattern_otfs} \cite{BemaniAFDM_TWC}. For each waveform, the pilot overhead was set in such a way that the mean squared error ${\rm MSE}\triangleq\mathbb{E}[\left\|\hat{\boldsymbol{\alpha}}-\boldsymbol{\alpha}\right\|^2]$ is approximately $10^{-4}$ at $\mathrm{SNR}=20$ dB. Fig. \ref{fig:simulations1_2} shows an advantage of AFDM in terms of pilot overhead, i.e., the number of samples in each frame needed as pilots and guards to achieve the target MSE performance. This performance validates the theoretical result, given in Theorems \ref{theo:HiRIP_SCM_OFDM} and \ref{theo:HiRIP_AFDM}, implying that the overhead needed to achieve the same HiRIP level is smaller for AFDM compared to SCM and OFDM.
 \begin{figure}
  \centering
  \begin{tabular}{ c  }
  \includegraphics[width=0.9\columnwidth]{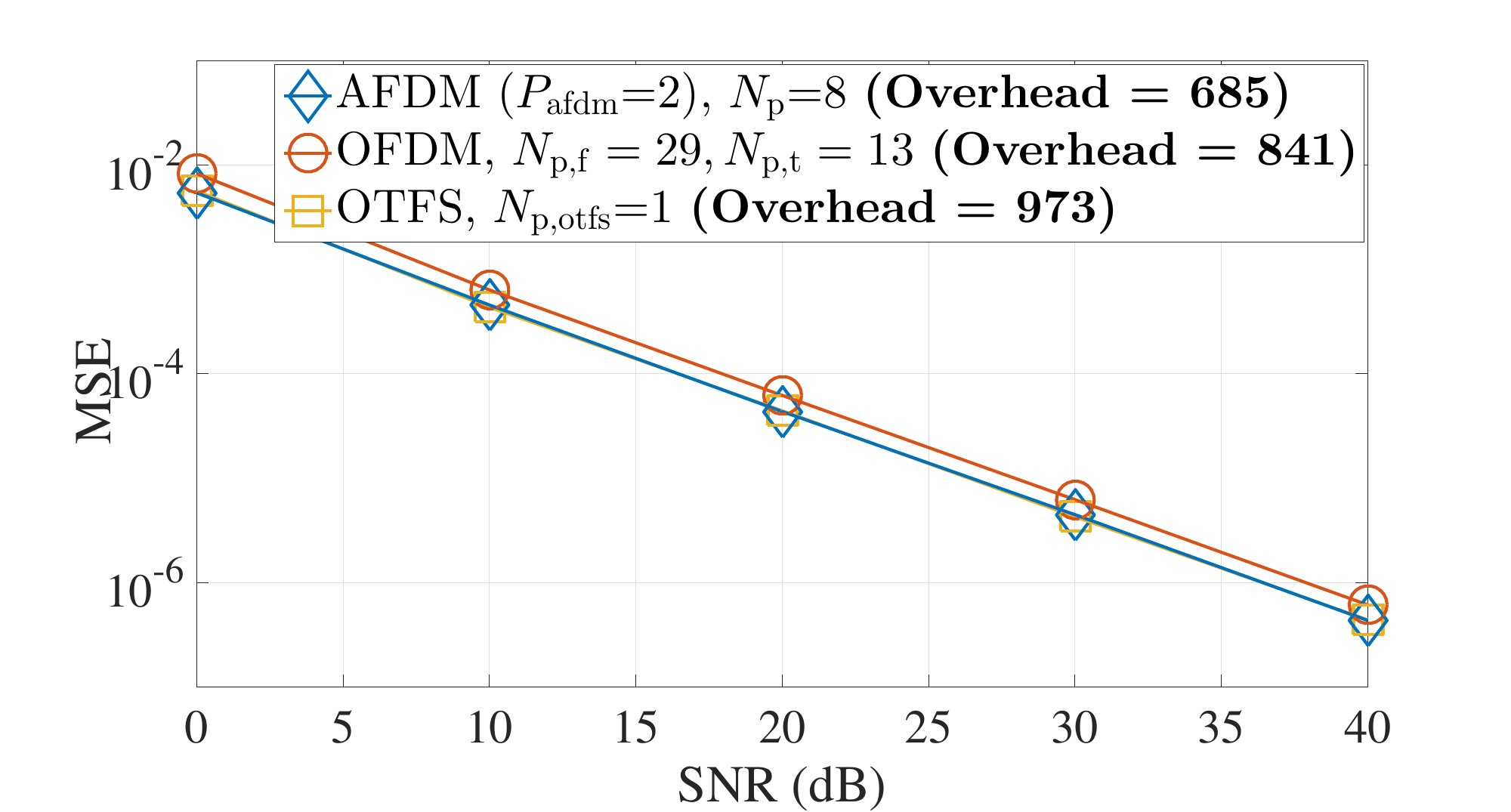} \\
  \small (a) $p_{\rm D}=0.2$\\
    \includegraphics[width=0.9\columnwidth]{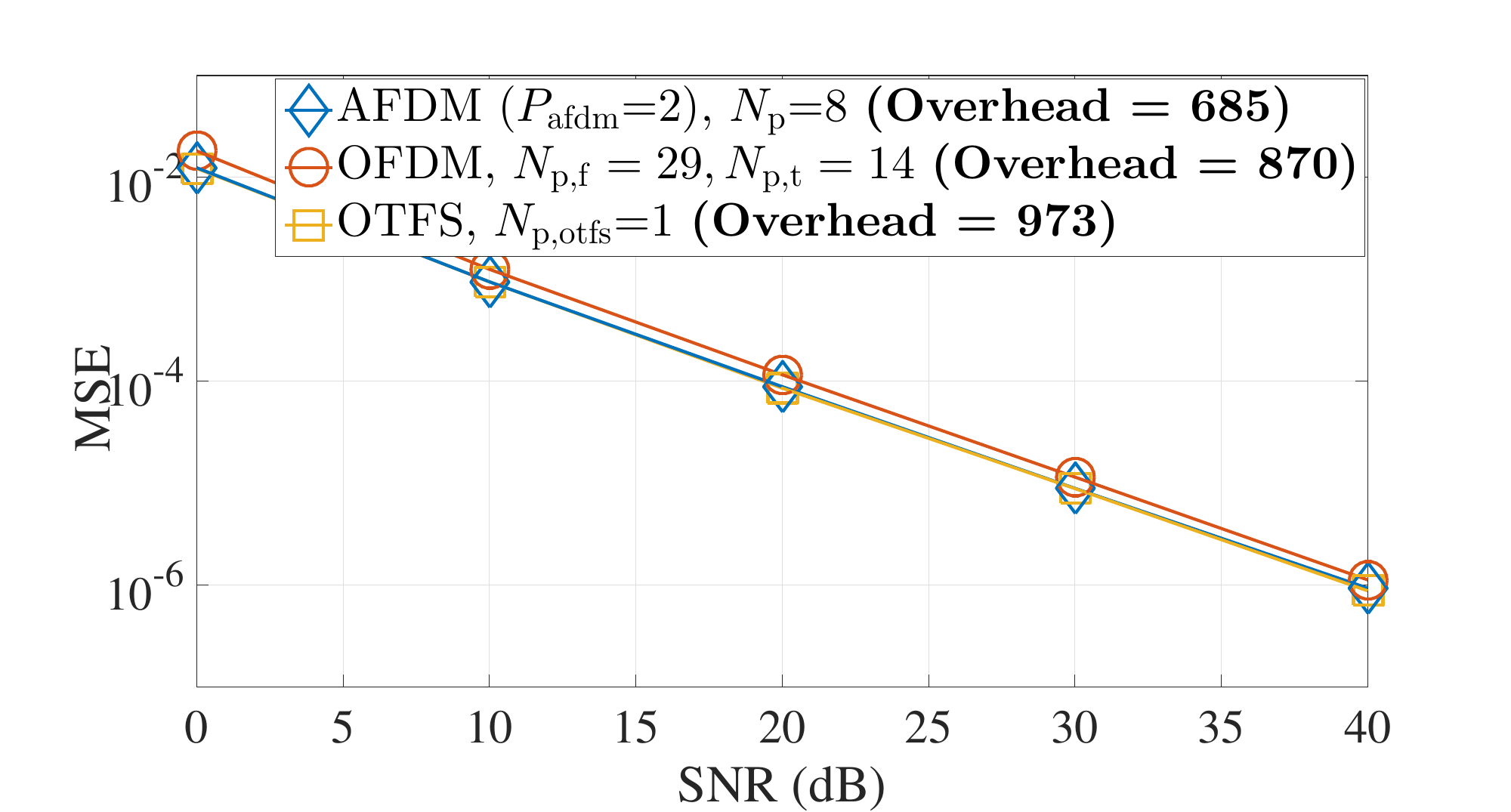} \\
      \small (b) $p_{\rm D}=0.4$
  \end{tabular}
  \caption{On-grid sparse-recovery MSE and pilot overhead for $N=4096, L=30, Q=7, p_{\rm d}=0.2$, $N_{\rm ofdm,symb}=16$, $N_{\rm otfs}=16$, $M_{\rm otfs}=256$ (overhead comparison done at comparable MSE levels for the three waveforms using \eqref{eq:ofdm_overhead}, \eqref{eq:afdm_overhead} and \eqref{eq:otfs_overhead})}
  \label{fig:simulations1_2}
\end{figure}

In Fig. \ref{fig:overhead_comparison}, we compare the pilot overhead (computed as in Fig. \ref{fig:simulations1_2}) needed by the three waveforms in four different configurations of the delay and Doppler sparsity level parameters, namely $(s_{\rm d},s_{\rm D})=(0.1,0.2)$, $(s_{\rm d},s_{\rm D})=(0.2,0.2)$, $(s_{\rm d},s_{\rm D})=(0.2,0.4)$ and $(s_{\rm d},s_{\rm D})=(0.4,0.2)$. Note that AFDM, configured with $P_{\rm afdm}=1$ for the first scenario and with $P_{\rm afdm}=2$ for the remaining ones, has the smallest overhead in all four scenarios. Also note that for OTFS, the overhead is the same in the four scenarios, as its pilot resources are determined as a function of the maximum delay and Doppler spreads, and not as a function of sparsity levels. This also hints at the possibility that the performance advantage over OTFS tends to diminish as the channel becomes less sparse. This intuition is validated by the figure, which shows a smaller gap in overhead values in the fourth sparsity scenario. The bit error rate (BER) and spectral efficiency (SE) of the considered waveforms in the first and fourth sparsity scenarios are given in Fig. \ref{fig:ber_se} assuming quadrature quadrature phase-shift keying (QPSK) data symbols. While OTFS achieves the best BER, as it is a full-diversity waveform, AFDM can achieve a very close BER performance by adjusting its chirp parameter $c_1=\frac{P_{\rm afdm}}{2N}$ to the sparsity level. However, the gap in performance is much wider in favor of AFDM when pilot overhead is considered while evaluating spectral efficiency.
\begin{figure}
    \centering
    \includegraphics[width=1.0\columnwidth]{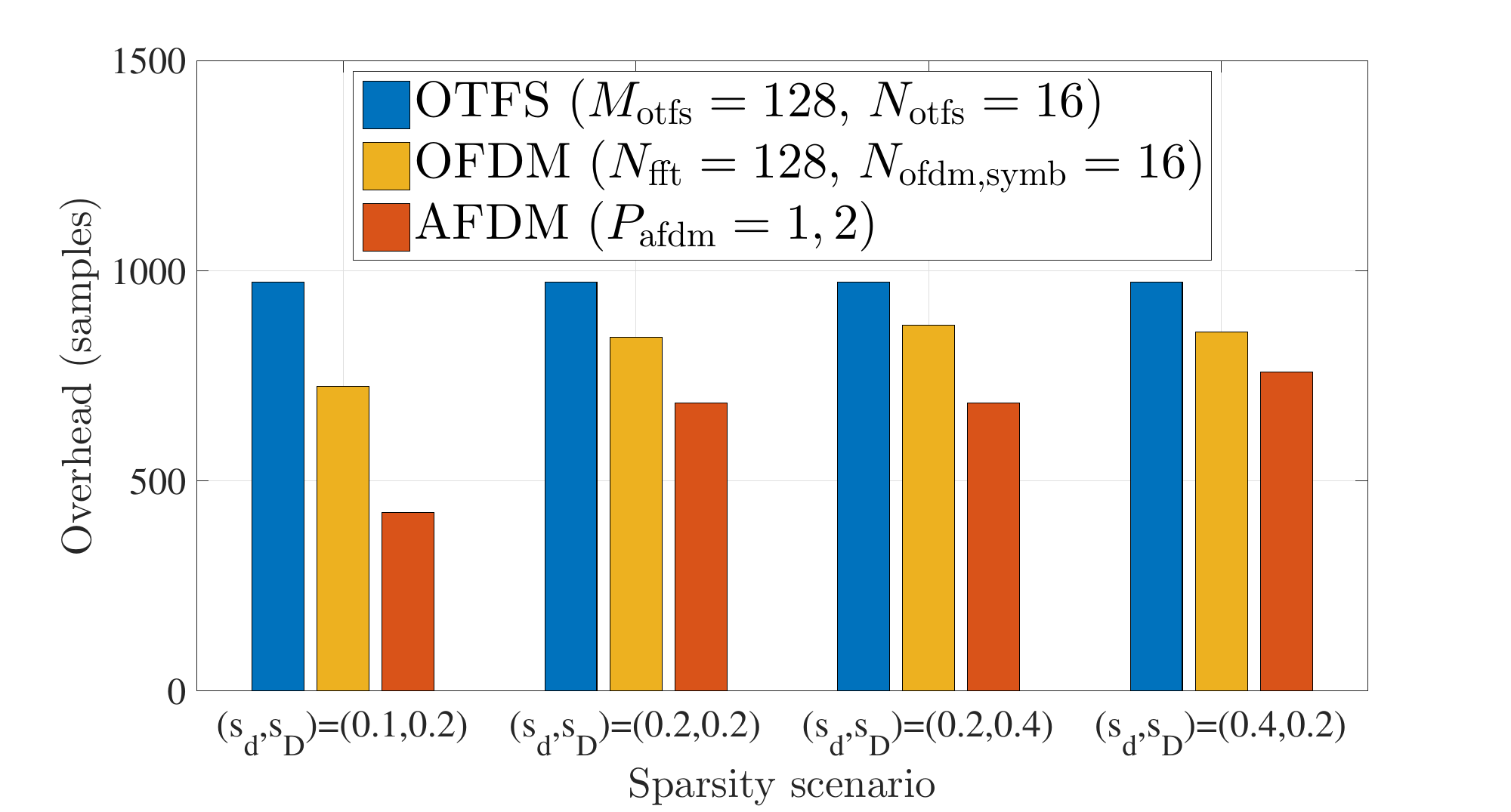}
    \caption{Pilot overhead for sparse recovery ($N=2048, L=30, Q=7$)}
\label{fig:overhead_comparison}
\vspace{-2mm}
\end{figure}
\begin{figure}
  \centering
  \begin{tabular}{ c  }
  \includegraphics[width=0.8\columnwidth]{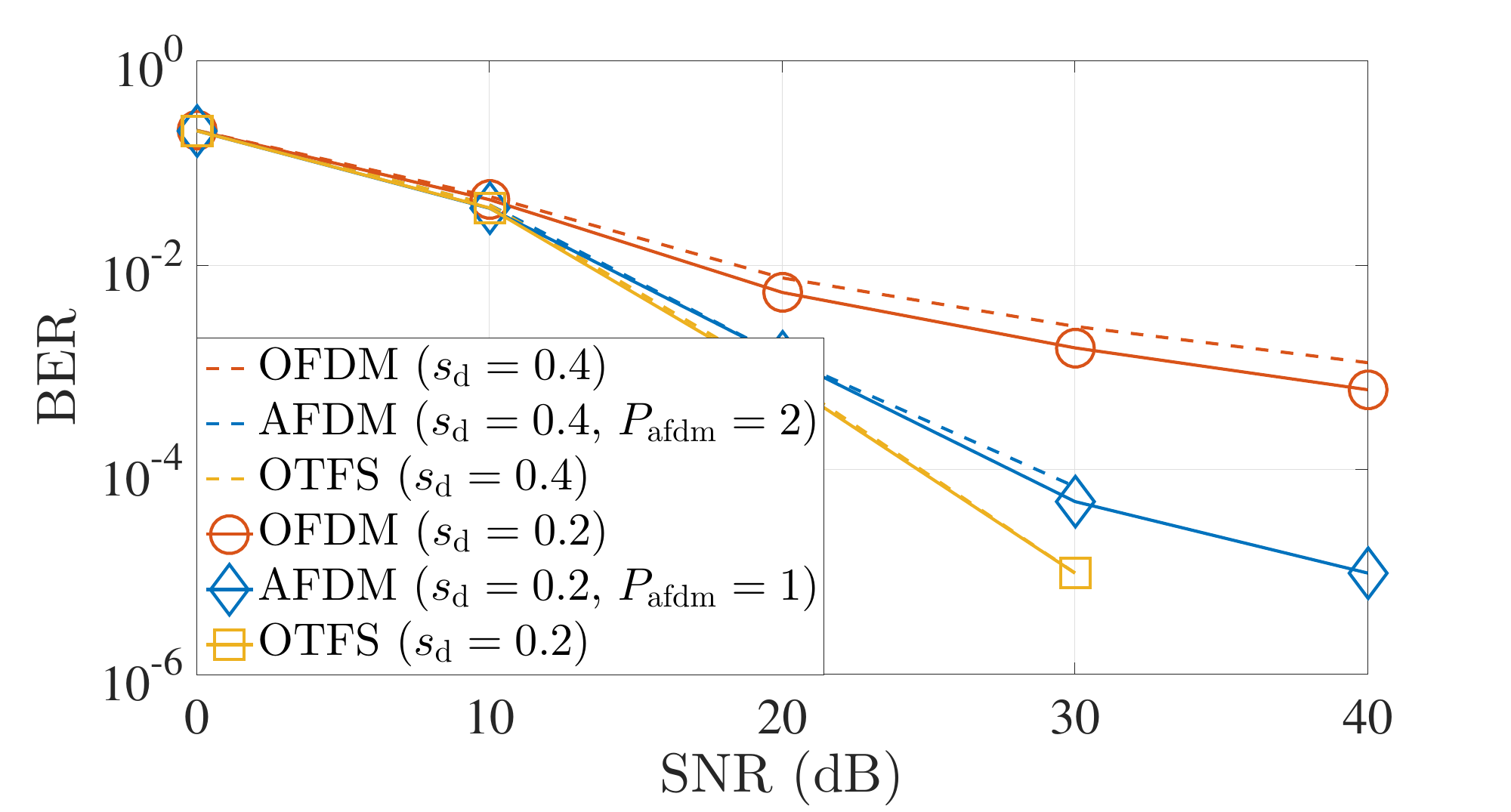} \\
  \small (a) BER \\
    \includegraphics[width=0.8\columnwidth]{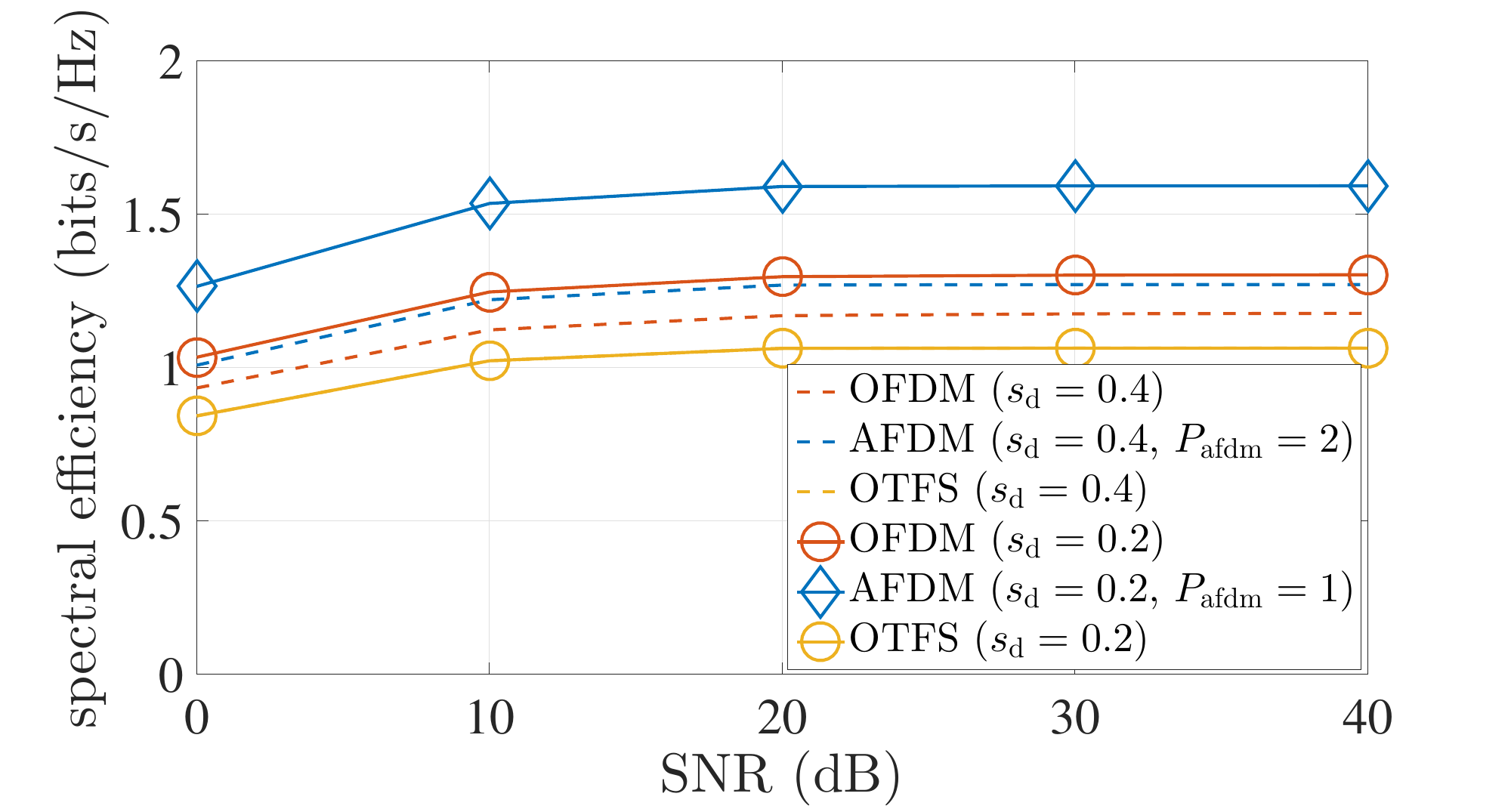} \\
      \small (b) SE
  \end{tabular}
  \caption{ BER and SE comparison ($N=2048, L=30, Q=7, p_{\rm D}=0.2$)}
  \label{fig:ber_se}
  \vspace{-2mm}
\end{figure}
 \begin{figure}
  \centering
  \begin{tabular}{ c  }
  \includegraphics[width=0.8\columnwidth]{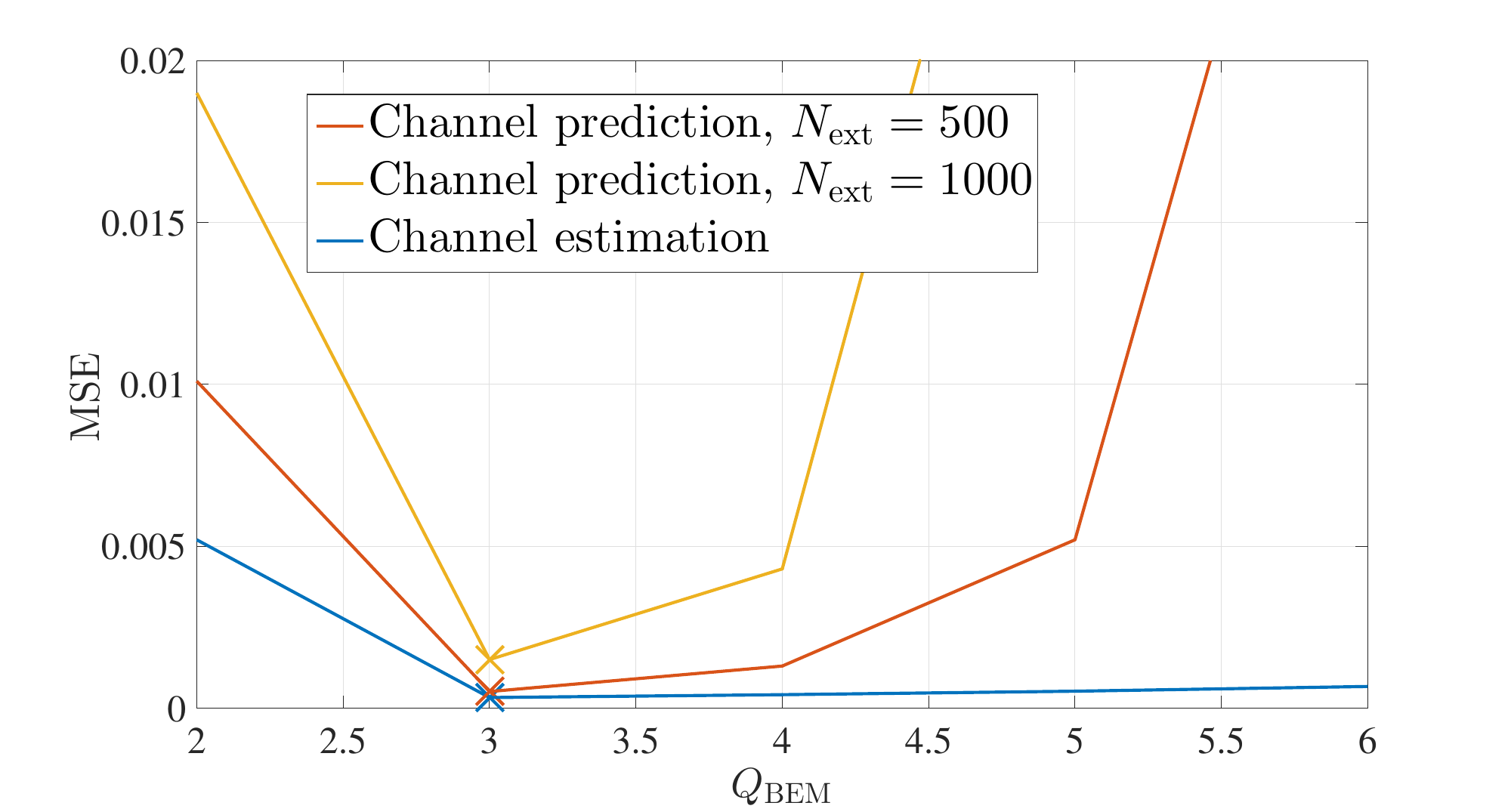} \\
  \small (a) $\text{SNR} = 10$ dB \\
    \includegraphics[width=0.8\columnwidth]{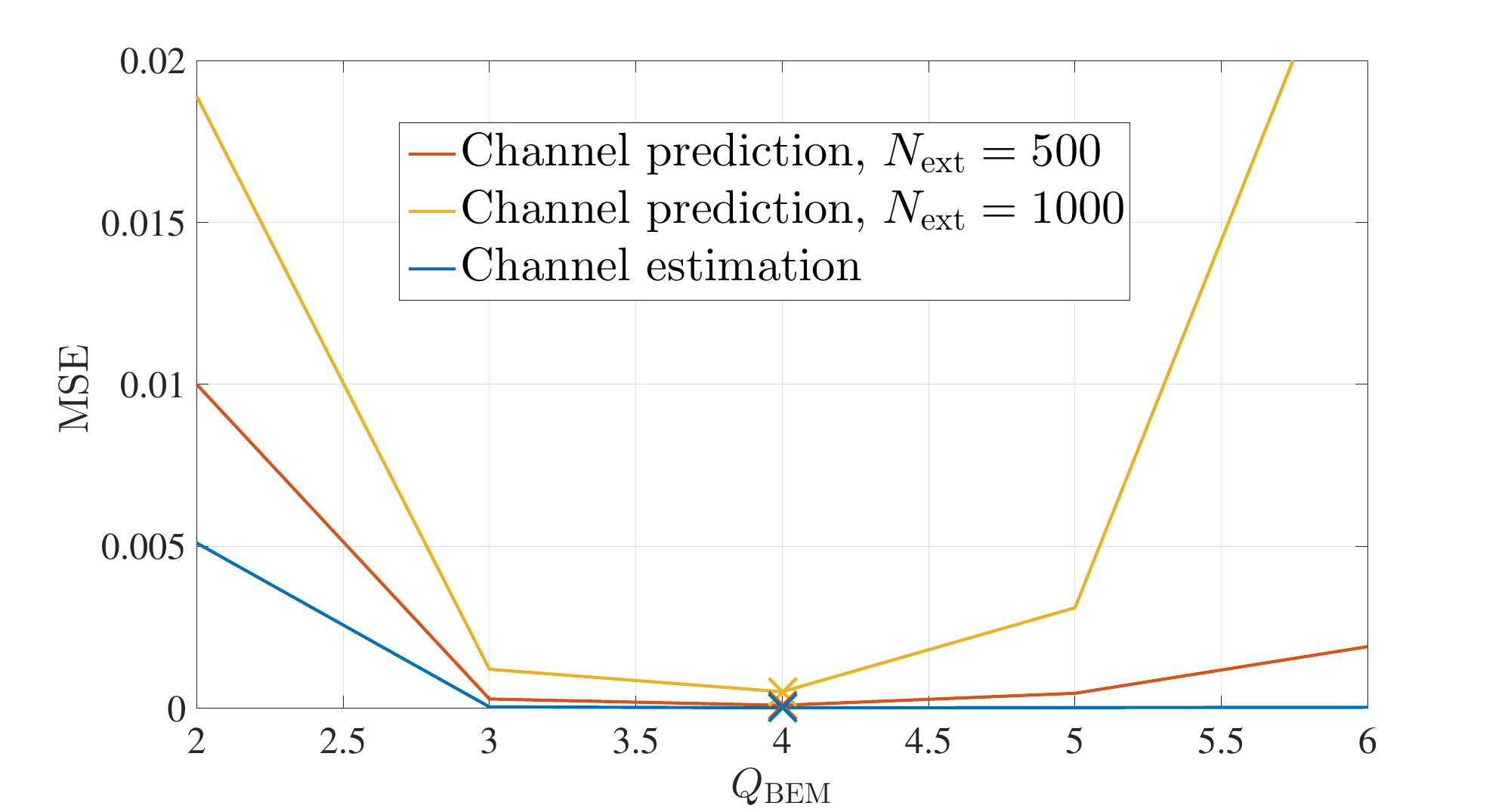} \\
      \small (b) $\text{SNR} = 40$ dB
  \end{tabular}
  \caption{Effect of parameter $Q_{\rm BEM}$ of the ``multiple shifted elementary BEMs'' method on the MSE performance of both off-grid LTV channel estimation and prediction ($N=2048, Q=7, N_{\rm p}=46$)}
  \label{fig:mse_qbem}
  \vspace{-2mm}
\end{figure}

We now turn to the estimation and extrapolation of DS-LTV channels with off-grid Doppler shifts, as modeled by Definition \ref{def:offgrid_dD_sparsity}. We start with the configuration of the parameters of our BEM model proposed in Subsection \ref{sec:multi_shifted_BEM}. The effect of the order $Q_{\rm BEM}$ of the elementary BEM on the MSE with different extrapolation horizons, namely $N_{\rm ext}=1000$, $N_{\rm ext}=500$ and $N_{\rm ext}=0$ (marked as ``channel estimation''), is shown in Figure \ref{fig:mse_qbem}. While increasing $Q_{\rm BEM}$ improves the precision of the model, it also increases the number of unknowns to be estimated, revealing the trade-off between model precision and pilot overhead. This is why $Q_{\rm BEM}=3$ yields the best MSE at the lower SNR value as opposed to $Q_{\rm BEM}=4$ at the higher. Note that the fact that both $Q_{\rm BEM}$ values are small confirms the result given in Theorem \ref{theo:bem_accuracy}.

Figure \ref{fig:simu2_offgrid} compares the MSE performance of the ``multiple shifted elementary BEMs'' method, with $Q_{\rm BEM} = 4$, to two other approaches, namely the multi-band single-BEM approach of \eqref{eq:dpss_onebem} (configured such that $\frac{\check{Q}_{\rm BEM}}{\textrm{\# active Doppler shifts per delay tap}}=4$), and a refined-grid approach \cite{Ganguly2019refinedgrid} (with a refinement factor of $O=4$). Note that these values of $Q_{\rm BEM}$, $\check{Q}_{\rm BEM}$, and $O$ are chosen to be equal to ensure a fair comparison across all methods, both in terms of model size (number of unknown parameters) and computational complexity. The comparison is conducted for both channel estimation (CE) and prediction, assuming $N_{\rm ext}=500$ and that the underlying waveform is AFDM with $P_{\rm afdm}=1$.
\begin{figure}
\centering
\includegraphics[width=1.0\columnwidth]{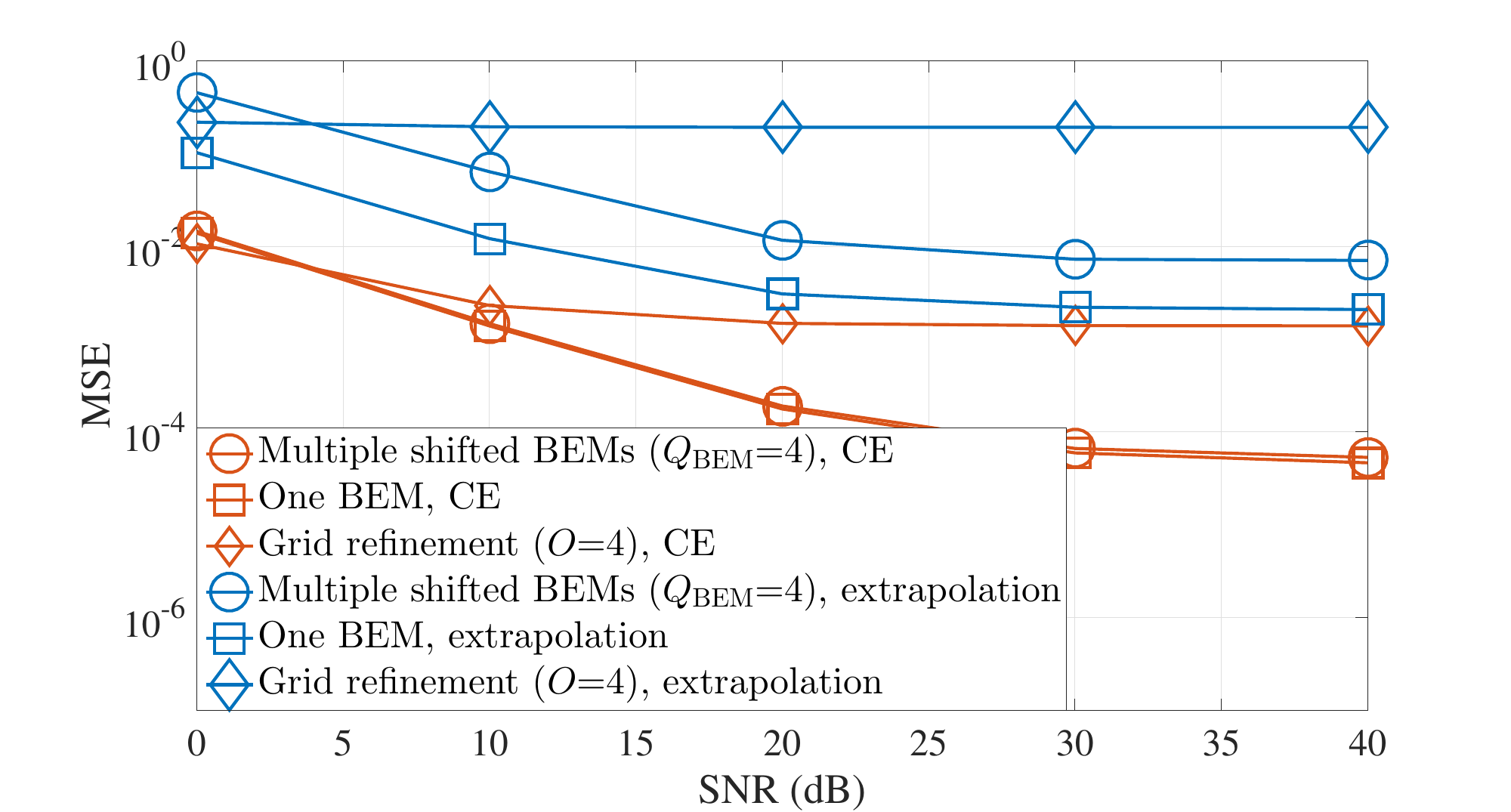}
\caption{MSE of different off-grid LTV channel estimation and prediction methods under AFDM ($N=2048, L=20, Q=7, p_{\rm d}=p_{\rm D}=0.2$)}
\label{fig:simu2_offgrid}
    \vspace{-2mm}
\end{figure}
For channel estimation, the multiple shifted BEMs approach and the multi-band single-BEM method achieve similar performance, both outperforming the refined-grid technique (due to the latter being ill-conditioned since $N_{\rm D}>1$ as explained at the beginning of Section \ref{sec:off_grid}). However, for channel prediction, our approach exhibits a slight performance degradation compared to the single BEM method. Nevertheless, as highlighted in Remark \ref{rem:DPSS_prediction}, our method requires a significantly smaller codebook size than the multi-band single-BEM approach shown by Table \ref{tab:codebook} (for the same setting as in Figures \ref{fig:simu1_offgrid} and \ref{fig:simu2_offgrid}), making it more efficient in practical implementations.
\begin{table}
\begin{center}
    \caption{Size of codebook for each method ($L=20, Q=7, p_{\rm d}=0.2, p_{\rm D}=0.2$)}
    \label{tab:codebook} 
    \begin{tabular}[c]{||p{3.75cm}||p{3.5cm}||} 
\hline
    \textbf{Method} & \textbf{Codebook Size} \\
\hline
\hline
    Multi-band single BEM \cite{channel_extension}. Number of Doppler bands = $(2Q+1)p_{\rm D}=3$, $\check{Q}_{\rm BEM}=12$          & 2,621,360          \\
    Multiple shifted BEMs \eqref{eq:ch_model_off_grid}. Number of elementary BEMs = $(2Q+1)p_{\rm D}=3$, $Q_{\rm BEM}=4$    & 1,200              \\
    \hline
    \end{tabular}
    \end{center}
        \vspace{-2mm}
\end{table}

To test the effect of mobility on the proposed channel estimation and prediction methods when using AFDM, we use two maximum Doppler shift values: $Q=3$ and $Q=7$, corresponding to maximum speeds of 170 km/h and 396 km/h, respectively. Doppler sparsity is $p_{\rm D}=0.4$ in the lower-mobility case and $p_{\rm D}=0.2$ in the other case, ensuring that the number of active Doppler clusters is the same in both scenarios. As expected, the MSE performance is better in the lower-mobility case. However, the degradation in performance is limited when the maximum speed is almost doubled, demonstrating the robustness of both AFDM and the proposed estimation and prediction schemes under high mobility.
\begin{figure}
\centering
\includegraphics[width=1.0\columnwidth]{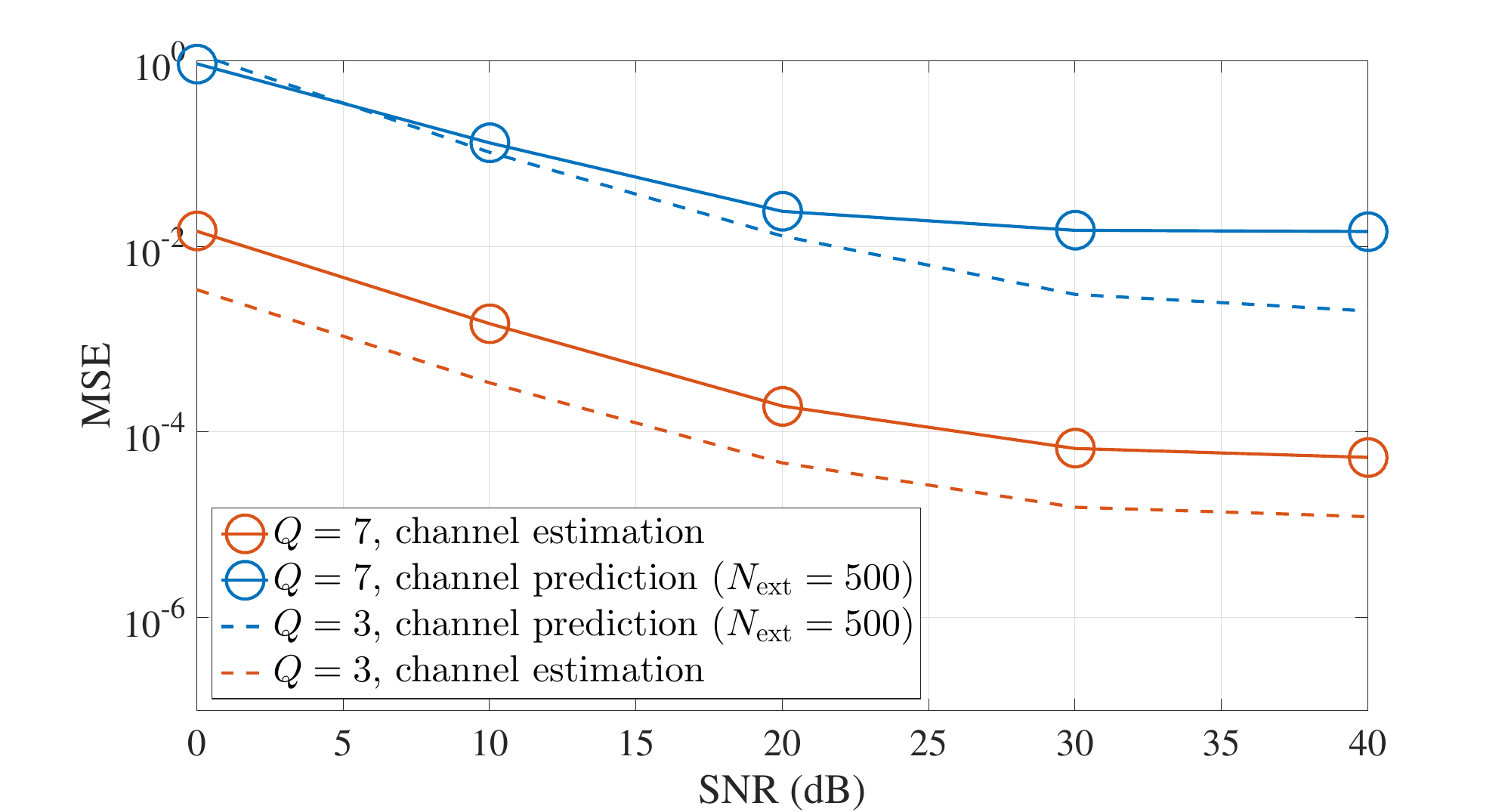}
\caption{Effect of mobility (value of $Q$) on MSE performance of off-grid LTV channel estimation and prediction using the ``multiple shifted elementary BEMs'' and DPSS extrapolation approaches under AFDM ($N=2048, L=20, Q_{\rm BEM}=4, p_{\rm d}=0.2$)}
\label{fig:simu3_offgrid}
    \vspace{-2mm}
\end{figure}

Finally, we demonstrate that the superiority of AFDM in terms of sparse-recovery performance, as established using compressed-sensing tools under the on-grid model, extends to the off-grid model and LMMSE channel estimation. We used 100 realizations of channels having a off-grid Type-1 delay-Doppler sparsity with $p_{\rm d}=0.2$, $p_{\rm D}=0.2$ and $N=2048, N_{\rm ofdm,symb}=32, L=20, Q=7$.
As highlighted in Remark \ref{rem:waveform}, Fig. \ref{fig:simu1_offgrid} illustrates that when dealing with off-grid Doppler shifts, AFDM maintains the superiority that was proven with the on-grid model. Indeed, the values of $N_{\rm p}$, $N_{\rm ofdm,symb}$, $N_{\rm p,t}$ and $N_{\rm p,f}$ used to generate the figure, while yielding almost identical pilot overheads for OFDM and AFDM ($\textrm{overhead}_{\rm AFDM}=766$, $\textrm{overhead}_{\rm OFDM}=788$, result in a superior MSE performance for AFDM in both channel estimation and channel prediction tasks.
\begin{figure}
     \centering
    \includegraphics[width=1.0\columnwidth]{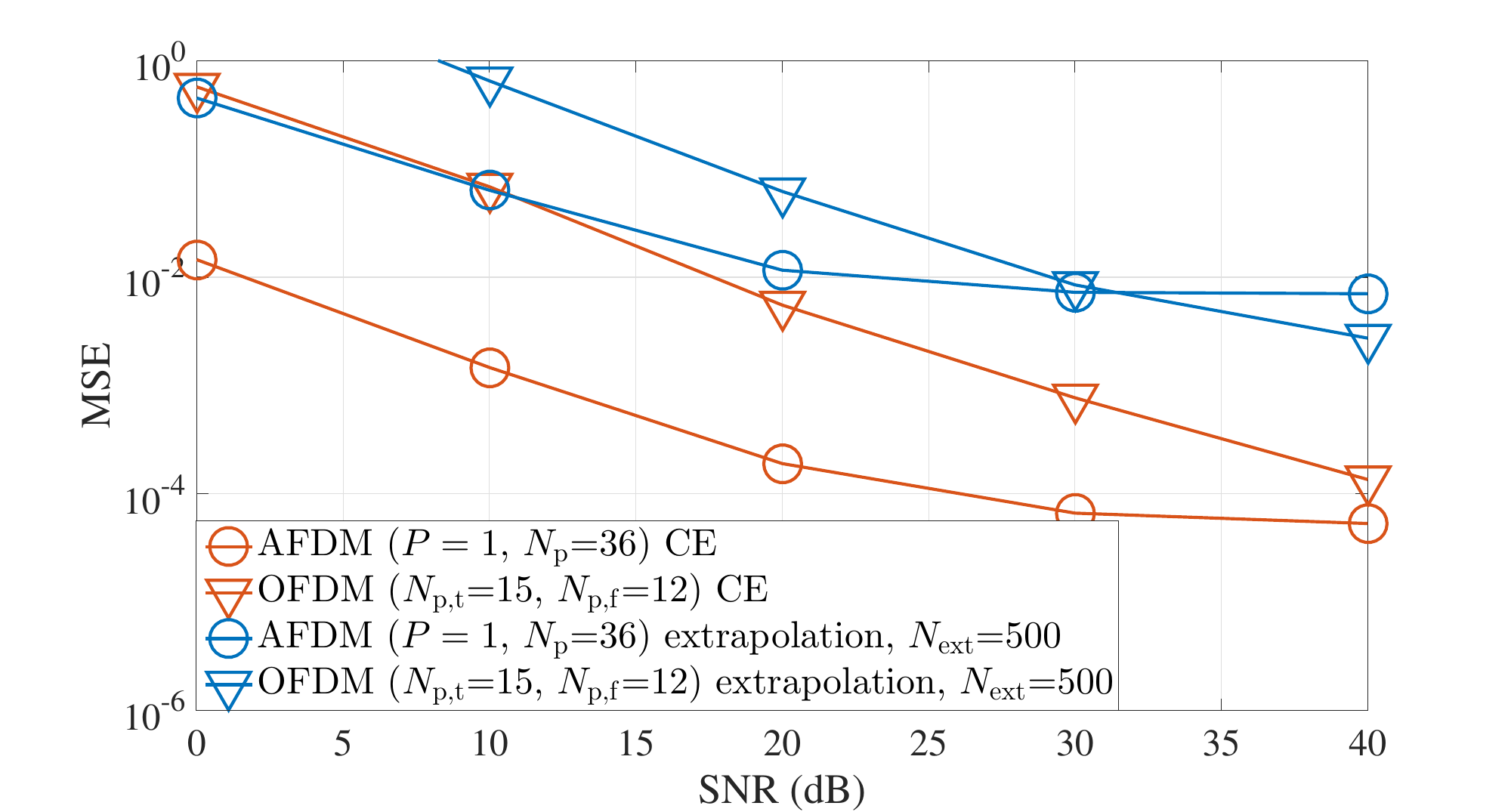}
    \caption{MSE of off-grid LTV channel estimation and prediction using AFDM and OFDM for $N=2048, L=20, Q=7, p_{\rm d}=p_{\rm D}=0.2$,$N_{\rm ofdm,symb}=32$, $N_{\rm fft}=64$ (MSE comparison done at comparable overhead levels for the two waveforms)}
        \label{fig:simu1_offgrid}
        \vspace{-2mm}
    \end{figure}

\section{Conclusions}

This paper investigated channel estimation for doubly sparse LTV channels, addressing both on-grid and off-grid Doppler shifts. The on-grid approximation, through its link to the hierarchical-sparsity compressed-sensing paradigm, was used to provide a rigorous performance comparison when channel estimation is conducted based on different underlying waveforms. This theoretical comparison established that the AFDM waveform can achieve the same sparse-recovery guarantees while requiring reduced pilot overhead compared to SCM, OFDM, and OTFS.
To relax the on-grid assumption, we introduced a DPSS-based channel representation model consisting of multiple frequency-shifted elementary BEMs capable of handling off-grid Doppler effects, and we complemented it with analytical precision guarantees. We next used the proposed channel representation to conceive a practical LMMSE estimator for LTV channels. Additionally, a channel predictor was introduced by taking advantage of the DPSS extrapolation capability of the proposed LMMSE estimator. Numerical results confirmed AFDM superiority in both estimation and prediction under delay-Doppler sparsity, making it a strong candidate for future wireless systems.

\appendices

\section{Proof of Theorem \ref{theo:HiRIP_SCM_OFDM}}
\label{app:proof_theo_scm_ofdm}
In the case of SCW, defining $\tilde{\alpha}_{l,q}\triangleq\alpha_{l,q}e^{\imath2\pi\frac{lq}{N}}$, rearranging $\mathbf{y}_{\rm p}$ into $\tilde{\mathbf{y}}_{\rm p}$ (composed of $L$ successive blocks, with the $l$-th blocks composed of the $l$-th sample in each of the $N_{\rm p}$ pilot intervals) and assuming $2Q+1$ divides $N$ and that the $p$-th pilot position (for any $p\in\Iintv{1,N_{\rm p}}$) satisfies $m_p=q_p\frac{N}{2Q+1}$ for some $q_p\in\Iintv{0,2Q}$, we can write $\tilde{\mathbf{y}}_{\rm p}=\widetilde{\mathbf{M}}_{\rm p}^{\rm scw}\tilde{\boldsymbol{\alpha}}$
where $\tilde{\boldsymbol{\alpha}}$ is the vectorized form of $\tilde{\alpha}_{l,q}$ and
\begin{equation}
    \widetilde{\mathbf{M}}_{\rm p}^{\rm scw}\triangleq\mathbf{I}_{L}\otimes\left(\mathrm{diag}\left(\mathrm{p}_1,\ldots,\mathrm{p}_{N_{\rm p}}\right)\overline{\mathbf{F}}_{2Q+1,N_{\rm p}}\right) 
\end{equation}
Here, $\overline{\mathbf{F}}_{2Q+1,N_{\rm p}}$ is the partial inverse Fourier measurement matrix formed from $N_{\rm p}$ rows of the $(2Q+1)$-point inverse DFT matrix. The HiRIP of $\widetilde{\mathbf{M}}_{\rm p}^{\rm scw}$ can thus be derived and proven to be equal to the value given in the theorem statement by using the known RIP of partial inverse Fourier measurement matrices \cite[Theorem 4.5]{partial_fourier} followed by applying \cite[Theorem 4]{hierarchical} pertaining to the HiRIP of hierarchical measurement matrices having the Kronecker property. This completes the part of the proof related to SCM. 
As for OFDM, the measurement matrix of the estimation problem is 
\begin{equation}
 \widetilde{\mathbf{M}}_{\rm p}^{\rm ofdm}\triangleq(\mathrm{diag}(\mathrm{p}_1,\ldots,\mathrm{p}_{N_{\rm p,f}})\mathbf{F}_{L,N_{\rm p,f}})\otimes\overline{\mathbf{F}}_{2Q+1,N_{\rm p,t}}
\end{equation}
where $\mathbf{F}_{L,N_{\rm p,f}}$ is the partial Fourier measurement matrix formed from $N_{\rm p,f}$ rows of the $L$-point inverse DFT matrix. The HiRIP of $\widetilde{\mathbf{M}}_{\rm p}^{\rm ofdm}$ thus follows from the RIP of the partial Fourier measurement matrix and the HiRIP result pertaining to Kronecker hierarchical measurements.

\section{Proof of Theorem \ref{theo:HiRIP_AFDM}}
\label{app:proof_theo}
First, out of the pilot samples set $\mathcal{P}$, consider the subset $\mathcal{P}_p$ associated with the $p$-th pilot symbol transmitted at the DAFT index $m_p$ (Fig. \ref{fig:AFDM_pilot_pattern}). To homogenize the sensing signal model associated with edge samples and inner samples of $\mathcal{P}_p$, we apply two overlap-add operations: adding the samples received within the index interval $\Iintv{m_p-Q,m_p-1}$ to those received within $\Iintv{m_p+(L-1)P_{\rm afdm}-Q,m_p+(L-1)P_{\rm afdm}-1}$ and the samples received within $\Iintv{m_p+(L-1)P_{\rm afdm}+1,m_p+(L-1)P_{\rm afdm}+Q}$ to those received within $\Iintv{m_p+1,m_p+Q}$.
Now, define
\begin{equation}
\label{eq:Dl}
    \mathcal{D}_l\triangleq\left\{(\tilde{l},q) \mathrm{s.t.}(q+P_{\rm afdm}\tilde{l})_{(L-1)P_{\rm afdm}+1}=l\right\}
\end{equation}
as the set of delay-Doppler grid points that potentially contribute to the pilot sample received at DAFT domain index $l\in\Iintv{m_p,m_p+(L-1)P_{\rm afdm}}$ (Fig. \ref{fig:interval_k}) after the two overlap-add operations described above. Note that $\mathcal{D}_l$ does not depend on the pilot symbol index $p$ and that it has a cardinality that does not change with $l$ and which satisfies $\left|\mathcal{D}_l\right|\leq 2\lceil \frac{Q}{P_{\rm afdm}}\rceil+1$. Next, define $\boldsymbol{\alpha}_{\mathcal{D}_l}\triangleq\left[\alpha_{l,q}\right]_{(l,q)\in\mathcal{D}_l}$ and
$\widetilde{\boldsymbol{\alpha}}\triangleq[\boldsymbol{\alpha}_{\mathcal{D}_0}^{\rm T}\quad\cdots\quad\boldsymbol{\alpha}_{\mathcal{D}_{(L-1)P_{\rm afdm}}}^{\rm T}]^{\rm T}$. The entries of $\widetilde{\boldsymbol{\alpha}}$ are just a permutation of $\boldsymbol{\alpha}$ and estimating one of these vectors directly provides an estimate of the other. Now, it can be shown that when we set $P_{\rm afdm}$ as in the theorem and $\epsilon>0$ as small as needed, then $\widetilde{\boldsymbol{\alpha}}$ is $\left(\tilde{s}_{\rm d},\tilde{s}_{\rm D}\right)$-hierarchically sparse with high probability:
\begin{figure}
  \centering
  \begin{tabular}{ c @{\hspace{5pt}} c }
  \includegraphics[width=.48\columnwidth]{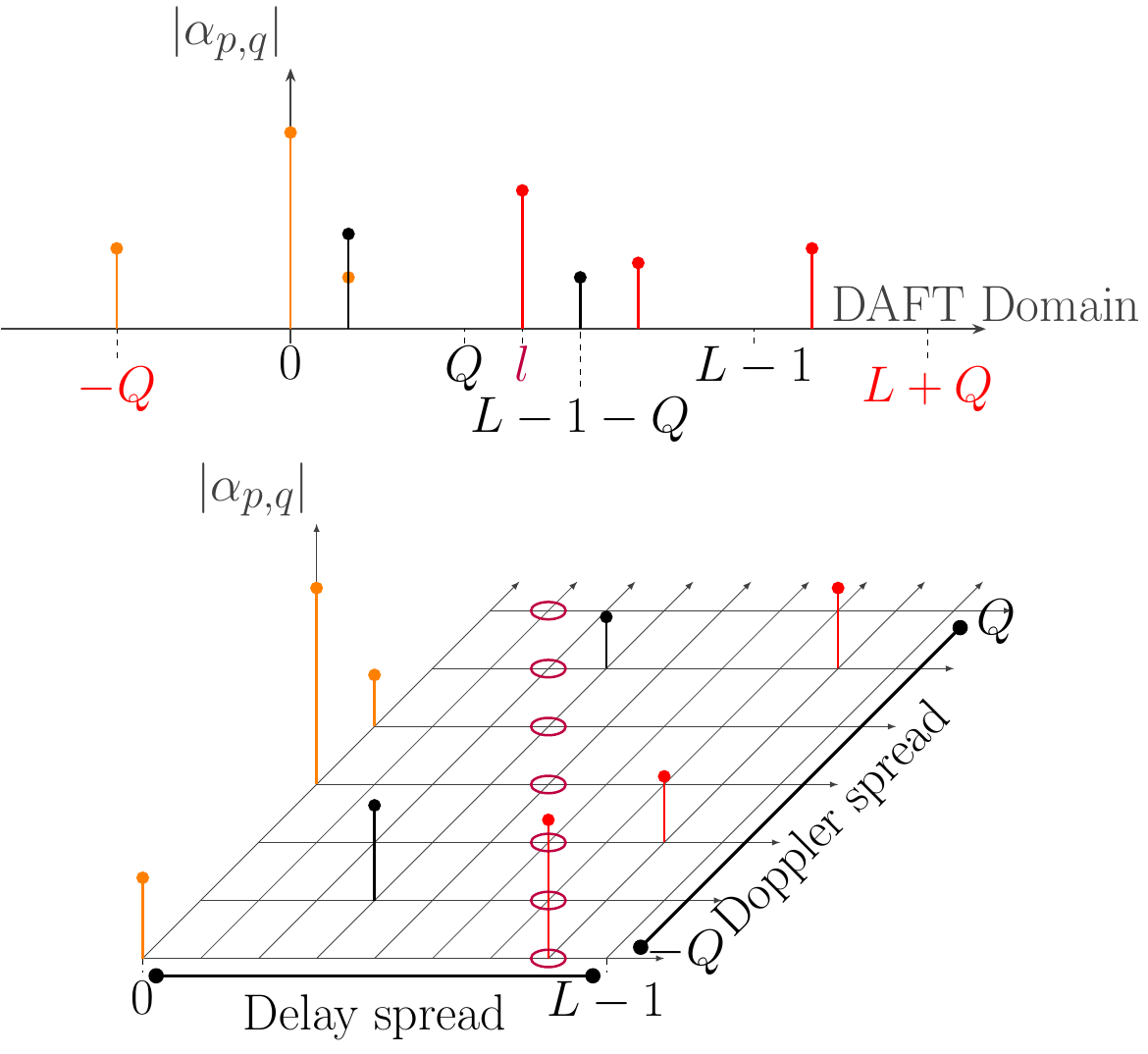} &
    \includegraphics[width=.48\columnwidth]{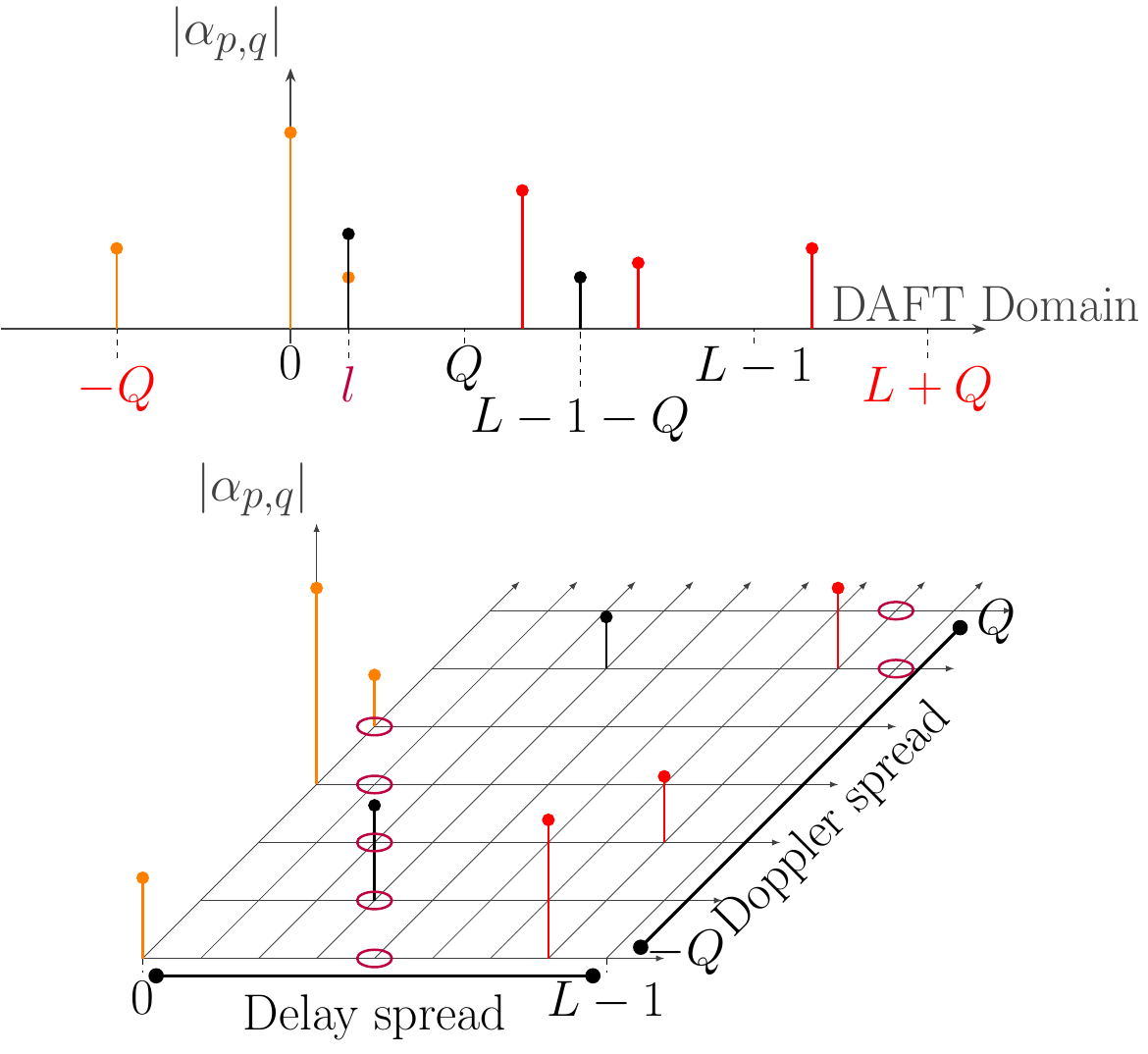} \\
    \small (a) &
      \small (b)
  \end{tabular}
  \medskip
  \vspace{-4mm}
  \caption{Two examples of the set $\mathcal{D}_l$ ($P_{\rm afdm}=1$, $m_p=0$) (a) for an $l$ resulting in a whole diagonal, (b) for an $l$ resulting in a wrapped diagonal. The grid points forming $\mathcal{D}_l$ are shown surrounded by red rings.}
  \label{fig:interval_k}
  \vspace{-5mm}
\end{figure}
\begin{equation}
    \label{eq:tilde_s_D}
    \tilde{s}_{\rm d}=(L-1)P_{\rm afdm}+1,\quad\tilde{s}_{\rm D}=(1+\epsilon)\log(LP_{\rm afdm})\:.
\end{equation}
Indeed, the first level (of size $(L-1)P_{\rm afdm}+1$) of $\widetilde{\boldsymbol{\alpha}}$ is sensed without compression, with a number of measurements equal to $(L-1)P_{\rm afdm}+1$ while $\tilde{s}_{\rm D}$ can be determined thanks to Definition \ref{def:dD_sparsity} and Assumptions \ref{assum:technical} and \ref{assum:technical_diag} and applying the same approach as in the proof of Lemma \ref{lem:DS_HS} to $\Tilde{S}_{{\rm D},l}\triangleq\sum_{(\tilde{l},q)\in\mathcal{D}_l}I_{\tilde{l},q}$.
Now, we can write the signal model of sensing $\widetilde{\boldsymbol{\alpha}}$ as
\begin{equation}
    \label{eq:tilde_yp}
    \widetilde{\mathbf{y}}_{\rm p}=\widetilde{\mathbf{M}}_{\rm p}\widetilde{\boldsymbol{\alpha}}
    +\widetilde{\mathbf{w}}_{\rm p},
\end{equation}
where $\widetilde{\mathbf{y}}_{\rm p}=[\widetilde{\mathbf{y}}_{{\rm p},0}^{\rm T}\quad\cdots\quad\widetilde{\mathbf{y}}_{{\rm p},(L-1)P_{\rm afdm}}^{\rm T}]^{\rm T}$. For each $l$, $\widetilde{\mathbf{y}}_{{\rm p},l}$ is a $N_{\rm p}\times 1$ vector composed of the pilot samples received at DAFT domain positions $\left\{m_p+l\right\}_{p=1\cdots N_{\rm p}}$. Note that by this definition $\Tilde{\mathbf{y}}_{\rm p}$ is obtained by permuting $\mathbf{y}_{\rm p}$ in \eqref{eq:yp} in accordance with the permutation that gives $\Tilde{\boldsymbol{\alpha}}$ from $\boldsymbol{\alpha}$. Next, by setting for each $p\in\Iintv{1,N_{\rm p}}$ $m_p=q_p\frac{N}{2\lceil \frac{Q}{P_{\rm afdm}}\rceil+1}$ for some integer $q_p$ it follows from \eqref{eq:y_output_integer} and \eqref{eq:Dl} that $\widetilde{\mathbf{M}}_{\rm p}$ has the following Kronecker structure
\begin{eqnarray}
    \label{eq:kron}
    \widetilde{\mathbf{M}}_{\rm p}=\mathbf{I}_{(L-1)P_{\rm afdm}+1}\otimes\widetilde{\mathbf{M}}_{\cal D},\\
    \widetilde{\mathbf{M}}_{\mathcal{D}} = \mathrm{diag}({\rm p}_1\cdots{\rm p}_{N_{\rm p}})\mathbf{F}_{2\lceil\frac{Q}{P_{\rm afdm}}\rceil+1,N_{\rm p}}\boldsymbol{\Psi}\:,
\end{eqnarray}
where $\mathbf{F}_{2\lceil\frac{Q}{P_{\rm afdm}}\rceil+1,N_{\rm p}}$ is a $\left(2\lceil\frac{Q}{P_{\rm afdm}}\rceil+1\right)\times N_{\rm p}$ partial Fourier measurement matrix, and $\boldsymbol{\Psi}$ is a diagonal matrix with unit-modulus entries. We can thus use \cite[Theorem 4.5]{partial_fourier} pertaining to sub-sampled Fourier matrices to conclude that, for sufficiently large $L$, $Q$, sufficiently small $\delta$, and
\begin{equation}
    \label{eq:cond_N_p}
    N_{\rm p}>\Omega(\frac{1}{\delta^{2}}\log^2\frac{1}{\delta}\log\frac{\log (LP_{\rm afdm})}{\delta}\log(LP_{\rm afdm})\log\frac{Q}{P_{\rm afdm}})
\end{equation}
the RIP constant $\delta_{\tilde{s}_{\rm D}}$ of $\widetilde{\mathbf{M}}_{\mathcal{D}}$ satisfies $\delta_{\tilde{s}_{\rm D}}\leq\delta$ with probability $1-e^{-\Omega \left(\log{\frac{Q}{P_{\rm afdm}}}\log{\frac{1}{\delta}}\right)}$. The RIP of $\mathbf{I}_{(L-1)P_{\rm afdm}+1}$ trivially satisfies $\delta_{\tilde{s}_{\rm d}}=0$. As for the HiRIP of $\widetilde{\mathbf{M}}_{\rm p}$, we can apply \cite[Theorem 4]{hierarchical} to \eqref{eq:kron} thanks to its Kronecker structure to conclude that, if $N_{\rm p}$ and $\delta$ are as in \eqref{eq:cond_N_p}, then
\begin{equation}
    \delta_{s_{\rm d},s_{\rm D}}\leq\delta_{\tilde{s}_{\rm d}}+\delta_{\tilde{s}_{\rm D}}+\delta_{\tilde{s}_{\rm d}}\delta_{\tilde{s}_{\rm D}}\leq\delta\:.
\end{equation}
This completes the proof of the theorem.

\section{Proof of Theorem \ref{theo:bem_accuracy}}
\label{app:bem_accuracy}
We use the following lemma taken from \cite[Theorem 2.4]{MSE_DPSS}. 
\begin{lemma}{\cite[Theorem 2.4]{MSE_DPSS}}
    \label{lem:MSE_DPSS}
    Let $h(t)$ be a continuous-time zero-mean wide-sense stationary random process with power spectrum $P_{h}(f)=\frac{1}{B}\mathrm{rect}(\frac{f-F_{\rm c}}{2B})$.
Denote by $\mathbf{h}=[h(0T_{\rm s})\,\ldots\:h((N-1)T_s)]^{\rm T}$ a vector of samples acquired from $h(t)$ with a sampling period $T_s \leq \frac{1}{2F_{\rm c}+B}$. Let $W=\frac{B T_s}{2}$, $\mathbf{U}_k$ the matrix form of the $k$ first $(N,W)$-DPSS vectors, $\mathbf{E}_{f}\triangleq\mathrm{diag}\left( e^{\imath 2\pi f 0} \ldots e^{\imath 2\pi f(N-1)}\right)$ and $\mathbf{P}_{k}\triangleq\mathbf{E}_{F_{\rm c}T_{\rm s}}\mathbf{U}_k\mathbf{U}_k^{\rm H}\mathbf{E}_{F_{\rm c}T_{\rm s}}^{\rm H}$. Then
$\mathbb{E}[\left\|\mathbf{h}-\mathbf{P}_{k}\mathbf{h}\right\|_2^2]=\frac{1}{2W} \sum_{l=k}^{N-1} \lambda_l^{(N,W)}$.
\end{lemma}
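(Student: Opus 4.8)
The plan is to rewrite the left-hand side as a trace against the covariance matrix of the sampled process and then diagonalize that covariance in the DPSS eigenbasis, so that the projection error collapses to the tail sum of the prolate eigenvalues.

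\emph{Step 1 (covariance of the samples).} First I would compute $\boldsymbol{\Sigma}\triangleq\mathbb{E}[\mathbf{h}\mathbf{h}^{\rm H}]$. By the Wiener--Khinchin theorem its $(m,n)$ entry is $R_h((m-n)T_{\rm s})$ with $R_h(\tau)=\int P_h(f)e^{\imath2\pi f\tau}\,df$. Integrating the complex exponential over the support of the flat-band spectrum $P_h$ produces a sinc kernel modulated by $e^{\imath 2\pi F_{\rm c}\tau}$; evaluating it at $\tau=(m-n)T_{\rm s}$ and using $W=\frac{BT_{\rm s}}{2}$ reproduces, up to the scalar $\frac{1}{2W}$, the prolate entry $C_{m,n}^{(N,W)}$ of \eqref{eq:C} modulated by $e^{\imath2\pi F_{\rm c}(m-n)T_{\rm s}}$. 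The hypothesis $T_{\rm s}\leq\frac{1}{2F_{\rm c}+B}$ places the whole band below the Nyquist frequency, so no aliasing corrupts the sampled autocorrelation, and I would conclude
\begin{equation}
\boldsymbol{\Sigma}=\frac{1}{2W}\,\mathbf{E}_{F_{\rm c}T_{\rm s}}\,\mathbf{C}^{(N,W)}\,\mathbf{E}_{F_{\rm c}T_{\rm s}}^{\rm H},
\end{equation}
where $\mathbf{C}^{(N,W)}\triangleq[C_{k,n}^{(N,W)}]$ is the prolate matrix; the scalar is pinned by $C_{n,n}^{(N,W)}=2W$.

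\emph{Step 2 (error as a trace).} Since $\mathbf{P}_k$ is an orthogonal projection, $\mathbf{I}-\mathbf{P}_k$ is Hermitian and idempotent, whence $\mathbb{E}[\|\mathbf{h}-\mathbf{P}_k\mathbf{h}\|_2^2]=\mathbb{E}[\mathbf{h}^{\rm H}(\mathbf{I}-\mathbf{P}_k)\mathbf{h}]=\mathrm{tr}((\mathbf{I}-\mathbf{P}_k)\boldsymbol{\Sigma})$. Inserting $\mathbf{P}_k=\mathbf{E}_{F_{\rm c}T_{\rm s}}\mathbf{U}_k\mathbf{U}_k^{\rm H}\mathbf{E}_{F_{\rm c}T_{\rm s}}^{\rm H}$ and the expression for $\boldsymbol{\Sigma}$, the unitary modulations cancel under the cyclic invariance of the trace, leaving $\frac{1}{2W}\,\mathrm{tr}((\mathbf{I}-\mathbf{U}_k\mathbf{U}_k^{\rm H})\mathbf{C}^{(N,W)})$.

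\emph{Step 3 (eigen-evaluation and obstacle).} Finally I would evaluate this trace in the eigenbasis of $\mathbf{C}^{(N,W)}$. Its orthonormal eigenvectors are exactly the DPSS vectors $\mathbf{u}_b$ with eigenvalues $\lambda_b^{(N,W)}$, so $\mathrm{tr}(\mathbf{C}^{(N,W)})=\sum_{l=0}^{N-1}\lambda_l^{(N,W)}$, while $\mathbf{U}_k$ collecting the $k$ leading eigenvectors gives $\mathrm{tr}(\mathbf{U}_k\mathbf{U}_k^{\rm H}\mathbf{C}^{(N,W)})=\sum_{l=0}^{k-1}\lambda_l^{(N,W)}$. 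Subtracting yields $\frac{1}{2W}\sum_{l=k}^{N-1}\lambda_l^{(N,W)}$, as claimed. The main obstacle is Step 1: identifying the sampled covariance as \emph{exactly} the modulated prolate matrix and tracking the $\frac{1}{2W}$ normalization, which is precisely where the flat-band shape of $P_h$ and the no-aliasing sampling condition enter; once that identification is secured, Steps 2 and 3 are routine trace manipulations.
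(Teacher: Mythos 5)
Your derivation is correct, but note that the paper never proves this lemma itself: it is imported verbatim, by citation, from \cite[Theorem 2.4]{MSE_DPSS}, and your three steps essentially reconstruct that reference's own argument — sampled covariance equals $\frac{1}{2W}\mathbf{E}_{F_{\rm c}T_{\rm s}}\mathbf{C}^{(N,W)}\mathbf{E}_{F_{\rm c}T_{\rm s}}^{\rm H}$ with $\mathbf{C}^{(N,W)}$ the prolate matrix, projection error rewritten as $\mathrm{tr}\left((\mathbf{I}-\mathbf{P}_k)\boldsymbol{\Sigma}\right)$, and the tail eigenvalue sum from the DPSS eigendecomposition. What your proof buys over the paper's citation is self-containedness and a transparent accounting of where each hypothesis enters (your Step 1 is, in fact, the same modulated-prolate covariance structure the paper itself computes later in Appendix~D for Theorem~\ref{theo:prediction}, so the argument fits the paper's toolkit naturally); what the citation buys is brevity. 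Steps 2 and 3 are exactly right: $\mathbf{I}-\mathbf{P}_k$ is an orthogonal projection because the DPSS vectors are orthonormal, the modulations $\mathbf{E}_{F_{\rm c}T_{\rm s}}$ cancel under the trace, and since $\mathbf{U}_k$ collects the leading eigenvectors of $\mathbf{C}^{(N,W)}$ the trace collapses to $\frac{1}{2W}\sum_{l=k}^{N-1}\lambda_l^{(N,W)}$.

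One caveat you should make explicit. Under the paper's own convention ($\mathrm{rect}(x)=1$ for $|x|\leq\frac{1}{2}$), the stated PSD $\frac{1}{B}\mathrm{rect}\bigl(\frac{f-F_{\rm c}}{2B}\bigr)$ has support $|f-F_{\rm c}|\leq B$, i.e., bandwidth $2B$; redoing your Step 1 with that band gives $R_h(kT_{\rm s})=\frac{1}{2W}e^{\imath 2\pi F_{\rm c}T_{\rm s}k}\frac{\sin(4\pi Wk)}{\pi k}$, which is the prolate matrix with parameter $2W$, not $W$, and diagonal $\boldsymbol{\Sigma}_{nn}=2$, contradicting your pinning via $C_{n,n}^{(N,W)}=2W$. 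Your identification is exact only under the cited reference's convention, band $\bigl[F_{\rm c}-\frac{B}{2},F_{\rm c}+\frac{B}{2}\bigr]$ with height $\frac{1}{B}$ — which is also the only reading consistent with the stated sampling condition, since then the digital band edge satisfies $(F_{\rm c}+\frac{B}{2})T_{\rm s}\leq\frac{1}{2}$ exactly when $T_{\rm s}\leq\frac{1}{2F_{\rm c}+B}$, and with unit process power. So this is a transcription slip in the lemma's statement rather than a gap in your argument, but a careful write-up should say so. Relatedly, the Nyquist hypothesis does slightly less than your "no aliasing corrupts the sampled autocorrelation" suggests: the identity $\boldsymbol{\Sigma}_{mn}=R_h((m-n)T_{\rm s})$ is exact for any $T_{\rm s}$; the condition's actual role is to place the digital band inside $[-\frac{1}{2},\frac{1}{2}]$ and guarantee $W\leq\frac{1}{2}$ so that the $(N,W)$-DPSS basis is well defined.
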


In what follows, we use Lemma \ref{lem:MSE_DPSS} to upper bound $\mathbb{E}[|h_{l,q,n}-h_{l,q,n}^{\rm BEM}|^2]$ by first
rewriting $h_{l,q,n}$ as the sampled version of the continuous-time signal $h_{l,q}(t)$ defined as
\begin{equation}
    \label{eq:continuous_channel}
    h_{l,q}(t) \triangleq \sum_{i=1}^{N_{\rm D}} \alpha_{l,q,i} e^{\imath 2\pi f_it}, \quad t \in \mathbb{R}.
\end{equation}
with \(f_i\triangleq\frac{q}{NT_{\rm s}}+\frac{\kappa_i}{NT_{\rm s}}\). To prove that the PSD of the random process $h_{l,q}(t)$ has the desired property, we derive its autocorrelation function $R_{h_{l,q}}(\tau) \triangleq \mathbb{E}[h_{l,q} (t)h_{l,q}^*(t+\tau)]$
\begin{align}
    \label{eq:autocorr1}
    R_{h_{l,q}}(\tau) &=\sum_{i=1}^{N_D} \sum_{j=1}^{N_D} \mathbb{E}[\alpha_{l,q,i} \alpha_{l,q,j}^*] \, 
\mathbb{E}[e^{\imath 2\pi f_i t} e^{-\imath 2\pi f_j(t+\tau)}]\nonumber\\
&=\sum_{i=0}^{N_D} \sigma_\alpha^2 \, \mathbb{E}\left[e^{-\imath 2\pi f_i \tau}\right]
\end{align}
where the first equality follows from \eqref{eq:continuous_channel} and the second from $f_i \sim \mathcal{U}([\frac{q}{N T_{\rm s}} - \frac{1}{2 N T_{\rm s}}, \frac{q}{N T_{\rm s}} + \frac{1}{2 N T_{\rm s}}])$, $\alpha_{l,q,i} \sim \mathcal{CN}(0, \sigma_\alpha^2)$ and the independence of $\left\{\alpha_{l,q,i}\right\}_{i}$ as per Definition \ref{def:offgrid_dD_sparsity}.
This gives: 
\begin{equation}
    \label{eq:autocorr3}
    R_{h_{l,q}}(\tau) = N_D \,\sigma_\alpha^2 \, e^{-\imath 2\pi \frac{q}{N T_{\rm s}}\tau} \, \text{sinc}\left(\frac{\tau}{N T_{\rm s}}\right).
\end{equation}
Power spectral density (PSD) $P_{h_{l,q}}(f)\triangleq\mathcal{F}\{R_{h_{l,q}}(\tau)\}$ is thus 
\begin{align}
    \label{eq:PSD}
    P_{h_{l,q}}(f)=
    N_D \, \sigma_\alpha^2 \, N T_{\rm s} \,\text{rect}\left(\left(f-\frac{q}{N T_{\rm s}}\right)N T_{\rm s}\right).
\end{align}
The PSD of $h_{l,q}(t)$ thus satisfies the condition of Lemma \ref{lem:MSE_DPSS} with $B=\frac{1}{2N T_{\rm s}}$ and $F_{\rm c}=\frac{q}{N T_{\rm s}}$
giving
\begin{equation}
\label{eq:mse_dpss}
\begin{multlined}
    \mathbb{E}\left[\left|h_{l,q,n}-h_{l,q,n}^{\rm BEM}\right|^2\right]
    =\frac{1}{N}\mathbb{E}\left[\left\|\mathbf{h}_{l,q}-\mathbf{h}_{l,q}^{\rm BEM}\right\|^2\right]\\
    =\frac{N_{\rm D}\, \sigma_\alpha^2}{2WN}\sum_{b=Q_{\rm BEM}}^{N-1}\lambda_{b}^{(N,W)}\:.
\end{multlined}
\end{equation}
Now define $\lambda_b^{(c)}$ as the $b$-th eigenvalue of the prolate spheroidal wave functions (PSWF) \cite{PSWF} with the bandwidth parameter $c\triangleq\pi N W$. This allows us to exploit existing results on the behavior of PSWF eigenvalues in the limit of $c\to\frac{\pi}{2}$, i.e., as $W\to0$ at rate $\frac{1}{2N}$, to upper bound the sum of DPSS eigenvalues $\lambda_{b}^{(N,W)}$ in \eqref{eq:mse_dpss}. Indeed, due to \cite[Theorem 2]{DPSS_PSWF}
\begin{equation}
    \label{eq:lambda_b_ineq}
    \lambda_{b}^{(N,W)}\leq A_W\lambda_{b}^{(c)},\forall b=1,\ldots,N\:.
\end{equation}
Here, $A_W$ is a function of $W$ defined in \cite[Eq. (45)]{DPSS_PSWF} and its image is fully included in the interval $[\frac{\pi^2}{8},2]$. Plugging \eqref{eq:lambda_b_ineq} into \eqref{eq:mse_dpss} and noting that $2WN=1$ leads to
\begin{equation}
\label{eq:dpss_pswf}
    \mathbb{E}\left[\left|h_{l,q,n}-h_{l,q,n}^{\rm BEM}\right|^2\right]
    \leq N_{\rm D}\, \sigma_\alpha^2\, A_W\sum_{b=Q_{\rm BEM}}^{N}\lambda_{b}^{(c)}
\end{equation}
The right-hand side term in \eqref{eq:dpss_pswf} can be upper bounded due to the fact that the PSWF eigenvalues decay at least exponentially
\footnote{Actually, even super-geometric decay can be proven \cite{super_gemetric_lambda}} as $b$ grows beyond $\frac{2\pi N W}{\pi}+O(\log(\pi N W))=1+O(\log\frac{\pi}{2})$. More precisely, it follows from \cite[Theorem 2.5]{lambda_PSWF} that $\lambda_b^{(c)}=O(e^{-\frac{\pi^2}{\log\frac{\pi}{2}}b})$.
Plugging this into \eqref{eq:dpss_pswf} gives, for any $\epsilon>0$ and $Q_{\rm BEM}>C\log\frac{2}{\epsilon}$ for sufficiently large $C$,
\begin{equation}
\label{eq:upperbound_mse_lqn}
    \mathbb{E}\left[\left|h_{l,q,n}-h_{l,q,n}^{\rm BEM}\right|^2\right]<
    \frac{N_{\rm D}\,\sigma_\alpha^2\,A_W\epsilon}{2}\:.
\end{equation}
Now, note that
\begin{align}
\label{eq:total_mse}
    &\mathbb{E}\left[\sum_{l=0}^{L-1}\left|h_{l,n}-h_{l,n}^{\rm BEM}\right|^2\right]\nonumber\\
    &=\mathbb{E}\left[\sum_{l=0}^{L-1}\left|\sum_{q=-Q}^{Q} I_{l,q} e^{\imath 2\pi\frac{n q}{N}} \sum_{b=1}^{Q_{\rm BEM}}\left(h_{l,q,n}-h_{l,q,n}^{\rm BEM}\right)\right|^2\right]\nonumber\\
    &=\sum_{l=0}^{L-1}\sum_{q=-Q}^{Q}\mathbb{E}\left[I_{l,q}\right]\mathbb{E}\left[\left|h_{l,q,n}-h_{l,q,n}^{\rm BEM}\right|^2\right]
\end{align}
Plugging \eqref{eq:upperbound_mse_lqn} into the right-hand side of \eqref{eq:total_mse} gives
\begin{equation}
    \mathbb{E}\left[\sum_{l=0}^{L-1}\left|h_{l,n}-h_{l,n}^{\rm BEM}\right|^2\right]
    <\sum_{l,q}\mathbb{E}\left[I_{l,q}\right]\frac{N_{\rm D}\,\sigma_\alpha^2\,A_W\epsilon}{2}
    \leq\epsilon
\end{equation}
where the second inequality is due to the fact that $A_W\leq2$ and that $\sum_{l=0}^{L-1}\sum_{q=-Q}^{Q}\mathbb{E}\left[I_{l,q}\right]=\frac{1}{N_{\rm D}\,\sigma_\alpha^2}$ is due to the power normalization condition in \eqref{eq:power_normalization_off_grid}. This completes the proof.

\section{Proof of Theorem \ref{theo:prediction}}
\label{app:prediction_proof}
We now use the notation $h_{l,q,n}^{\rm ext}\left(\hat{\mathbf{h}}_{l,q}^{\rm BEM}\right)$ for the DPSS extrapolation predictor defined in \eqref{eq:h_lqn_ext_u} to highlight its dependence on the estimate $\hat{\mathbf{h}}_{l,q}^{\rm BEM}$. When $\hat{\mathbf{h}}_{l,q}^{\rm BEM}$ in \eqref{eq:h_lqn_ext_u} is replaced with an arbitrary vector $\mathbf{h}$, the DPSS predictor generalizes to
\begin{equation}
\label{eq:h_lqn_ext_h}
    h_{l,q,n}^{\rm ext}\left(\mathbf{h}\right)\triangleq \underbrace{e^{\imath 2\pi\frac{n q}{N}}\left(\mathbf{u}_n^{\rm ext}\right)^{\rm T}\mathbf{U}_{Q_{\rm BEM}}^{\rm H}\mathbf{E}_{\frac{q}{N}}^{\rm H}}_{\triangleq \left(\mathbf{f}_{Q_{\rm BEM}}^{\rm ext}\right)^{\rm H}}\mathbf{h}\:.
\end{equation}
For instance, $h_{l,q,n}^{\rm ext}\left(\mathbf{h}_{l,q}^{\rm BEM}\right)$ is the DPSS predictor based on the actual, not the estimated, vector $\mathbf{h}_{l,q}^{\rm BEM}$. Next, we establish the link between the DPSS extrapolation predictor and an MMSE predictor that uses the knowledge of the channel values during the observation interval.
For that sake, first note that for each $l$, $q$ and any $n\in\mathbb{Z}$, the random variable $h_{l,q,n}$ as defined by \eqref{eq:h_lqn} follows a complex symmetric Gaussian distribution $\mathcal{CN}\left(0,N_{\rm D}\sigma_\alpha^2\right)$ under the conditions of Definition \ref{def:offgrid_dD_sparsity}. Moreover, the random process $\left(h_{l,q,n}\right)_{n\in\mathbf{Z}}$ is stationary and has an auto-correlation $\mathbb{E}[h_{l,q,n}h_{l,q,m}^*]=\sigma_{\alpha}^2 N_{\rm D} e^{\imath 2\pi\frac{(n-m)q}{N}} \frac{N}{\pi (n-m)}\sin{(\frac{\pi (n-m)}{N})}$ due to \eqref{eq:h_lqn}.
Similarly, $\mathbf{h}_{l,q}\sim\mathcal{CN}(\mathbf{0},\sigma_{\alpha}^2 N_{\rm D}N\mathbf{E}_{\frac{q}{N}}\boldsymbol{\Sigma}\mathbf{E}_{\frac{q}{N}}^{\rm H})$ and $\mathbf{h}_{l,q}^{\rm BEM}\sim\mathcal{CN}(\mathbf{0},\sigma_{\alpha}^2 N_{\rm D}N\mathbf{E}_{\frac{q}{N}}\mathbf{P}^{\rm BEM}\boldsymbol{\Sigma}\mathbf{P}^{\rm BEM}\mathbf{E}_{\frac{q}{N}}^{\rm H})$ with 
\begin{equation}
    \label{eq:sigma}
    \boldsymbol{\Sigma}\triangleq\Bigg[\underbrace{\frac{1}{\pi (n-m)} \sin{\frac{\pi(n-m)}{N}}}_{C_{n,m}^{(N,W)}}\Bigg]_{\substack{n=0\ldots N-1 \\ m=0\ldots N-1}}
\end{equation}
Therefore, the MMSE predictor of $h_{l,q,n}$ given the actual channel component $\mathbf{h}_{l,q}$ is
\begin{align}
    \hat{h}_{l,q,n}\left(\mathbf{h}_{l,q}\right)&=\mathbf{E}\left[h_{l,q,n}\mathbf{h}_{l,q}^{\rm H}\right]\left(\mathbf{E}\left[\mathbf{h}_{l,q}\mathbf{h}_{l,q}^{\rm H}\right]\right)^{-1}\mathbf{h}_{l,q}\nonumber\\
    &=e^{\imath 2 \pi \frac{nq}{N}}\boldsymbol{\rho}\mathbf{E}_{\frac{q}{N}}\boldsymbol{\Sigma}^{-1}\mathbf{E}_{\frac{q}{N}}^{\rm H}\mathbf{h}_{l,q}\:.
\end{align}
Here, $\boldsymbol{\rho}\triangleq[e^{-\imath 2 \pi \frac{mq}{N}} C_{n,m}^{(N,W)}]_{m=0}^{N-1}$.
It follows that the reduced-rank MMSE predictor $\hat{h}_{l,q,n}^{\rm RR}$ of rank $Q$ of $h_{l,q,n}$ given $\mathbf{h}_{l,q}$ (where notation `$\mathrm{RR}$' stands for ``reduced rank'') is \cite{reduced_rank}
\begin{equation}
    \hat{h}_{l,q,n}^{\rm RR}\left(\mathbf{h}_{l,q}\right)=\left(\mathbf{f}_Q^{\rm RR}\right)^{\rm H}\mathbf{h}_{l,q}\:,
\end{equation}
\begin{equation}
\label{eq:f_q_RR}
    \mathbf{f}_Q^{\rm RR}\triangleq\mathbf{E}_{\frac{q}{N}}\mathbf{U}_{Q}\mathrm{diag}(\frac{1}{\lambda_1^{(N,W)}},\ldots,\frac{1}{\lambda_Q^{(N,W)}})\mathbf{U}_{Q}^{\rm H}\mathbf{E}_{\frac{q}{N}}^{\rm H}\boldsymbol{\rho}^{\rm H}.
\end{equation}
Next, we apply the triangle inequality to get
\begin{equation}
    \begin{multlined}
        \mathbb{E}\left[\left|h_{l,q,n}^{\rm ext}\left(\hat{\mathbf{h}}_{l,q}^{\rm BEM}\right)-\hat{h}_{l,q,n}^{\rm RR}\left(\mathbf{h}_{l,q}\right)\right|^2\right]\leq\\
        \underbrace{\mathbb{E}\left[\left|h_{l,q,n}^{\rm ext}\left(\hat{\mathbf{h}}_{l,q}^{\rm BEM}\right)-h_{l,q,n}^{\rm ext}\left(\mathbf{h}_{l,q}^{\rm BEM}\right)\right|^2\right]}_{\triangleq E_1}+\\
        \underbrace{\mathbb{E}\left[\left|h_{l,q,n}^{\rm ext}\left(\mathbf{h}_{l,q}^{\rm BEM}\right)-h_{l,q,n}^{\rm ext}\left(\mathbf{h}_{l,q}\right)\right|^2\right]}_{\triangleq E_2}+\\
        \underbrace{\mathbb{E}\left[\left|h_{l,q,n}^{\rm ext}\left(\mathbf{h}_{l,q}\right)-\hat{h}_{l,q,n}^{\rm RR}\left(\mathbf{h}_{l,q}\right)\right|^2\right]}_{\triangleq E_3}\:.
    \end{multlined}
\end{equation}
Since $\lim_{\sigma_w^2\to0}\mathbb{E}[\|\hat{\mathbf{h}}_{l,q}^{\rm BEM}-\mathbf{h}_{l,q}^{\rm BEM}\|^2]$ due to Assumption \ref{assum:Np}, it follows from \eqref{eq:h_lqn_ext_h} by standard MSE derivations that $\lim_{\sigma_w^2\to0}E_1=0$.
As for $E_2$, and since $\mathbf{h}_{l,q}^{\rm BEM}=\mathbf{E}_{\frac{q}{N}}\mathbf{P}^{\rm BEM}\mathbf{E}_{\frac{q}{N}}^{\rm H}\mathbf{h}_{l,q}$ per \eqref{eq:BEM2}, we have that $\mathbf{U}_{Q_{\rm BEM}}^{\rm H}\mathbf{E}_{\frac{q}{N}}^{\rm H}\mathbf{h}_{l,q}=\mathbf{U}_{Q_{\rm BEM}}^{\rm H}\mathbf{E}_{\frac{q}{N}}^{\rm H}\mathbf{h}_{l,q}^{\rm BEM}$. It follows from \eqref{eq:h_lqn_ext_h} that $h_{l,q,n}^{\rm ext}\left(\mathbf{h}_{l,q}\right)=h_{l,q,n}^{\rm ext}\left(\mathbf{h}_{l,q}^{\rm BEM}\right)$ and hence that $E_2=0$. Finally, note by referring to \eqref{eq:dpss_ch_predictor}, \eqref{eq:h_lqn_ext_h} and \eqref{eq:f_q_RR} that $\mathbf{f}_{Q}^{\rm RR}=\mathbf{f}_{Q_{\rm BEM}}^{\rm ext}$ if $Q=Q_{\rm BEM}$ leading to $E_3=0$.
Now, due to \eqref{eq:dpss_ch_predictor}, $h_{l,n}^{\rm ext}=\sum_{l=0}^{L-1}I_{l,q}h_{l,q,n}^{\rm ext}$. Moreover, due to the independence conditions from Definition \ref{def:offgrid_dD_sparsity}, $\hat{h}_{l,n}^{\rm RR}\triangleq\sum_{l=0}^{L-1}I_{l,q}\hat{h}_{l,q,n}^{\rm RR}$ is the reduced-rank MMSE estimate of $h_{l,n}$ given $\left\{\mathbf{h}_{l,q}\right\}_{q=-Q\cdots Q}$ and conditioned on a given realization of $I_{l,q}$.
Putting all these pieces together, it follows that $\lim_{\sigma_w^2\to0}|h_{l,n}^{\rm ext}-\hat{h}_{l,n}^{\rm RR}|^2$. This completes the proof.



\end{document}